\providecommand{\keywords}[1]{\textbf{\textit{Keywords---}} #1}
\renewcommand*\env@matrix[1][*\c@MaxMatrixCols c]{%
  \hskip -\arraycolsep
  \let\@ifnextchar\new@ifnextchar
  \array{#1}}
\newcommand{\argmin}[1]{\underset{#1}{\mathrm{argmin}}}
\newcommand{\prox}{{\rm prox}}
\newcommand{\rprox}{{\rm rprox}}
\newcommand{\supp}{{\rm supp}}
\newcolumntype{P}[1]{>{\centering\arraybackslash}p{#1}}
\newcommand{\Var}{\mathrm{Var}}
\theoremstyle{plain}
\newtheorem{theorem}{Theorem}[section]
\newtheorem*{theorem*}{Theorem}
\newtheorem{proposition}[theorem]{Proposition}
\theoremstyle{definition}
\theoremstyle{remark}
\title{Extracting structured dynamical systems using sparse optimization with very few samples}
\author[1]{Hayden Schaeffer\thanks{schaeffer@cmu.edu}}
\author[2]{Giang Tran\thanks{giang.tran@uwaterloo.ca}}
\author[3]{Rachel Ward\thanks{rward@math.utexas.edu}}
\author[1]{Linan Zhang\thanks{linanz@andrew.cmu.edu}}
\affil[1]{Department of Mathematical Sciences, Carnegie Mellon University, Pittsburgh, PA.}
\affil[2]{Department of Applied Mathematics, University of Waterloo, Waterloo, Ontario, Canada.}
\affil[3]{Department of Mathematics, The University of Texas at Austin, Austin, TX.}
\date{}
\begin{document}
\maketitle

\begin{abstract}
Learning governing equations allows for deeper understanding of the structure and dynamics of data. We present a random sampling method for learning structured dynamical systems from under-sampled and possibly noisy state-space measurements. The learning problem takes the form of a sparse least-squares fitting over a large set of candidate functions. Based on a Bernstein-like inequality for partly dependent random variables, we provide theoretical guarantees on the recovery rate of the sparse coefficients and the identification of the candidate functions for the corresponding problem. Computational results are demonstrated on datasets generated by the Lorenz 96 equation, the viscous Burgers' equation, and the two-component reaction-diffusion equations (which is challenging due to parameter sensitives in the model). This formulation has several advantages including ease of use, theoretical guarantees of success, and computational efficiency with respect to ambient dimension and number of candidate functions.
\end{abstract}

\keywords{High Dimensional Systems, Dynamical Systems, Sparsity, Model Selection, Exact Recovery, Cyclic Permutation, Coherence}

\section{Introduction}\label{sec:introduction}

Automated model selection is an important task for extracting useful information from observed data. One focus is to create methods and algorithms which allow for the data-based identification of governing equations that can then be used for more detailed analysis of the underlying structure and dynamics. The overall goal is to develop computational tools for reverse engineering equations from data. Automated model selection has several advantages over manual processing, since algorithmic approaches allow for the inclusion of a richer set of potential candidate functions, which thus allow for complex models and can be used to process and fit larger data sets. However, several factors restrict the computational efficiency; for example, as the dimension of the variables grows, the size of the set of possible candidate functions grows rapidly. In this work, we present some computational strategies for extracting governing equation when additional structural information of the data is known.

Data-driven methods for model selection have several recent advances. The authors of \cite{bongard2007automated, schmidt2009distilling} developed an approach for extracting physical laws (\textit{i.e.} equations of motion, energy, \textit{etc}.) from experimental data. The method uses a symbolic regression algorithm to fit the derivatives of the time-series data to the derivatives of candidate functions while taking into account accuracy versus simplicity. In \cite{brunton2016discovering}, the authors proposed a sparse approximation approach for selecting governing equations from data. One of the key ideas in \cite{brunton2016discovering} is to use a fixed (but large and possibly redundant) set of candidate functions in order to write the model selection problem as a linear system. The sparse solutions of this system are those that are likely to balance simplicity of the model while preserving accuracy. To find a sparse approximation, \cite{brunton2016discovering} uses a sequential least-squares thresholding algorithm which includes a thresholding sub-step (sparsification) and a least-squares sub-problem (fitting). Several sparsity-based methods were developed in order to solve the model selection problem. The authors of \cite{tran2017exact} proposed an optimization problem for extracting the governing equation from chaotic data with highly corrupted segments (of unknown location and length). Their approach uses the $\ell^{2,1}$-norm (often referred to as the group sparse or joint sparse penalty), in order to detect the location of the corruption, coupling each of the state-variables. In \cite{tran2017exact}, it was proven that, for chaotic systems, the solution to the optimization problem will locate the noise-free regions and thus extract the correct governing system.  In \cite{schaeffer2017learning}, the authors used a dictionary of partial derivatives along with a LASSO-based approach \cite{tibshirani1996regression} to extract the partial differential equation that governs some (possibly noisy) spatiotemporal dataset. An adaptive ridge-regression version of the method from \cite{brunton2016discovering} was proposed in \cite{rudy2017data} and applied to fit PDE to spatiotemporal data. One computational issue that arrises in these approaches is that noise on the state-variables are amplified by numerical differentiation. To lessen the effects of noise,  sparse learning can be applied to the integral formulation of the differential equation along with an integrated candidate set as done in \cite{schaeffer2017sparse}. The exact recovery of the governing equation can be guaranteed when there is sufficient randomness in the data, even in the under-sampling limit. In \cite{schaeffer2017extracting}, using random initial conditions, an $\ell^1$-penalized optimization problem was shown to recover the underlying differential equation with high probability, and several sampling strategies were discussed.  In order to allow for variations in the coefficients, a group-sparse recovery model was proposed in \cite{SchaefferTranWard17sep}. Using information criteria, \cite{mangan2017model} proposed a method to choose the ``optimal'' model learned from the algorithm in \cite{brunton2016discovering} as one varies the thresholding parameter.

There have been several recent methods using general sparse approximations techniques for learning governing dynamical systems, including: SINDy with control \cite{BruntonProctorKutz16IFAC}, the SINO method \cite{SorokinaSygletosTuritsyn16}, an extension of SINDy to stochastic dynamics \cite{BoninsegnaNuskeClementi17}, sparse identification of a predator-prey system \cite{DamBronsRasmussenNaulinHesthaven17}, SINDy with rational candidate functions \cite{ManganBruntonProctoKutz16}, rapid-SINDy  \cite{ QuadeAbelKutzBrunton18}, the unified sparse dynamics learning (USDL) algorithm which uses a weak formulation with the orthogonal matching pursuit (OMP) algorithm \cite{PantazisTsamardinos17}. Sparsity inducing and/or data-driven algorithms have been applied to other problems in scientific computing, including: sparse spectral methods for evolution equations \cite{schaeffer2013sparsepde,mackey2014compressive}, sparse penalties for obstacle problems \cite{tran20151}, sparse-low energy decomposition for conversation laws \cite{hou2015sparse}, sparse spatial approximations for PDE \cite{caflisch2015pdes}, sparse weighted interpolation of functions \cite{rauhut2016interpolation}, leveraging optimization algorithms in nonlinear PDE   \cite{ schaeffer2016accelerated}, sparse approximation for polynomial chaos \cite{peng2016polynomial}, high-dimensional function approximation using sparsity in lower sets \cite{adcock2017polynomial}, learning PDE through convolutional neural nets \cite{LongLuMaDong17}, modeling dynamics through Gaussian processes \cite{raissi2017numerical,RaissiKarniadakis18}, and constrained Galerkin methods \cite{LoiseauBrunton18}.

\subsection{Contributions of this Work.}

We present an approach for recovering governing equations from under-sampled measurements using the burst sampling methodology in \cite{schaeffer2017extracting}. In this work, we show that if the components of the governing equations are similar, \textit{i.e.} if each of the equations contains the same active terms (after permutation of the indices), then one can recover the coefficients and identify the active basis terms using fewer random samples than required in \cite{schaeffer2017extracting}. The problem statement and construction of the permuted data and dictionaries are detailed in Section \ref{section:problem}. In essence, after permutation, the dictionary matrix is still sufficiently random and maintains the necessary properties for exact and stable sparse recovery. Theoretical guarantees on the recovery rate of the coefficients and identification of the candidate functions (the support set of the coefficients) are provided in Section \ref{sec:theory}. The proofs rely on a Bernstein-like inequality for partially dependent measurements. The algorithm uses the Douglas-Rachford iteration to solve the $\ell^1$ penalized least-squares problem. In Section \ref{sec:numerics}, the algorithm and the data processing are explained\footnote{The code is available on \url{https://github.com/linanzhang/SparseCyclicRecovery}.}. In Section \ref{sec:computing}, several numerical results are presented, including learning the Lorenz 96 system, the viscous Burgers' equation, and a two-component reaction-diffusion system. These examples include third-order monomials, which extend the computational results of \cite{schaeffer2017extracting}. These problems are challenging due to their sensitivities of the dynamics to parameters, for example, shock locations or the structure of patterns. Our approach is able to recover the dynamics with high probability, and in some cases, we can recover the governing dynamical system from one-sample!

\section{Problem Statement}\label{section:problem}

Consider an evolution equation $\dot{u}=f(u)$, where $u(t)\in \mathbb{R}^n$ and the initial data is $u(t_0)=u_0$.  Assume that $f$ is a polynomial vector-valued equation in $u$. The evolution equation can be written component-wise as:
\begin{empheq}[left=\empheqlbrace]{align*}
\dot{u}_1 &= f_1(u_1,\ldots, u_n)\\
\dot{u}_2 &= f_2(u_1,\ldots, u_n)\\
&\ \ \vdots \\
\dot{u}_n&= f_n(u_1,\ldots, u_n).
\end{empheq}
From limited measurements on the state-space $u$, the objective is to extract the underlying model $f$. In \cite{schaeffer2017extracting}, this problem was investigated for general (sparse) polynomials $f$ using several random sampling strategies, including a burst construction. The data is assumed to be a collection of $K$-bursts, \textit{i.e.} a short-time trajectory:
\begin{align*}
\{ u(t_1; k), u(t_2; k), \ldots, u(t_{m-1}; k)\},
\end{align*}
associated with some initial data $u(t_0; k)$, where $u(\cdot\,;k)$ denotes the $k$-th burst, $1\leq k \leq K$. In addition, we assume that the time derivative associated with each of the measurements in a burst, denoted by $\{\dot{u}(t_0;k), \dot{u}(t_1;k), \dot{u}(t_2;k), \ldots, \dot{u}(t_{m-1};k)\}$, can be accurately approximated. Define the matrix $M$ as the collection of all monomials (stored column-wise), 
\begin{equation*}
M(k) = \begin{bmatrix} & M^{(0)}(k) &  | &  M^{(1)}(k) & | &  M^{(2)}(k) & | &  \cdots & \end{bmatrix},\label{eqn:dictionaryburstmonomial0}
\end{equation*}
where the sub-matrices are the collections of the constant, linear, quadratic terms and so on,
\begin{equation*}
M^{(0)}(k) =\begin{pmatrix} 1 \\ 1 \\ 1 \\ \vdots \\ 1 \end{pmatrix} \in\mathbb{R}^m,
\end{equation*} 
\begin{equation*}
M^{(1)}(k) =\begin{pmatrix}
 u_1(t_0;k) & u_2(t_0;k) & \cdots & u_n(t_0;k) \\
 u_1(t_1;k) & u_2(t_1;k) & \cdots & u_n(t_1;k)  \\
 u_1(t_2;k) & u_2(t_2;k) & \cdots & u_n(t_2;k)  \\
 \vdots & \vdots & \ddots & \vdots \\
 u_1(t_{m-1};k) & u_2(t_{m-1};k) & \cdots & u_n(t_{m-1};k)  
\end{pmatrix},
\end{equation*} 
and
\begin{equation*}
M^{(2)}(k) =\begin{pmatrix}
 u^2_1(t_0;k) & u_1(t_0;k)u_2(t_0;k) & \cdots &u^2_n(t_0;k)  \\
 u^2_1(t_1;k) & u_1(t_1;k)u_2(t_1;k) & \cdots &u^2_n(t_1;k)   \\
 u^2_1(t_2;k) & u_1(t_2;k)u_2(t_2;k) & \cdots &u^2_n(t_2;k)  \\
 \vdots & \vdots & \ddots & \vdots \\
u^2_1(t_{m-1};k) & u_1(t_{m-1};k)u_2(t_{m-1};k) & \cdots & u^2_n(t_{m-1};k)\end{pmatrix}.
\end{equation*}
Define the velocity matrix at each of the corresponding measurements: 
\begin{equation*}
v(k)=\begin{pmatrix}
\dot{u}_1(t_0;k)& \dot{u}_2(t_0;k) & \cdots &\dot{u}_n(t_0;k)\\
\dot{u}_1(t_1;k)& \dot{u}_2(t_1;k) & \cdots &\dot{u}_n(t_1;k)\\
\dot{u}_1(t_2;k)& \dot{u}_2(t_2;k) & \cdots &\dot{u}_n(t_2;k)\\
\vdots & \vdots & \ddots & \vdots \\
\dot{u}_1(t_{m-1};k)& \dot{u}_2(t_{m-1};k) & \cdots & \dot{u}_n(t_{m-1};k)
\end{pmatrix} .
\end{equation*}
Using this data, one would like to extract the coefficients for each of the components of $f$, \textit{i.e.} $f_j$, which is denoted by a column vector $c_j$. The collection of coefficients can be defined as:
\begin{equation}
C= \begin{pmatrix}
| & | &  & | \\ c_1 & c_2 & \cdots & c_n \\  | & | &  & |
\end{pmatrix}.
\label{eq:coefficientstotal}
\end{equation} 
Therefore, the problem of identifying the sparse polynomial coefficients associated with the model $f$ is equivalent to finding a sparse matrix $C$ such that $v(k) = M(k) C$ for all bursts $k$.

In \cite{schaeffer2017extracting}, it was shown that finding a sparse matrix $C$ from $v(k) = M(k) C$ with randomly sampled initial data was achievable using an $\ell^1$ optimization model. In particular, with probability $1-\epsilon$,  $C$ can be recovered exactly from limited samples as long as the number of samples $K$ satisfies $K\sim s \log(N)\log(\epsilon^{-1})$, where $s$ is the maximum sparsity level among the $n$ columns of $C$ and $N$ is the number of basis functions. In this work, using a coherence bound, we get a \textbf{sampling rate that scales like $n^{-1}\, \log(n)$ when the governing equation has a structural condition} relating each of the components of the model $f$.

\subsection{A Cyclic Condition}

When structural conditions on $f$ and $u$ can be assumed \textit{a priori}, one expects the number of initial samples needed for exact recovery to decrease. One common assumption is that the components of the model, $f_j$, are cyclic (or index-invariant), \textit{i.e.} for all $1\leq i,j \leq n$, we have:
\begin{equation*}
f_j(u_1,u_{2}, \ldots, u_{n})=f_i(u_{j-i+1},u_{j-i+2}, \ldots,u_n,u_1,\ldots, u_{j-i+n}),
\end{equation*} 
where $u_{n+q} = u_q$ and $u_{-q} = u_{n-q}$, for all $0\leq q\leq (n-1)$. In particular, all components $f_j$ can be obtained by determining just one component, say $f_1$, since:
\begin{equation*}
f_j(u_1,u_{2}, \ldots, u_{n}) = f_1(u_{j},u_{j+1}, \ldots,u_n,u_1,\ldots, u_{j-1+n}).
\end{equation*}
The goal is to determine $f$ (by learning $f_1$), given observations of $u$ and an (accurate) approximation of $\dot{u}$.

The physical meaning behind the cyclic condition relates to the invariance of a model to location/position. For example, the Lorenz 96 system in $n>3$ dimensions is given by \cite{lorenz1996predictability}:
\begin{equation*}
\dot{u}_j = -u_{j-2}\, u_{j-1} + u_{j-1} \, u_{j+1} - u_j + F, \quad j=1,2,\ldots,n,
\label{eqn:lorenz96}
\end{equation*}
for some constant $F$ (independent of $j$) and with periodic conditions, $u_{-1}=u_{n-1}$, $u_0=u_n$, and $u_{n+1}=u_1$. Each component of $f$ follows the same structure, and is invariant to the index $j$.
This is also the case with spatiotemporal dynamics which are not directly dependent on the space variable. For a simple example, consider the standard discretization of the heat equation in one spatial dimensional periodic domain:
\begin{equation*}
\dot{u}_j  =\frac{1}{h^2} \left(u_{j-1}-2u_{j}+u_{j+1}\right),
\end{equation*}
with grid size $h>0$ and periodic conditions, $u_0=u_n$ and $u_{n+1}=u_1$. The system is invariant to spatial translation, and thus satisfies the cyclic condition.

Extending the results from \cite{schaeffer2017extracting}, we show that recovering the governing equations from only one under-sampled measurement is tractable when the model $f$ has this cyclic structure. This is possible since one measurement of $u(t)$ will provide us with $n$-pieces of information for $f_1$ (and thus the entire model $f$). Under this assumption, the problem of determining the coefficient matrix $C$, defined by Equation~\eqref{eq:coefficientstotal}, reduces to the problem of determining the first column of $C$, \textit{i.e.} $c_1$. For simplicity, we can drop the subscript and look for a coefficient vector $c\in \mathbb{R}^N$ ($N$ is the number of candidate functions) that fits the dynamic data.

The construction of the optimization problem and computational method are detailed in the subsequent sections. To summarize, we define what cyclic permutations are and explain how to build the associated dictionary matrix. This construction is detailed for the one spatial dimensional case, since general spatial dimensions follow from a vectorization of the $n$-dimensional problem.  Following the second strategy in \cite{schaeffer2017extracting}, a subset of the domain is considered (via localization and restriction of the dictionary terms), which leads to a smaller, but still underdetermined, problem.  The dictionary is transformed to the tensorized Legendre basis in order to guarantee an incoherence principle on the system. Lastly, the coefficients of the governing equations are learned via an $\ell^1$ penalized basis pursuit problem with an inexact (noise-robust) constraint.

\subsection{Cyclic Permutations}

The (multi-sample) data matrix is computed using a set of cyclic permutations from very few samples. The set of cyclic permutations, $\mathcal{C}_n$, is a subset of all permutations of $[n]:=\{ 0, 1, \ldots, n-1\}$,  whose elements are shifted by a fixed amount. There are $n$ cyclic permutations out of the $n!$ possible permutations of $[n]$. In addition, the corresponding $n\times n$ permutation matrices of a cyclic permutation are all circulant.  For example, $\mathcal{C}_3$ contains three permutations of the set $\{0 ,1, 2\}$ (out of a total of six possible permutations), \textit{i.e.} $\{0, 1, 2\}$, $\{1, 2, 0\}$, and $\{2, 0, 1\}$ whose permutation matrices are:
\begin{equation*}
P_1 =\begin{pmatrix}
1 & 0 & 0 \\ 0 & 1 & 0\\ 0 & 0& 1 \\
\end{pmatrix},
\hspace{1cm}
P_2 =\begin{pmatrix}
0 & 1 & 0 \\ 0 & 0 & 1\\ 1 & 0& 0 \\
\end{pmatrix},
\hspace{1cm}
P_3 =\begin{pmatrix}
0 & 0 & 1 \\ 1 & 0 & 0\\ 0 & 1& 0 \\
\end{pmatrix}.
\end{equation*}

The importance of the cyclic permutations is that they preserve the natural ordering between elements, since the left and right neighbors of any element are fixed (with periodic conditions at the first and last elements).

\subsection{Dictionary for One Spatial Dimension}\label{section:oned}

Consider the sequence (indexed by $k$) of discrete measurements $\{ u(t_1;k), \, u(t_2;k), \ldots, \, u(t_{m-1};k)\}$, obtained through either $k$-simulations or $k$-observations. Assume that the data is a discretization of a system with one spatial variable $u(t;k) \in \mathbb{R}^n$ for $t\in\mathbb{R}$ and $k\in\mathbb{N}$. For general spatial dimensions, the variables are multidimensional arrays. In particular, after discretization and vectorization, the spatially dependent function $u(t,x)$ is converted to a 1D array (vector), $u(t,x,y)$ is converted to a 2D array (matrix), and $u(t,x,y,z)$ is converted to a 3D array, {\it etc}.. As in \cite{schaeffer2017extracting}, the total number of temporal samples, denoted by $m$, is small, and thus we refer to the short-time trajectory as a burst. Each of the bursts is initialized by data sampled from a random distribution.

Given one measurement  $u(t_0;k) \in \mathbb{R}^n$, we obtain ``multiple" measurements by considering the collection of all cyclic permutations of the data vector $u(t_0;k)$. In particular, we can construct the $n$-measurement matrix,
\begin{equation}
U(t_0;k)=\begin{pmatrix}[cccc]
 u_1(t_0;k) & u_2(t_0;k) & \cdots & u_n(t_0;k) \\
u_2(t_0;k) & u_3(t_0;k) & \cdots & u_1(t_0;k) \\
 u_3(t_0;k) & u_4(t_0;k) & \cdots & u_2(t_0;k)\\
\vdots & \vdots & \ddots & \vdots \\
 u_n(t_0;k) & u_1(t_0;k) & \cdots & u_{n-1}(t_0;k)\\
\end{pmatrix}.
\label{eqn:datamatrix}
\end{equation}
To build the dictionary matrix, we collect all monomials of $U$. The quadratic matrix, denoted by $U^2$, is defined as:
\begin{equation}
U^2(t_0;k)= \begin{pmatrix}[cccc]
 u^2_1(t_0;k) & u_1(t_0;k)u_2(t_0;k) & \cdots &u^2_n(t_0;k)\\
 u^2_2(t_0;k) & u_2(t_0;k)u_3(t_0;k) & \cdots &u^2_1(t_0;k)\\
 u^2_3(t_0;k) & u_3(t_0;k)u_4(t_0;k) & \cdots &u^2_2(t_0;k)\\
\vdots & \vdots & \ddots & \vdots \\
 u^2_n(t_0;k) & u_n(t_0;k)u_1(t_0;k) & \cdots &u^2_{n-1}(t_0;k)\\
\end{pmatrix},
\label{eqn:quadmatrix}
\end{equation}
and the cubic matrix, denoted by $U^3$, is defined as:
\begin{equation}
U^3(t_0;k)= \begin{pmatrix}[cccccc]
 u^3_1(t_0;k) & u^2_1(t_0;k)u_2(t_0;k) & \cdots & u_1(t_0;k)u_2(t_0;k)u_3(t_0;k) & \cdots &u^3_n(t_0;k)\\
 u^3_2(t_0;k) & u^2_2(t_0;k)u_3(t_0;k) & \cdots & u_2(t_0;k)u_3(t_0;k)u_4(t_0;k) & \cdots &u^3_{1}(t_0;k)\\
  u^3_3(t_0;k) & u^2_3(t_0;k)u_4(t_0;k) & \cdots & u_3(t_0;k)u_4(t_0;k)u_5(t_0;k) & \cdots &u^3_2(t_0;k)\\
\vdots & \vdots & \ddots & \vdots & \ddots & \vdots \\
  u^3_n(t_0;k) & u^2_n(t_0;k)u_1(t_0;k) & \cdots & u_{n}(t_0;k)u_{1}(t_0;k)u_{2}(t_0;k) & \cdots &u^3_{n-1}(t_0;k)\\
\end{pmatrix}.
\label{eqn:cubicmatrix}
\end{equation} 
The process continues this way for any higher-order monomial term. The $n \times N$  dictionary matrix (where $N={n+p \choose p}$ is the number of monomials of degree at most $p$) is given by:
\begin{equation}
A(t_0; k) = \begin{bmatrix} & 1 & | & U(t_0;k) & | & U^2(t_0;k) & | & U^3(t_0;k) & | &  \cdots & \end{bmatrix},
\label{eqn:dictionaryburstmonomial}
\end{equation} 
where one augments the matrix from the right by the additional monomial terms. For simplicity, we will consider the cubic case for all examples and results. Note that when $n=150$, the number of candidate functions $N$ exceeds half a million for the cubic case and over 22 million for the quartic case.

The velocity for the $k$th burst is given by:
\begin{equation*}
V(t_0;k)=\begin{pmatrix}
\dot{u}_1(t_0;k)\\
\dot{u}_2(t_0;k)\\
\dot{u}_3(t_0;k)\\
 \vdots    \\
\dot{u}_n(t_0;k)
\end{pmatrix} .
\end{equation*} 
Let $c$ be the vector of coefficients, $c=\begin{pmatrix}c_1, c_2, \ldots, c_N\end{pmatrix}^T$. If we use multiple bursts, say $k$ from $1,\ldots, K$ and/or multiple snapshots (\textit{i.e.} $m>1$), then we concatenate the matrices and vectors row-wise as follows,  
\begin{equation}
V = \begin{pmatrix}
V(t_0;1) \\ V(t_1;1) \\ \vdots    \\ V(t_{m-1};1) \\
V(t_0;2) \\ V(t_1;2) \\ \vdots    \\ V(t_{m-1};2) \\
\vdots    \\
V(t_0;K) \\ V(t_1;K) \\ \vdots    \\ V(t_{m-1};K) \\
\end{pmatrix}
\hspace{0.3in} \text{and} \hspace{0.3in}
A = \begin{pmatrix}
A(t_0;1) \\ A(t_1;1) \\ \vdots    \\ A(t_{m-1};1) \\
A(t_0;2) \\ A(t_1;2) \\ \vdots    \\ A(t_{m-1};2) \\
\vdots    \\
A(t_0;K) \\ A(t_1;K) \\ \vdots \\ A(t_{m-1};K) \\
\end{pmatrix}.\label{eqn:dictionaryvelocity}
\end{equation} 
Thus, the linear inverse problem is to find $c$ such that $V = A c$. The size of the dictionary matrix $A$ is $mnK \times N$. Therefore, for small $K$ and $m$, this problem will be underdetermined.

We assume that the number of samples is very small. Thus, the initial data provides a large portion of the information in $A$. The burst is (mainly) used to obtain an approximation to the velocity, so that $V$ is relatively accurate.

\subsection{Dictionary for Higher Spatial Dimension}\label{section:higher}

To generalize to higher spatial dimension, the cyclic permutations must be defined for multiple indices. 
Given the vectorization of a two-dimensional array, $w = \text{vec}(W)$, where $W = [W_{i,j}]$ for $1\leq i,j \leq n$, we must permute $w$ with respect to cyclic permutations of the two-dimensional array $W$. A permutation of an array preserves the cyclic structural condition if it is a cyclic permutation of both the rows and the columns. In particular, a cyclic permutation of $W \in \mathbb{R}^{n \times n}$ is equivalent to sending each element $W_{i,j}$ to $W_{\gamma(i),\tau(j)}$ for $\gamma, \tau \in \mathcal{C}_n$. In order to combine the $n^2$-permuted arrays, each permutation is vectorized and stored row-wise:
\begin{equation}
U(t_0;k)= [ \text{vec}(u_{\gamma(i),\tau(j)})(t_0;k)].
\label{eqn:2ddatamatrix}
\end{equation}
As an example, consider $u(t_0;k) \in \mathbb{R}^{3 \times 3}$, where:
\begin{equation*}
u(t_0;k)=\begin{pmatrix}[ccc]
 u_{1,1}(t_0;k) & u_{1,2}(t_0;k) &  u_{1,3}(t_0;k) \\
u_{2,1}(t_0;k) & u_{2,2}(t_0;k) &  u_{2,3}(t_0;k) \\
 u_{3,1}(t_0;k) & u_{3,2}(t_0;k) &  u_{3,3}(t_0;k)\\
\end{pmatrix}.
\label{eqn:2ddat}
\end{equation*}
One cyclic permutation of $u(t_0;k)$ is to take rows $\{ 1,2,3\}$ to $\{ 2,3,1\}$  and columns $\{ 1,2,3\}$ to $\{ 3,1,2\}$,
\begin{equation*}
\widetilde{u}(t_0;k)=\begin{pmatrix}[ccc]
 u_{2,3}(t_0;k) & u_{2,1}(t_0;k) &  u_{2,2}(t_0;k) \\
u_{3,3}(t_0;k) & u_{3,1}(t_0;k) &  u_{3,2}(t_0;k) \\
 u_{1,3}(t_0;k) & u_{1,1}(t_0;k) &  u_{1,2}(t_0;k)\\
\end{pmatrix}.
\label{eqn:permuted2ddat}
\end{equation*}

This construction has the additional benefit of not repeating elements in each row or column. The higher-order monomials and the corresponding dictionary and velocity matrices ($A$ and $V$, respectively, defined by Equation~\eqref{eqn:dictionaryvelocity}) are built as before using Equation~\eqref{eqn:2ddatamatrix} as the input data. As we increase the spatial dimension, the number of candidate functions grows; for example for  $n=15$, the number of cubic candidate functions in two spatial dimensions is nearly two million!

For a general spatial dimension $n$, the process above is repeated, where one constructs all permutations of the $n$-dimensional array $u(t_0;k)$ by applying cyclic permutations to each of the coordinates separately. Each of the permuted $n$-dimensional arrays are vectorized and collected (row-wise) as is done in Equation~\eqref{eqn:2ddatamatrix}. The dictionary and velocity matrices are constructed as above.

\subsection{Restriction of the Data and Localization of the Dictionary Matrix}\label{section:local}

The dictionary matrix construction in the previous sections relies on the cyclic permutation of the input. One may restrict the learning algorithm to a subset of the data and also localize the basis to a patch in the domain. This is advantageous, for example, when only a subset of the data is known to be accurate enough to approximate the velocity or when the initial data is only sufficiently random in a subset of the domain.

The process of restricting the data and localizing the basis are technically distinct. The restriction to a subdomain will always be slightly larger than the localization of the basis terms. To illustrate, consider the one-dimensional system with $n>9$ points. We localize the basis by assuming that the equation for the $j$th component, say $u_j$, only depends on monomial terms $u_i$ for $i\in[j-2,j+2]$. Therefore, the data matrix $U(t_0;k)$ defined by Equation~\eqref{eqn:datamatrix} becomes a five point approximation:
\begin{equation*}
U(t_0;k)|_{5-pnts}=\begin{pmatrix}[ccccc]
u_1(t_0;k) & u_2(t_0;k) & u_3(t_0;k) & u_{n-1}(t_0;k) & u_n(t_0;k)\\
u_2(t_0;k) & u_3(t_0;k) & u_4(t_0;k)  & u_{n}(t_0;k) & u_1(t_0;k)\\
u_3(t_0;k) & u_4(t_0;k) & u_5(t_0;k)  & u_{1}(t_0;k) & u_2(t_0;k)\\
& &  \vdots &  \\
u_{n-1}(t_0;k) & u_n(t_0;k) & u_1(t_0;k)  & u_{n-3}(t_0;k) & u_{n-2}(t_0;k)\\
u_n(t_0;k) & u_1(t_0;k) & u_2(t_0;k)  & u_{n-2}(t_0;k) & u_{n-1}(t_0;k)\\
\end{pmatrix}.
\label{eqn:datamatrixlocal}
\end{equation*}
Note that $U(t_0;k)|_{5-pnts}$ is of size $n\times 5$. The first two and last two rows assume that the data is periodic, since information crosses the boundary between indices $n$ and $1$. Next, the restriction of the data onto a subdomain is done by removing all rows that include points outside of the subdomain. For example, the restriction onto the subdomain indexed by $\{3,4,5,6,7\}$ yields:
\begin{equation}
U(t_0;k)|_{5-pnts, restricted}=\begin{pmatrix}[ccccc]
u_3(t_0;k) & u_4(t_0;k) & u_5(t_0;k)  & u_{1}(t_0;k) & u_2(t_0;k)\\
u_4(t_0;k) & u_5(t_0;k) & u_6(t_0;k)  & u_{2}(t_0;k) & u_3(t_0;k)\\
u_5(t_0;k) & u_6(t_0;k) & u_7(t_0;k)  & u_{3}(t_0;k) & u_4(t_0;k)\\
u_6(t_0;k) & u_7(t_0;k) & u_8(t_0;k)  & u_{4}(t_0;k) & u_5(t_0;k)\\
u_7(t_0;k) & u_8(t_0;k) & u_9(t_0;k)  & u_{5}(t_0;k) & u_6(t_0;k)\\
\end{pmatrix},
\label{eqn:datamatrixres}
\end{equation}
which reduces the matrix to size $5\times 5$ -- the loss of additional rows are required so that all cyclic permutations remain within the domain. It is important to note that the localized and restricted data matrix \textit{no longer requires periodic data as long we are sufficiently away from the boundary}. The localized and restricted dictionary matrix is built by repeating the process in Equations~\eqref{eqn:quadmatrix}-\eqref{eqn:dictionaryburstmonomial}, but using the localized and restricted data matrix described above (see Equation~\eqref{eqn:datamatrixres}).

Localizing the dictionary elements provide additional benefits. For many dynamical systems, information at a particular spatial point (or index) only interacts with information at its neighboring points (for example, all neighbors within a prescribed distance). Thus, localization may remove unnecessary complexities in the dictionary. The second is that the number of unknowns is severely reduced when considering a subset of the candidate functions. This was observed in \cite{schaeffer2017extracting} where localization reduced the inverse problem to a smaller (but still under-sampled) system and makes the sampling rate nearly independent of the ambient dimension $n$. Lastly, the accuracy of the approximation to the time derivative controls the error bound in our recovery problem. Thus, if the dynamics are only accurate in a small region, it is better to restrict the learning to that region. More data is usually beneficial; however, adding noisy and inaccurate measurements does not increase the likelihood of recovering the correct governing model.

\subsection{Bounded Orthogonal Dictionary}\label{section:bod}

The recovery of the coefficient vector $c$ from data $V$ is better conditioned with respect to a dictionary built from bounded orthogonal terms. For simplicity, we will detail this construction for data $u\in \mathbb{R}^n$, \textit{i.e.} one spatial dimension with $n$-nodes. Consider a subset of the domain, ${\cal D} \subset \mathbb{R}^n$, endowed with a probability measure $\mu$.  Suppose that $\{ \phi_1, \phi_2, \dots, \phi_N \}$ is a (possibly complex-valued) orthonormal system  on ${\cal D}$,
\begin{align*}
\int_{{\cal D}} \phi_j(u) \overline{ \phi_k(u)} d\mu(u) = \delta_{j,k} &=  \left\{ \begin{array}{ll} 0 & \text{ if}\ j \neq k \\ 1 & \text{ if}\ j=k \\ \end{array} \right\}.
\end{align*}
The collection $\{ \phi_1, \phi_2, \dots, \phi_N \}$ is called a \emph{bounded orthonormal system} with constant $K_b \geq 1$ if:
\begin{equation}
\label{eq:BOS}
\| \phi_j \|_{\infty} := \sup_{u \in {\cal D}} | \phi_j(u) | \leq K_b \quad \text{for all}\ j \in [N].
\end{equation}
Suppose that $u^{(1)}, u^{(2)}, \dots, u^{(m)} \in {\cal D}$ are sampling points which are drawn i.i.d. with respect to the orthogonalization measure $\mu$, and consider the sampling matrix: 
\begin{align*}
A_{\ell,k} = \phi_k(u^{(\ell)}), \quad \ell \in [m],\ k \in [N].
\end{align*}
An important example of a bounded orthonormal system is the Legendre polynomials. In high dimensional systems, we will use the tensorized Legendre basis in place of their corresponding monomials. We denote $A_L$ the dictionary matrix corresponding to the tensorized Legendre basis. For example, if we consider the initial data samples $u(t_0)$ drawn i.i.d. from the uniform distribution $[-1,1]^n$, then the Legendre polynomials (orthogonalization with respect to $\dd{\mu}=\frac{1}{2} \dd{x}$) up to degree three are:
\begin{align*}1,\ \ \sqrt{3} u_i, \ \ \frac{\sqrt{5}}{2} (3 u_i^2-1), \ \ 3u_i u_j, \ \ \frac{\sqrt{7}}{2} (5u_i^3-3u_i), \  \ \frac{\sqrt{15}}{2} (3u_i^2-1)u_j,  \ \ \sqrt{27} u_i u_j u_k.
\end{align*}

If a function is $s$-sparse with respect to the standard quadratic basis, it will be $(s+1)$-sparse with respect to the Legendre basis, since the quadratic Legendre term, $\frac{\sqrt{5}}{2} (3 u_i^2-1)$, can add at most a constant to the representation. If a function is $s$-sparse with respect to the standard cubic basis, it will be $(2s)$-sparse with respect to the Legendre basis, since the terms  $\frac{\sqrt{7}}{2} (5u_i^3-3u_i)$ and $\frac{\sqrt{15}}{2} (3u_i^2-1)u_j$ each add an additional $s$ terms (in the worst-case scenario). We assume that $s$ is sufficiently small so that, for example, a $(2s)$-sparse system is still relatively sparse.

For the examples presented here, we focus on dynamical systems with (at most) cubic nonlinearity. The procedure above is not limited to this case. In fact, generalizing this construction to systems which are sparse with respect to alternative bounded orthogonal system is fairly direct. With high probability, a random matrix formed from bounded orthogonal terms will lead to a well-conditioned inverse problem $V= A_Lc$ if $c$ is sufficiently sparse (see  \cite{candes2011probabilistic, foucart2013mathematical}).

\section{Sparse Optimization and Recovery Guarantee}\label{sec:theory}

Let $A_L$ be the dictionary in the Legendre basis up to third order. The size of $A_L$ is $mnK \times N$, where $N$ is the number of basis terms. The linear system $V= A_Lc$ is underdetermined if we assume that $m$ and $K$ are small and fixed. 
To ``invert" this system, we impose that the vector of coefficients $c$ is sparse, \textit{i.e.,} $c$ has only a few non-zero elements. This can be written formally as a non-convex optimization problem:
\begin{equation*}
\min_c \ ||c||_{0}  \ \ \  \text{subject to} \ \ A_Lc=V,
\end{equation*}
where $||c||_0=\text{card}(\supp(c))$ is the $\ell^0$ penalty which measures the number of non-zero elements of $c$. In practice, the constraint is not exact since $V$ is computed and contains some errors. The noise-robust problem is:
\begin{equation*}
\min_c \ ||c||_{0}  \ \ \  \text{subject to} \  ||A_Lc-V||_2 \leq \sigma,
\end{equation*}
where $\sigma>0$ is a noise parameter determined by the user or approximated from the data. The general noise-robust $\ell^0$ problem is known to be NP hard \cite{foucart2013mathematical}, and is thus relaxed to the $\ell^1$ regularized, noise-robust problem:
\begin{equation}
\min_{c' } \ ||c'||_{1}  \ \ \  \text{subject to} \ \ \|A_Lc'-V\|_2 \leq \sigma, \tag{L-BP$_\sigma$}
 \label{eq:lbp}
 \end{equation}
which we refer to as the Legendre basis pursuit (L-BP) (for a general matrix $A$ this is known as the $\ell^1$ basis pursuit). Note that $c'$ is the coefficients in the Legendre basis and $c$ is the coefficients in the standard monomial basis. If the system is sufficiently sparse with respect to the standard monomial basis representation, then it will be sparse with respect to the Legendre basis representation, and thus the formulation is consistent. The parameter $\sigma$ is independent of the basis we use in the dictionary matrix. In the ideal case, $\sigma$ is the $\ell^2$ error between the computed velocity and true velocity. In practice, it must be estimated from the trajectories.

\subsection{Recovery Guarantee and Error Bounds}\label{sec:guarantee}

To guarantee the recovery of sparse solution to the underdetermined linear inverse problem, we use several results from random matrix theory. In general, it is difficult to recover $c\in \mathbb{R}^N$ from $Ac=V$, when $V\in\mathbb{R}^{\widetilde{m}}$, $A \in \mathbb{R}^{\widetilde{m} \times N}$, and  $\widetilde{m} \ll N$. In our setting, we know that the system is well-approximated by an $s$-term polynomial (for small $s$), and thus the size of the support set of $c$ is relatively small. However, the locations of the nonzero elements (the indices of the support set) are unknown. If the matrix $A$ is incoherent and $\widetilde{m}\sim s \log(N)$, then the recovery of the sparse vector $c$ from the $\ell^1$ basis pursuit problem is possible. In particular, by leveraging the sparsity of the solution $c$ and the structure of $A$, compressive sensing is able to overcome the curse of dimensionality by requiring far fewer samples than the ambient dimension of the problem. This approach also yields tractable methods for high-dimensional problems.

We provide a theoretical result on the exact and stable recovery of high-dimensional orthogonal polynomial systems with the cyclic condition via a probabilistic bound on the coherence of the dictionary matrix.  

\begin{theorem}\label{thrm:31}
If $A_{j_1}$ and $A_{j_2}$ are two columns from the cyclic Legendre sampling matrix of order $p$ generated by a vector $u\in \mathbb{R}^n$ with i.i.d. uniformly distributed entries in $[-1,1]$ and $2p^2 \leq n$, then with probability exceeding $1 - \left(\frac{e}{p} + \frac{e}{2p^2}\right)^{2p} \, n^{-2p/11}$, the following holds:
\begin{enumerate}
  \item $| \left< A_{j_1},  A_{j_2} \right>| \leq 12 \,p^3 \,3^{p} \, \sqrt{n\,  \log n}$ for all $j_1 \neq j_2 \in \{1,2, \dots, n\}$,
  \item  $| \| A_{j_1} \|^2  - n | \leq 12\, p^3 \,3^{p}\,  \sqrt{n\, \log n}$ for all $j_1 \in \{1,2, \dots, n\}$. 
  \end{enumerate}
 
\end{theorem}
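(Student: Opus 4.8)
The plan is to treat each of the two quantities as a sum of $n$ partially dependent, bounded, mean-zero random variables indexed by the cyclic shift, and then to apply a Bernstein-type concentration bound tailored to such sums. Write the $\ell$-th entry of a column as $A_{\ell,j} = \prod_{i} \tilde{L}_{\alpha_i}(u_{i+\ell})$, where $\tilde{L}_a$ is the normalized Legendre polynomial of degree $a$, $\alpha=\alpha^{(j)}$ is the multi-index (of total degree at most $p$) defining column $j$, and all indices are taken modulo $n$. For the first claim set $X_\ell := A_{\ell,j_1}A_{\ell,j_2}$ and $Z := \langle A_{j_1}, A_{j_2}\rangle = \sum_{\ell=1}^{n} X_\ell$; for the second set $X_\ell := A_{\ell,j_1}^2 - 1$ and $Z := \|A_{j_1}\|^2 - n$. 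In both cases the $X_\ell$ are identically distributed because of the cyclic symmetry of the construction and the i.i.d.\ assumption on $u$. Applying the orthonormality of the normalized Legendre system coordinate-by-coordinate gives $\mathbb{E}[A_{\ell,j_1}A_{\ell,j_2}] = \prod_i \delta_{\alpha_i,\beta_i} = 0$ for $j_1 \neq j_2$ and $\mathbb{E}[A_{\ell,j_1}^2] = 1$, so $\mathbb{E}[Z]=0$ in both settings; this is exactly what pins the off-diagonal inner products to $0$ and the squared norms to $n$. The uniform bound on the summands follows from $|\tilde{L}_a(x)| \le \sqrt{2a+1}$ on $[-1,1]$ together with $\prod_i (2\alpha_i+1) \le 3^{\sum_i \alpha_i} \le 3^p$ (using $2a+1 \le 3^a$), which yields $|X_\ell| \le 3^p$; this is the origin of the $3^p$ factor in both bounds.

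The key structural observation is that the $X_\ell$ are only locally dependent. The variable $X_\ell$ is a function of the coordinates $T_\ell := \{\, i+\ell \bmod n : i \in \supp(\alpha)\cup\supp(\beta)\,\}$, and since $u$ has independent entries, $X_\ell$ and $X_{\ell'}$ are independent whenever $T_\ell \cap T_{\ell'} = \emptyset$. Because each of $\alpha,\beta$ has at most $p$ active coordinates, the set $S := \supp(\alpha)\cup\supp(\beta)$ satisfies $|S| \le 2p$, and $T_\ell \cap T_{\ell'} \neq \emptyset$ forces $\ell'-\ell$ to lie in the difference set $S-S$, whose cardinality is at most $|S|^2 \le 4p^2$. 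Hence the dependency graph on $\{1,\dots,n\}$ has maximum degree $O(p^2)$, and the hypothesis $2p^2 \le n$ guarantees that this dependency range stays strictly below $n$, so that a genuine fraction of the terms are mutually independent and the sum is only ``partly dependent.''

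With these two ingredients I would invoke the Bernstein-like inequality for partly dependent random variables announced in the introduction: decompose $\{1,\dots,n\}$ into $\chi = O(p^2)$ color classes on which the $X_\ell$ are independent, apply a Bernstein/Hoeffding bound within each class, and recombine. This produces a sub-Gaussian tail of the form $\mathbb{P}(|Z|\ge t) \lesssim \exp\!\big(-c\,t^2/(n\,3^{2p}\,\mathrm{poly}(p))\big)$, and taking $t$ of order $p^3\,3^p\sqrt{n\log n}$ drives the per-pair failure probability down to roughly $n^{-24p/11}$. A union bound over all $\binom{N}{2}$ column pairs, with $N=\binom{n+p}{p}$ bounded by Stirling as $N \le \big(\tfrac{e(n+p)}{p}\big)^p$, then splits into the combinatorial prefactor $\big(\tfrac{e}{p}+\tfrac{e}{2p^2}\big)^{2p}$ and a leftover power of $n$ that merges with the tail to give the stated $n^{-2p/11}$. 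The main obstacle is precisely this final dependency bookkeeping: one must track how the dependency degree $O(p^2)$ and the summand bound $3^p$ enter the exponent of the partly dependent Bernstein inequality, and how the union-bound over the $\binom{N}{2}$ pairs balances against the tail, so that the scale comes out exactly as $12\,p^3\,3^p\sqrt{n\log n}$ with the claimed probability, rather than merely establishing the qualitative $\mathrm{poly}(p)\,3^p\sqrt{n\log n}$ rate. By contrast, the mean-zero reduction and the $3^p$ magnitude bound are routine once the Legendre orthonormality and the estimate $2a+1\le 3^a$ are in place.
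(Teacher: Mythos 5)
Your proposal follows essentially the same route as the paper: identically distributed, mean-zero (by Legendre orthonormality) summands bounded by $3^p$ via $K_b=3^{p/2}$, a dependency graph of degree at most $4p^2$ from the overlap of shifted supports, the Janson-type Bernstein inequality for locally dependent sums (Theorem 2.5 of Janson, which the paper states in its appendix and which is proved by exactly the coloring decomposition you describe), the choice $\tau = 12\,p^3\,3^p\sqrt{n\log n}$, and a union bound over $\binom{N}{2}$ pairs using $N=\binom{n+p}{p}\le\bigl(\tfrac{e(n+p)}{p}\bigr)^p$. The remaining work you flag is only the arithmetic already carried out in the paper's displayed chain of inequalities, so the proposal is correct and matches the paper's argument.
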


\begin{proof}
Given a vector $u = (u_1, \dots, u_n)\in \mathbb{R}^n$ with i.i.d. uniformly distributed entries in $[-1,1]$, let $A \in \mathbb{R}^{n\times N}$, with $N= {{n+p}\choose{p}}$, be the associated Legendre sampling matrix of order $p$, that is the matrix formed by transforming the matrix in Equation~\eqref{eqn:dictionaryburstmonomial} with $k=1$ to the Legendre system. In particular, the matrix is defined as:
\begin{equation*}
A: = \begin{bmatrix} & U^0_L & | & U^1_L & | & U^2_L  & | & \cdots & | & U^p_L & \end{bmatrix},
\end{equation*} 
where $U_L^q$ is a matrix generated from the tensorized Legendre basis of order $q$ for $0\leq q\leq p$. For examples, $U_L^0\in \mathbb{R}^{n}$ is a vector of all ones,
\begin{equation*}
U_L^1:=\begin{pmatrix}[cccc]
 \sqrt{3}\, u_1 &  \sqrt{3}\,u_2 & \cdots &  \sqrt{3}\,u_n \\
 \sqrt{3}\,u_2 &  \sqrt{3}\,u_3 & \cdots &  \sqrt{3}\,u_1 \\
\vdots & \vdots &  & \vdots \\
  \sqrt{3}\,u_n &  \sqrt{3}\,u_1 & \cdots &  \sqrt{3}\,u_{n-1}\\
\end{pmatrix},
\end{equation*}
and 
\begin{equation*}
U_L^2:= \begin{pmatrix}[cccc]
 \frac{\sqrt{5}}{2} \,(3 u_1^2-1) & 3\,u_1\, u_2 & \cdots &\frac{\sqrt{5}}{2}\, (3 u_n^2-1)\\
 \frac{\sqrt{5}}{2} \,(3 u_2^2-1) & 3\,u_2\, u_3 & \cdots &\frac{\sqrt{5}}{2} \,(3 u_1^2-1)\\
\vdots & \vdots &  & \vdots \\
\frac{\sqrt{5}}{2} \,(3 u_n^2-1)& 3\,u_n\, u_1 & \cdots &\frac{\sqrt{5}}{2} \,(3 u_{n-1}^2-1)\\
\end{pmatrix}.
\end{equation*}
Consider the random variable $Y_{j_1,j_2} = \left< A_{\cdot,j_1},  A_{\cdot,j_2} \right>$  which is the inner product between the columns $j_1$ and $j_2$ of $A$, where $A=[A_{i,j}]$ for $ 1\leq  i\leq n$ and $ 1\leq  j\leq N$. Denote the components of the sum by:
$$Y_i := A_{i,j_1} \,  A_{i,j_2},$$ 
so that we can write the inner product as:
\begin{equation}
Y_{j_1,j_2}= \left< A_{\cdot,j_1},  A_{\cdot,j_2} \right> = \sum\limits_{i=1}^n \, A_{i,j_1} \,  A_{i,j_2} = \sum\limits_{i=1}^n Y_{i}.
\end{equation}

The components $Y_i$ have several useful properties. The components of $Y_{j_1,j_2}$ are uncorrelated, in particular, they satisfy $\mathbb{E}[Y_{i}] = 0$ if $j_1 \neq j_2$ and  $\mathbb{E}[Y_{i}] = 1$ if $j_1 \neq j_2$,  when one normalizes the columns. For fixed $j_1$ and $j_2$, the elements $Y_i = A_{i,j_1} \,  A_{i,j_2}$ follow the same distribution for all $1\leq  i \leq n$. This is a consequence of the cyclic structure of $A$, since each product $A_{i,j_1} \,  A_{i,j_2}$ has the same functional form applied to different permutations of the data $u$. 

Note that the $L^2(d\mu)$-normalized Legendre system of order $p$ (\textit{i.e.} the tensor product of univariate Legendre polynomials up to order $p$) is a bounded orthonormal system with respect to $d\mu = \frac{1}{2}dx$. In particular, each basis term is bounded in $L^\infty([-1,1]^n)$ by $K_b = 3^{p/2}$ (which can be achieved by  $3^{p/2} u_{i_1} \cdots u_{i_p} $ at the boundary of the domain). Therefore, $| Y_{i} | \leq K_b^2 = 3^p$, $\abs{Y_i - \mathbb{E}[Y_{i}]} = \abs{Y_i} \leq 3^p$, and $\Var(Y_i)\leq \mathbb{E}(Y_i^2) \leq 9^p$.

Applying Theorem~\ref{thrm:dependent} from the Appendix, which is a rephrasing of Theorem 2.5 from \cite{janson2004large}, with $\triangle = 4p^2$, $M = 3^p$, and $\text{Var}(Y_{i}) \leq 9^p$, yields the following bound:  
\begin{align}
P\left( \abs{Y_{j_1,j_2}  -  \mathbb{E} Y_{j_1,j_2}} \geq \tau\right) \leq 2\exp\left( - \frac{\tau^2 ( 1- p^2/n)}{8p^2 (9^p\, n  + 3^{p-1}\, \tau) } \right).
\end{align}
By assumption we have $\dfrac{p^2}{n}\leq \dfrac{1}{2}$, which happens, for example, when the maximal degree $p$ is small and the ambient dimension $n$ is much larger than $p$. By setting $\tau = 12\,p^3\,  3^{p}\, \sqrt{\,n\, \log n}$ and using $(1-\frac{p^2}{n} )\geq \frac{1}{2}$ and $\log n\leq n$, we have:
\begin{align}
P\left( |Y_{j_1,j_2}-  \mathbb{E} Y_{j_1,j_2}| \geq 12\,p^3\,  3^{p}\, \sqrt{n\, \log n}\right) & \leq 2\exp\left( - \dfrac{\tau^2}{16p^2 (9^p\, n  + 3^{p-1}\, \tau)} \right) \nonumber \\
&\leq 2\exp\left(-\dfrac{9 p^4 \, n\, \log n}{n+4 \, p^3\,\sqrt{n\, \log n}}\right) \nonumber \\
& \leq 2\exp\left(-\dfrac{9 p^4 n\log n}{n+4\, p^3 n}\right) \nonumber \\
& \leq 2\exp\left(-\dfrac{9 p^4}{1+4\, p^3} \, \log n\right).
\label{eq:inequalitydepend} 
\end{align}
Equation~\eqref{eq:inequalitydepend} holds for all pairs $(j_1,j_2)$, therefore taking a union bound over all $N(N-1)/2$ pairs and using the inequality: 
$$N = {{n+p}\choose p} \leq \left(\dfrac{e(n+p)}{p}\right)^p\leq  \left(n \left(\frac{e}{p} + \frac{e}{2p^2}\right)\right)^p=  n^p \left(\frac{e}{p} + \frac{e}{2p^2}\right)^p,$$
for $p\geq1$ where $e = \exp(1)$, we have:
\begin{align*}
P\left( \exists (j_1,j_2): |Y_{j_1,j_2} - \mathbb{E} Y_{j_1,j_2}| \geq 12\,p^3\,  3^{p}\, \sqrt{n\, \log n} \right) 
&\leq N^2 \exp\left(-\dfrac{9 p^4}{1+4\, p^3} \, \log n\right)  \\
&\leq n^{2p}  \left(\frac{e}{p} + \frac{e}{2p^2}\right)^{2p} \exp\left(-\dfrac{9p^4}{1+4\, p^3} \, \log n\right)\\
&\leq\left(\frac{e}{p} + \frac{e}{2p^2}\right)^{2p}  \, \exp\left(\left(2p-\dfrac{9p^4}{1+4\, p^3} \right) \log n\right)\\
&\leq \left(\frac{e}{p} + \frac{e}{2p^2}\right)^{2p} \, \exp\left(-\dfrac{p^4-2p}{4p^3+1}\log n\right)\\
&\leq \left(\frac{e}{p} + \frac{e}{2p^2}\right)^{2p}  \, n^{-2p/11},\quad \text{for}\quad p\geq 2.
\end{align*}

\end{proof}

Theorem \ref{thrm:31} provides an estimate on the coherence of the sampling matrix. We recall the coherence-based sparse recovery result from \cite{gribonval2003sparse, donoho2003optimally,foucart2013mathematical} below.

\begin{theorem} [Related to Theorem 5.15 from \cite{foucart2013mathematical}]\label{thrm:coherence}
Let $A$ be an $m \times N$ matrix with $\ell_2$-normalized columns.  If: 
\begin{equation}
\label{coherence_condition}
\max_{j \neq k} | \langle{ A_j,  A_k \rangle}| < \frac{1}{2s - 1},
\end{equation}
then for any $s$-sparse vector $c \in \mathbb{C}^N$ satisfying $v=Ac+e$ with $||e||_2 \leq \sigma$, a minimizer $c^\#$ of L-BP$_\sigma$ approximates the vector $c$ with the error bound:
$$\|c -c^\# \|_1 \leq d \, s\, \sigma,$$
where $d>0$ is a universal constant.
\end{theorem}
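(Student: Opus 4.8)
The plan is to deduce the stated $\ell^1$ error bound from a \emph{robust null space property} of $A$, which I will in turn derive directly from the coherence hypothesis. This coherence-to-null-space route is preferable to passing through the restricted isometry property: coherence only yields a restricted isometry constant $\delta_{2s} \le (2s-1)\mu < 1$, and a constant merely below $1$ is too weak for the standard robust $\ell^1$ recovery theorems, whereas the direct argument reproduces the sharp threshold $(2s-1)\mu<1$. Throughout, write $\mu := \max_{j\neq k}|\langle A_j, A_k\rangle|$ for the coherence, so the hypothesis reads $(2s-1)\mu < 1$.

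First I would establish a pointwise bound on the entries of an arbitrary $v \in \mathbb{C}^N$. Since the columns are $\ell^2$-normalized, $(A^*Av)_j = v_j + \sum_{k\neq j}\langle A_j, A_k\rangle v_k$, while $|(A^*Av)_j| = |\langle A_j, Av\rangle| \le \|A_j\|_2\,\|Av\|_2 = \|Av\|_2$. Rearranging and bounding $|\langle A_j, A_k\rangle| \le \mu$ gives
\[
(1+\mu)\,|v_j| \le \mu\,\|v\|_1 + \|Av\|_2 \qquad \text{for every } j .
\]
Summing over an index set $S$ with $|S|\le s$, writing $\|v\|_1 = \|v_S\|_1 + \|v_{S^c}\|_1$, and setting $\alpha := s\mu/(1+\mu)$, I collect the $\|v_S\|_1$ terms to obtain
\[
\|v_S\|_1 \le \frac{\alpha}{1-\alpha}\,\|v_{S^c}\|_1 + \frac{s}{(1+\mu)(1-\alpha)}\,\|Av\|_2 =: \rho\,\|v_{S^c}\|_1 + \tau\,\|Av\|_2 .
\]
The elementary equivalence $\rho < 1 \iff \alpha < 1/2 \iff (2s-1)\mu < 1$ shows that the hypothesis is precisely what makes $A$ satisfy the $\ell^1$-robust null space property of order $s$ with constants $\rho \in (0,1)$ and $\tau > 0$.

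Next I would invoke the general principle that the robust null space property implies stable and robust recovery, which is the content of Theorem~4.22 / Theorem~5.15 of \cite{foucart2013mathematical}. Concretely, that property yields, for the L-BP$_\sigma$ minimizer $c^\#$ associated with any signal $x$, the estimate $\|x - c^\#\|_1 \le \frac{2(1+\rho)}{1-\rho}\,\sigma_s(x)_1 + \frac{4\tau}{1-\rho}\,\sigma$, where $\sigma_s(x)_1$ is the best $s$-term approximation error of $x$ in $\ell^1$. The two inputs are that $x$ is itself feasible for L-BP$_\sigma$ (since $\|Ax - v\|_2 = \|e\|_2 \le \sigma$), whence $\|c^\#\|_1 \le \|x\|_1$, and that $\|A(c^\# - x)\|_2 \le \|Ac^\# - v\|_2 + \|v - Ax\|_2 \le 2\sigma$. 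Specializing to the $s$-sparse ground truth $x = c$ annihilates the approximation term, $\sigma_s(c)_1 = 0$, leaving $\|c - c^\#\|_1 \le \frac{4\tau}{1-\rho}\,\sigma$. Because $\tau$ is proportional to $s$, this has the claimed form $d\,s\,\sigma$.

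The main point to watch is the behaviour of $d = \frac{4\tau}{s(1-\rho)}$ as $(2s-1)\mu \uparrow 1$: there $\rho \to 1$ forces $d$ to blow up, so strictly speaking the constant is universal only once $(2s-1)\mu$ is bounded away from $1$ by a fixed margin (for instance $(2s-1)\mu \le \theta$ with $\theta<1$ fixed), which I would state explicitly; under the bare strict inequality $d$ is finite but depends on that margin. The only genuinely technical step is the passage from the pointwise bound to the null space property together with the verification that the threshold $\alpha<1/2$ coincides with $(2s-1)\mu<1$; everything else is either bookkeeping or a direct appeal to \cite{foucart2013mathematical}.
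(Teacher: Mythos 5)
The paper does not actually prove this statement: it is imported verbatim as a known coherence-based recovery result, with only the pointer to Theorem 5.15 of \cite{foucart2013mathematical}. Your proposal supplies a correct proof, and it follows the standard route of that reference: the identity $(A^*Av)_j = v_j + \sum_{k\neq j}\langle A_j,A_k\rangle v_k$ combined with $|(A^*Av)_j|\le \|Av\|_2$ gives $(1+\mu)|v_j|\le \mu\|v\|_1+\|Av\|_2$, summing over $|S|\le s$ yields the $\ell^1$-robust null space property with $\rho=\alpha/(1-\alpha)$, $\alpha=s\mu/(1+\mu)$, and $\rho<1$ is indeed equivalent to $(2s-1)\mu<1$; feeding this into the robust-NSP recovery theorem (Theorem 4.22 of \cite{foucart2013mathematical}) with $\sigma_s(c)_1=0$ gives $\|c-c^\#\|_1\le \frac{4\tau}{1-\rho}\sigma$ with $\tau\propto s$. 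All of these steps check out. Your closing caveat is also well taken and is, if anything, a criticism of the theorem as stated in the paper rather than of your argument: the constant $d=4\tau/(s(1-\rho))$ degenerates as $(2s-1)\mu\uparrow 1$, so calling $d$ ``universal'' is only legitimate once $(2s-1)\mu$ is bounded away from $1$ by a fixed margin; this looseness is inherited from the way the result is quoted here and propagates into Theorem~\ref{thrm:samplingcyclic} and Theorem~\ref{thm:main}, whose hypotheses (e.g.\ $n/\log n\ge 144\,p^6\,9^p s^2$) do supply such a margin.
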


Using Theorem~\ref{thrm:coherence}, we can show the exact recovery for the case were $A$ is a cyclic Legendre sampling matrix of order $p$.

\begin{theorem}  \label{thrm:samplingcyclic} 
Let $A\in \mathbb{R}^{n \times N}$ be the Legendre sampling matrix of order $p$ generated by a vector $u\in \mathbb{R}^n$ with i.i.d. uniformly distributed entries in $[-1,1]$, then with probability exceeding \\$1 - \left(\frac{e}{p} + \frac{e}{2p^2}\right)^{2p}  \, n^{-2p/11}$, an $s$-sparse vector $c \in \mathbb{C}^N$ satisfying $v=Ac+e$ with $||e||_2 \leq \sigma$ can be recovered by $c^\#$, the solution of L-BP$_\sigma$, with the error bound:
$$\|c -c^\# \|_1 \leq d \, s\, \sigma,$$
for some universal constant $d>0$ as long as:
$$\frac{n}{\log n} \geq 144 \, p^6 \, 9^p \, s^2.$$
In addition, if $A$ is generated from $K$ samples $u(k)\in \mathbb{R}^n$ with i.i.d. uniformly distributed entries in $[-1,1]$ for $1\leq k\leq K$ and: 
$$K \geq \frac{ 144 \, p^6 \, 9^p \, s^2\, \log n}{n},$$
then with probability exceeding $1 - \left(\frac{e}{p} + \frac{e}{2p^2}\right)^{2p} \, n^{-2p/11}$, an $s$-sparse vector $c \in \mathbb{C}^N$ satisfying $v=Ac+e$ with $||e||_2 \leq \sigma$ can be recovered by $c^\#$, the solution of L-BP$_\sigma$, with the error bound:
$$\|c -c^\# \|_1 \leq d \, s\, \sigma.$$
\end{theorem}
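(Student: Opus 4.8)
The plan is to combine the coherence estimate of Theorem~\ref{thrm:31} with the coherence-based recovery guarantee of Theorem~\ref{thrm:coherence}. Theorem~\ref{thrm:coherence} needs two ingredients: columns normalized in $\ell_2$, and a coherence strictly below $\frac{1}{2s-1}$. Conveniently, Theorem~\ref{thrm:31} delivers, on a single high-probability event, both the off-diagonal bound $|\langle A_{j_1}, A_{j_2}\rangle| \le 12\,p^3\,3^p\sqrt{n\log n}$ and the diagonal bound $|\,\|A_{j_1}\|^2 - n\,| \le 12\,p^3\,3^p\sqrt{n\log n}$, so both quantities entering the normalized coherence are controlled simultaneously and no additional probabilistic argument is required here.

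First I would set $\beta := 12\,p^3\,3^p\sqrt{n\log n}$ and, working on the event of Theorem~\ref{thrm:31}, bound the coherence of the column-normalized matrix by $\mu \le \frac{\beta}{\,n-\beta\,}$: the numerator is controlled by statement~(1), while statement~(2) gives $\|A_j\|^2 \ge n-\beta$ for every column, hence $\|A_{j_1}\|\,\|A_{j_2}\| \ge n-\beta$. Next I would impose the recovery threshold $\mu < \frac{1}{2s-1}$. Clearing denominators, $\frac{\beta}{n-\beta} < \frac{1}{2s-1}$ is equivalent to the scalar inequality $2s\beta < n$, i.e. $\beta$ must be of order $n/s$; squaring $\beta = 12\,p^3\,3^p\sqrt{n\log n}$ and rearranging produces exactly the stated sparsity--dimension threshold $\frac{n}{\log n} \ge 144\,p^6\,9^p\,s^2$ (up to the universal constant absorbed into the coherence condition). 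With $\mu < \frac{1}{2s-1}$ in force, Theorem~\ref{thrm:coherence} applies verbatim to any $s$-sparse $c$ with $v = Ac + e$, $\|e\|_2 \le \sigma$, yielding $\|c - c^\#\|_1 \le d\,s\,\sigma$; the probability of the governing event is inherited unchanged from Theorem~\ref{thrm:31}. This is the $K=1$ case, and the first assertion follows.

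For the multi-sample statement I would re-run the argument on the stacked $nK \times N$ matrix of Equation~\eqref{eqn:dictionaryvelocity}. The key observation is that the inner product $Y_{j_1,j_2}$ is now a sum of $nK$ products that decomposes into $K$ independent blocks (one per sample $u(k)$), each block carrying the same cyclic dependency structure with $\triangle = 4p^2$; consequently the dependency graph of the full sum is block-diagonal and still has degree at most $4p^2$. Applying Theorem~\ref{thrm:dependent} exactly as in the proof of Theorem~\ref{thrm:31}, but now with $nK$ terms and variance proxy $9^p\,nK$, and choosing $\tau = 12\,p^3\,3^p\sqrt{nK\log n}$, the factor $nK$ cancels in the exponent, so the per-pair tail retains the bound $\exp\!\left(-\frac{9p^4}{1+4p^3}\log n\right)$. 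Since the number of columns $N$ does not change with $K$, the union bound over the $N^2$ pairs reproduces the identical probability $1 - \left(\frac{e}{p}+\frac{e}{2p^2}\right)^{2p} n^{-2p/11}$. Repeating the normalization step with $\|A_j\|^2 \approx nK$ replaces $n$ by $nK$ throughout the coherence condition, turning $\frac{n}{\log n}\ge 144\,p^6\,9^p\,s^2$ into $\frac{nK}{\log n}\ge 144\,p^6\,9^p\,s^2$, i.e. $K \ge \frac{144\,p^6\,9^p\,s^2\log n}{n}$, and Theorem~\ref{thrm:coherence} again gives the error bound.

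Since the heavy concentration machinery is already packaged in Theorem~\ref{thrm:31}, the main obstacle is not probabilistic but structural. I expect the delicate points to be twofold: first, the normalization step, where the off-diagonal and diagonal bounds must combine so that $\frac{\beta}{n-\beta}$ is pushed strictly below the sharp threshold $\frac{1}{2s-1}$ with the correct constant (this is where tracking the interplay between the numerator bound and the lower bound on $\|A_j\|^2$ matters); and second, in the multi-sample extension, verifying that the dependency graph of the stacked sum remains low-degree so that the Bernstein-type tail from Theorem~\ref{thrm:31} carries over with the $nK$ factor cancelling cleanly, which is precisely what keeps the failure probability identical while relaxing the requirement on $K$.
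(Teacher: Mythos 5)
Your proposal is correct and follows essentially the same route as the paper: normalize the columns, use both bounds of Theorem~\ref{thrm:31} to push the coherence of the normalized matrix below $\frac{1}{2s-1}$, invoke Theorem~\ref{thrm:coherence}, and for $K$ samples rerun the Bernstein argument on the stacked $nK\times N$ matrix, where the $nK$ factor cancels in the exponent so the failure probability is unchanged. You in fact supply more detail than the paper's own two-line proof (especially the block-diagonal dependency-graph observation in the multi-sample case); the only quibble is that clearing denominators in $\frac{\beta}{n-\beta}<\frac{1}{2s-1}$ gives $2s\beta<n$ and hence the constant $576=(2\cdot 12)^2$ rather than the stated $144$, a constant-factor slippage the paper itself shares and which you correctly flag as absorbable.
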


\begin{proof}
The normalized matrix can be written as:
$$\bar{A} = A  D,$$
where $D$ is a diagonal matrix with $D_j={\| A_{\cdot,j} \|_2^2}$, \textit{i.e.} the diagonal contains the squared norm of the columns of $A$.  Then $\bar{A}$ has $\ell_2$-normalized columns, and Equation~\eqref{coherence_condition} is satisfied with probability exceeding $1 - \left(\frac{e}{p} + \frac{e}{2p^2}\right)^{2p} \, n^{-2p/11}$ as long as:

$$
\frac{n}{\log n} \geq 144 \,p^6 \,9^p\, s^2.
$$
Thus by Theorem~\ref{thrm:coherence}, we have the corresponding $\ell^1$ error bound. The extension of this result to multiple samples $u(k) = (u(k)_1, \dots, u(k)_n)$ for $1\leq k\leq K$, follows directly as long as:
$$K \geq \frac{ 144\, p^6 \,9^p \,s^2 \,\log n}{n}.$$
\end{proof}

The results in Theorem~\ref{thrm:samplingcyclic} are important on their own. In particular, the theorem shows that for large enough dimension, one can recovery cyclic polynomial systems from only a few samples. 

\medskip

Returning to the problem of model selection for structured dynamical systems, we can apply Theorem~\ref{thrm:samplingcyclic} to obtain the following recovery guarantee.

\begin{theorem}
\label{thm:main}
Let $\left\{u(t_0;k), \dots, u(t_{m-1}; k)\right\}$ and $\left\{\dot{u}(t_0; k), \dots, \dot{u}(t_{m-1};k)\right\}$ be the state-space and velocity measurements, respectively, for $1\leq k\leq K$ bursts of the $n$-dimensional evolution equation $\dot{u}=f(u)$.   Assume that the components, $f_j$, satisfy the cyclic structural condition and that they have at most $s$ non-zero coefficients with respect to the Legendre basis.  Assume that for each $k$, the initial data $u(t_0;k)$ is randomly sampled from the uniform distribution in  $[-1,1]^n$ (thus each component of the initial vector are  i.i.d.). 
Also, assume that the total number of bursts, $K$, satisfies:
\begin{align}
\label{eq:numburst}
K \geq \frac{ 144\, p^6 \,9^p \,s^2 \,\log n}{n}.
\end{align}
 Then with probability exceeding $1 - \left(\frac{e}{p} + \frac{e}{2p^2}\right)^{2p}  \, n^{-2p/11}$, the vector $c$ can be recovered exactly by the unique solution to Problem \eqref{eq:lbp}.
\bigskip
\noindent In addition, under the same assumptions as above, if the time derivative is approximated within $\eta$-accuracy in the scaled $\ell^2$ norm, \textit{i.e.} if $\widetilde{\dot{u}}(t_0;k)$ is the approximation to the time derivative and:
\begin{align*}
\sqrt{\frac{1}{K}\sum\limits_{k=1}^K \left|\widetilde{\dot{u}}(t_0;k)-{\dot{u}}(t_0;k)\right|^2} \leq \eta,
\end{align*}
then by setting $\sigma = \sqrt{K} \eta$ and using the submatrix of $A_L$ consisting of only the initial data, any particular vector $c$ is approximated by a minimizer $c^{\#}$ of  Problem \eqref{eq:lbp} with the following error bound:
\begin{align}
\|c -c^\# \|_1 \leq d \, s\, \sigma, \label{eqn:errorbound}
\end{align}
where $d$ is a universal constant.
\end{theorem}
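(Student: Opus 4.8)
The plan is to reduce the model-selection problem to the cyclic Legendre sampling matrix already analyzed in Theorem~\ref{thrm:samplingcyclic} and then invoke that result almost verbatim. First I would use the cyclic structural condition $f_j(u_1,\dots,u_n)=f_1(u_j,u_{j+1},\dots,u_{j-1+n})$ to collapse the unknown coefficient matrix $C$ of \eqref{eq:coefficientstotal} to its first column $c$: every component $f_j$ shares the same coefficient vector, only evaluated on a cyclically shifted argument. Consequently, for a single burst at $t_0$ the $n$ scalar equations $\dot u_j(t_0;k)=f_j(u(t_0;k))$ become $n$ distinct measurements of the \emph{same} vector $c$, namely $\dot u_j(t_0;k)=\langle A(t_0;k)_{j,\cdot},\,c\rangle$, where row $j$ of the dictionary is built from the $j$-th cyclic permutation of $u(t_0;k)$ as in \eqref{eqn:datamatrix}. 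After passing to the Legendre basis this is exactly the $n\times N$ cyclic Legendre sampling matrix of Theorem~\ref{thrm:samplingcyclic}; the crucial payoff is that one random initial condition supplies $n$ rows rather than one.

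Next I would stack the $K$ bursts restricted to the initial-time data to form the $nK\times N$ system $V=A_L c$ of \eqref{eqn:dictionaryvelocity}. Since the initial vectors $u(t_0;k)$ are drawn i.i.d. uniform on $[-1,1]^n$, the multi-sample half of Theorem~\ref{thrm:samplingcyclic} applies directly: under the burst bound \eqref{eq:numburst}, with probability at least $1-\left(\frac{e}{p}+\frac{e}{2p^2}\right)^{2p}n^{-2p/11}$ the normalized matrix satisfies the coherence condition \eqref{coherence_condition} for sparsity $s$. For the first (noise-free) claim I would take the velocity to be exact, i.e. $e=0$ and $\sigma=0$ in Theorem~\ref{thrm:coherence}; the coherence bound $\mu<1/(2s-1)$ forces every $2s$ columns to be linearly independent, so the $s$-sparse $c$ is the \emph{unique} $s$-sparse solution of $A_L c=V$ and is returned by \eqref{eq:lbp}, giving exact recovery.

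For the noise-robust claim the only extra ingredient is to convert the per-burst accuracy $\eta$ into the global constraint level $\sigma$. Writing $\widetilde V$ for the right-hand side formed from the approximate derivatives and $e=\widetilde V-V$, the stacked structure gives $\|e\|_2^2=\sum_{k=1}^K|\widetilde{\dot u}(t_0;k)-\dot u(t_0;k)|^2\le K\eta^2$, hence $\|e\|_2\le\sqrt{K}\,\eta=\sigma$. Therefore the true $c$ is feasible for \eqref{eq:lbp}, and applying Theorem~\ref{thrm:coherence} on the same high-probability event yields the stated bound $\|c-c^\#\|_1\le d\,s\,\sigma$ of \eqref{eqn:errorbound}.

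The main obstacle I anticipate is the first step: justifying rigorously that the cyclic condition turns a single sample into $n$ genuine rows of the sampling matrix whose entries are the tensor-Legendre evaluations at cyclic shifts of $u$, i.e. that the dictionary constructed in Section~\ref{section:oned} coincides, row for row, with the cyclic Legendre sampling matrix hypothesized in Theorems~\ref{thrm:31} and \ref{thrm:samplingcyclic}. Once that identification is secured, the probabilistic content is entirely inherited from Theorem~\ref{thrm:samplingcyclic}, and the remaining work, namely the union over bursts and the $\sigma=\sqrt{K}\,\eta$ bookkeeping, is routine.
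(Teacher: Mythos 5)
Your proposal is correct and follows essentially the same route as the paper, which presents Theorem~\ref{thm:main} as a direct application of Theorem~\ref{thrm:samplingcyclic}: the cyclic condition collapses the coefficient matrix to a single vector $c$, each random initial condition contributes $n$ rows of the cyclic Legendre sampling matrix, the burst bound \eqref{eq:numburst} triggers the multi-sample coherence guarantee, and the accounting $\|e\|_2\leq\sqrt{K}\,\eta=\sigma$ converts the derivative-accuracy assumption into the constraint level for Theorem~\ref{thrm:coherence}. The identification of the Section~\ref{section:oned} dictionary with the cyclic Legendre sampling matrix that you flag as the main obstacle is exactly the construction the paper takes as given, so nothing essential is missing.
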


Theorem~\ref{thm:main} provides an $\ell^1$ error bound between the learned coefficients and the true sparse coefficients. If the nonzero elements of $c^\#$ are sufficiently large with respect to the error bound, then the support set containing the $s$-largest coefficients coincides with the true support set.

\begin{proposition}
\label{prop2}
Assume that the conditions of Theorem~\ref{thm:main}  hold. Let $S$ be the support set of the true coefficients $c$, \textit{i.e.} $S:=\supp(c)$, and let $S^\#$ be the support set of the $s$-largest (in magnitude) of $c^\#$, a minimizer of Problem \eqref{eq:lbp}. If 
\begin{align}
\sigma < \frac{\min\limits_{j\in S} |c_j|}{2\, d \, s}, \label{eq:prop2condi}
\end{align} 
where $d$ is the same universal constant as in Equation~\eqref{eqn:errorbound}, then $S^\#=S$. 
\end{proposition}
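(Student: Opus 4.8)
The plan is to turn the $\ell^1$ error bound supplied by Theorem~\ref{thm:main} into a coordinate-wise magnitude separation between the entries of $c^\#$ that sit on the true support $S$ and those that sit off it. First I would combine the recovery bound \eqref{eqn:errorbound}, namely $\|c - c^\#\|_1 \leq d\,s\,\sigma$, with the standing hypothesis \eqref{eq:prop2condi}, which rearranges to $d\,s\,\sigma < \frac{1}{2}\min_{j\in S}|c_j|$. Together these give the strict estimate
$$\|c - c^\#\|_1 < \frac{1}{2}\,\min_{j\in S}|c_j|.$$

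Next, since $\|\cdot\|_\infty \leq \|\cdot\|_1$, this bound holds for each coordinate individually. Writing $\mu := \frac{1}{2}\min_{i\in S}|c_i|$, I would record that $|c_j - c^\#_j| < \mu$ for every index $j$, and then split into two cases. For an index $j \in S$, the reverse triangle inequality yields $|c^\#_j| \geq |c_j| - |c_j - c^\#_j| > \min_{i\in S}|c_i| - \mu = \mu$. For an index $j \notin S$, we have $c_j = 0$, so $|c^\#_j| = |c^\#_j - c_j| < \mu$. Thus every coordinate of $c^\#$ indexed by $S$ exceeds $\mu$ in magnitude, while every coordinate indexed outside $S$ falls strictly below $\mu$.

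This strict gap is the whole point: the $s$ largest-in-magnitude entries of $c^\#$ must be exactly the entries supported on $S$, since $c$ is $s$-sparse and hence $|S| = s$. Therefore $S^\# = S$, as claimed. I do not expect a genuine obstacle here, as the result is a thresholding/separation argument and the computation is routine. The only points that require care are transferring the $\ell^1$ bound to an $\ell^\infty$ (coordinate-wise) bound before comparing individual entries, and verifying $|S| = s$ rather than $|S| < s$, so that selecting \emph{the $s$ largest entries} recovers all of $S$ and nothing spurious.
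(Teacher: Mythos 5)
Your proof is correct and follows essentially the same route as the paper's: both pass from the $\ell^1$ recovery bound to a coordinate-wise bound via $\|\cdot\|_\infty \leq \|\cdot\|_1$, and then use the threshold $\tfrac{1}{2}\min_{j\in S}|c_j|$ to separate the on-support entries of $c^\#$ from the off-support ones. Your explicit remark about needing $|S| = s$ (so that the $s$ largest entries pick up exactly $S$) is a point the paper leaves implicit, but it is the same argument.
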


\begin{proof}
This proposition is a consequence of the recovery bound in Equation \eqref{eqn:errorbound}:
\begin{align*}
\| c^{\#} - c \|_1   \leq d \, s\, \sigma.
\end{align*} 
By assumption, $\sigma$ satisfies Equation \eqref{eq:prop2condi},
then the maximum difference between the true and approximate coefficients is:
\begin{align*}
\max_j | c_j-c_j^\#| \leq \| c-c^\# \|_1 \leq \,  d \, s\, \sigma < \, \frac{1}{2} \, \min\limits_{j\in S} |c_j|.
\end{align*} 
Thus, for any $j \in S$, we have $|c_j^\#|>\, \frac{1}{2} \, \min\limits_{j\in S} |c_j|$, and  for any $j \in S^c$, we have $|c_j^\#|\le \, \frac{1}{2} \, \min\limits_{j\in S} |c_j|$. Therefore, $S^\#$ corresponds to the support set of $|c_j^\#|>\, \frac{1}{2} \, \min\limits_{j\in S} |c_j|$, which is identically $S$.
\end{proof}

Proposition~\ref{prop2} provides validation for post-processing the coefficients of Problem \eqref{eq:lbp}, in particular, if the noise is small enough, we could remove all but the $s$ largest (in magnitude) coefficients in $c^{\#}$.

It is worth noting that it is possible to recover the system from one time-step. This is more probable as the dimension $n$ of the problem grows. The sampling bound improves as $n$ grows, since for large $n$, we have $n\gg s^2\, \log n$. Thus, for large enough $n$, one random sample is sufficient. Furthermore, if $s^2\ll n$, we can recover the system from only one time step and from only a subset $\widetilde{n}<n$ of the coordinate equation, where $\widetilde{n}\sim s^2$. Therefore, one just needs to have $\widetilde{n}$ accurate estimations of velocities. 

Theorem~\ref{thm:main} also highlights an important aspect of the scaling. Without any additional assumptions, one is limited to lower-order polynomials, since the numbers of samples required may be too large (since $K_b$ grows rapidly). However, with additional assumptions, for example the cyclic structural condition, the recovery becomes nearly dimension-free, which as a side-effect, allows for higher-order polynomials more easily.

 Note that if the initial data follows another random distribution, then one can construct the corresponding orthogonal polynomial basis. For example, we could assume that the initial data has i.i.d. components sampled from the Chebyshev measure on $[-1,1]^n$ or an interpolating measure between the uniform measure and the Chebyshev measure \cite{rauhut2012sparse}.

\section{Numerical Method}\label{sec:numerics}

The constrained optimization problem \eqref{eq:lbp} can be solved using the Douglas-Rachford algorithm \cite{lions1979splitting, combettes2011proximal}. To do so, we first define the auxiliary variable $w$ with the constraints:
\begin{align*}
(w,c)\in \mathcal{K}:=\left\{ (w,c) | \ w=Ac \right\} \quad  \text{and} \quad w\in B_\sigma(V) := \{ w \, | \ \|w-V\|_2 \leq \sigma \},
\end{align*}
Equation~\eqref{eq:lbp} can be rewritten as an unconstrained minimization problem:
\begin{align}
&\min_{(w,c)}\  F_1(w,c)+F_2(w,c), \label{eqn:DRobjectivefunction}
\end{align}
where the auxiliary functions $F_1$ and $F_2$ are defined as:
\begin{align*}
 F_1(w,c) :=  \| c \|_{1} + \mathbb{I}_{B_\sigma(V)}(w), \quad \text{and} \quad  F_2(w,c) := \mathbb{I}_{\mathcal{K}}(w,c).
  \end{align*}
Here $\mathbb{I}_\mathcal{S}$ denotes the indicator function over a set $\mathcal{S}$, \textit{i.e.},
\begin{align*}
\mathbb{I}_{\mathcal{S}}(w):=
\begin{cases}
&\ \, 0, \ \ \text{if} \ \ w \in \mathcal{S},\\
 & \infty, \ \ \text{if} \ \ w \notin \mathcal{S}.
 \end{cases}
  \end{align*}
The utility of writing the optimization problem in this form is that both auxiliary functions have a simple and explicit proximal operators, which will be used in the iterative method. The proximal operator for a function $F(x)$ is defined as:
\begin{align*}
 \prox_{\gamma F}(x) :=  \argmin{y}  \left\{\frac{1}{2} \| x-y\|^2 + \gamma \, F(y)\right\},
 \end{align*}
where $\gamma>0$ (to be specified later). The proximal operator of $F_1(w,c)$ is:
\begin{align*}
 \prox_{\gamma F_1}(w,c) &=  \argmin{(y,d)} \left\{ \frac{1}{2} \| w-y\|^2  + \frac{1}{2} \| c-d\|^2+\gamma \| d \|_{1} + \gamma \mathbb{I}_{B_\sigma(V)}(w) \right\} \\
 &=  \left( \argmin{y} \left\{ \frac{1}{2} \| w-y\|^2+\mathbb{I}_{B_\sigma(V)}(w) \right\},\  \argmin{d} \left\{ \frac{1}{2} \| c-d\|^2+\gamma \| d \|_{1} \right\} \right)\\
 &= \left(  \text{proj}_{B_\sigma(V)}(w), \ S_\gamma(c) \right),
\end{align*}
where the projection onto the ball can be computed by:
\begin{align*}
\text{proj}_{B_\sigma(V)}(w) :=
\begin{cases}
 \hspace{2.25cm} w, \ & \text{if} \ \ w \in B_\sigma(V),\\
  V +  \sigma\, \dfrac{w-V}{\|w-V\|_2}, \ & \text{if} \ \ w \notin B_\sigma(V).
 \end{cases} 
 \end{align*}
 and the soft-thresholding function $S$ with parameter $\gamma$ is defined (component-wise) as:
 \begin{align*}
[S_\gamma(c) ]_{j} =
\begin{cases}
 & c_j -  \gamma\, \dfrac{c_j}{|c_j|}, \ \ \text{if} \ \ |c_j|>\gamma,\\
 &\hspace{1.4cm} 0, \ \ \text{if} \ \ |c_j| \leq \gamma.
 \end{cases} 
 \end{align*}
Similarly, the proximal operator for $F_2$ is:
\begin{align*}
 \prox_{\gamma F_2}(w,c) &=  \argmin{(y,d)} \left\{ \frac{1}{2} \| w-y\|^2  + \frac{1}{2} \| c-d\|^2+\mathbb{I}_{\mathcal{K}}(w,c) \right\} \\
 &= \left( \  A(I + A^TA)^{-1}(c+A^T w), \ (I + A^TA)^{-1}(c+A^T w)\ \right).
\end{align*}
To implement the proximal operator for $F_2$, we pre-compute the Cholesky factorization $(I + A^TA)=LL^T$ and use forward and back substitution to compute the inverse at each iteration. This lowers the computational cost of each of the iterations. The iteration step for the Douglas-Rachford method is:
\begin{equation}
\begin{aligned}
(\widetilde{w}^{k+1},\widetilde{c}^{k+1})&=\left(1-\frac{\mu}{2}\right)(\widetilde{w}^{k},\widetilde{c}^{k})+\frac{\mu}{2}\ \rprox_{\gamma F_2} \left(  \rprox_{\gamma F_1}\left(\widetilde{w}^{k},\widetilde{c}^{k}\right) \right),\\
({w}^{k+1},{c}^{k+1})&=\prox_{\gamma F_1} (\widetilde{w}^{k+1},\widetilde{c}^{k+1}),
\end{aligned}\label{eq:DRalg}
\end{equation}
where $\rprox_{\gamma F_i}(x) := 2 \prox_{\gamma F_i}(x)-x$ for $i=1,2$. The second step of Equation~\eqref{eq:DRalg} can be computed at the last iteration and does not need to be included within the main iterative loop. The approximation $({w}^{k},{c}^{k})$ converges to the minimizer of Problem~\eqref{eqn:DRobjectivefunction} for any $\gamma>0$ and $\mu \in [0,2]$.

\begin{algorithm}[t!]
\KwData{Given: $u(t;k)\in \mathbb{R}^n$ for $t=t_0$ and $t=t_1$. The number of bursts $k$ is either equal to 1 or very small. The number of nodes $n$ does not need to be large. }
\KwResult{Coefficients of the governing equation $c\in \mathbb{R}^N$. }

 	\textbf{Step 1}: Construct data matrix $U$ as in Sections~\ref{section:oned}, \ref{section:higher}, and \ref{section:local}.\;
 
 	\textbf{Step 2 (optional)}: Add Gaussian noise to $U$, \textit{i.e.} $U\mapsto U+\eta$, where $\eta \sim \mathcal{N}(0,\text{var})$.
 
  	\textbf{Step 3}: Construct the velocity vector $V$ from using $U$ from the previous step.
  
  	\textbf{Step 4}: Transform $U\mapsto aU+b$ so that each elements is valued in $[-1,1]$.\;
  
  	\textbf{Step 5}: Construct the dictionary matrix $A_L$ using $U$ from Step 4; see Section~\ref{section:bod}.\;
   
     	\textbf{Step 6}: Normalize each column of $A_L$ to have unit $\ell^2$-norm.\;
   
  	\textbf{Step 7}: Apply the Douglas-Rachford algorithm to solve Problem \eqref{eq:lbp}. 
  
\indent Input: Set $\sigma>0$. Compute the Cholesky decomposition of $(I+A_L^TA_L)$. Initialize $\widetilde{w}^{0}$ and $\widetilde{c}^{0}$.\;
  
\indent   \While{the sequence $\{\widetilde{c}^{k}\}$ does not converge}{
 {
\indent  Update $\widetilde{w}^{k+1}$ and $\widetilde{c}^{k+1}$ based on Equation~\eqref{eq:DRalg}.     }
 }
\indent  Output: $c_L:=c^k$\;
 
     \textbf{Step 8}: Map the coefficients $c_L$ obtained from Step 5 to the coefficients with respect to the standard monomial basis on the original $U$ as constructed in Step 1. 
       
      \textbf{Step 9 (optional)}: The coefficients can be ``debiased" by solving the system $A|_S \widetilde{c}=V$, where $A|_S$ is the submatrix of $A$ consisting of columns of $A$ indexed by $S:=\supp(c)$ (from Step 6; see also Proposition~\ref{prop2}).   
 \caption{Learning Sparse Dynamics}
 \label{algorithm}
\end{algorithm}

\begin{figure}[b!]
\centering
\subfigure[$u$ at $t_0$]{
	\includegraphics[width = 2 in]{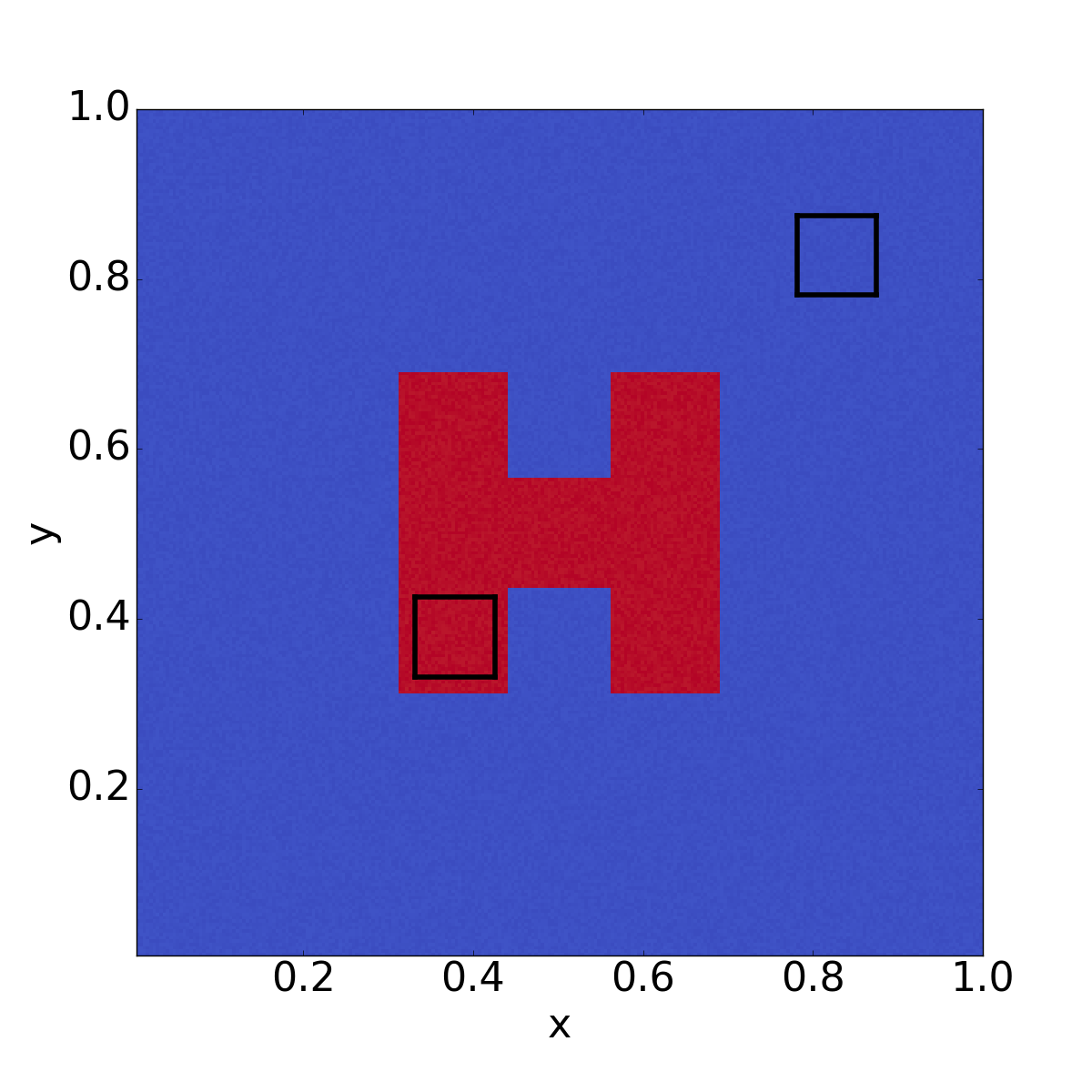}
	}
\subfigure[$u$ at $t_1$]{
	\includegraphics[width = 2 in]{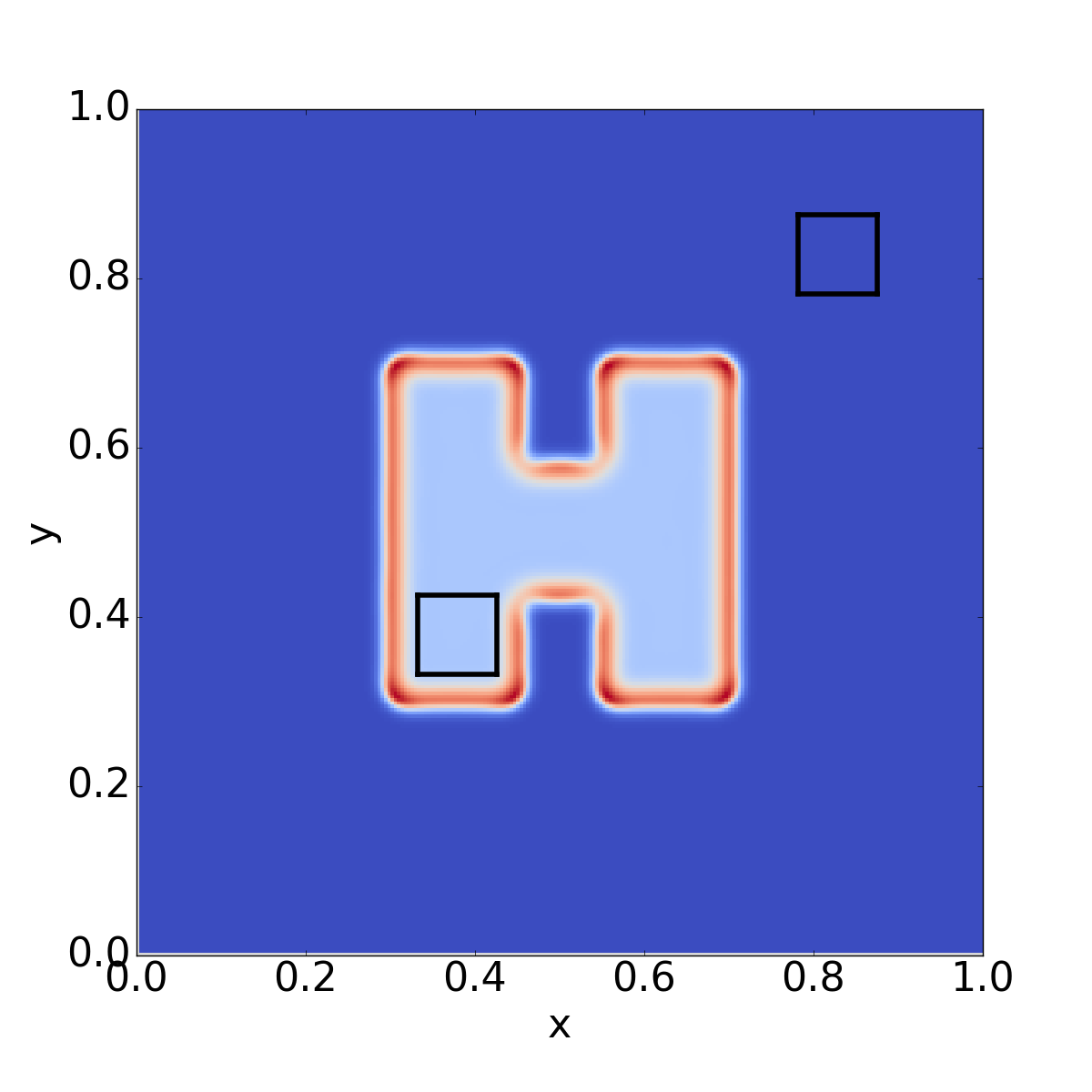}
	}\\
\subfigure[Sub-block 1 at $t_0$]{
	\includegraphics[width = 2 in]{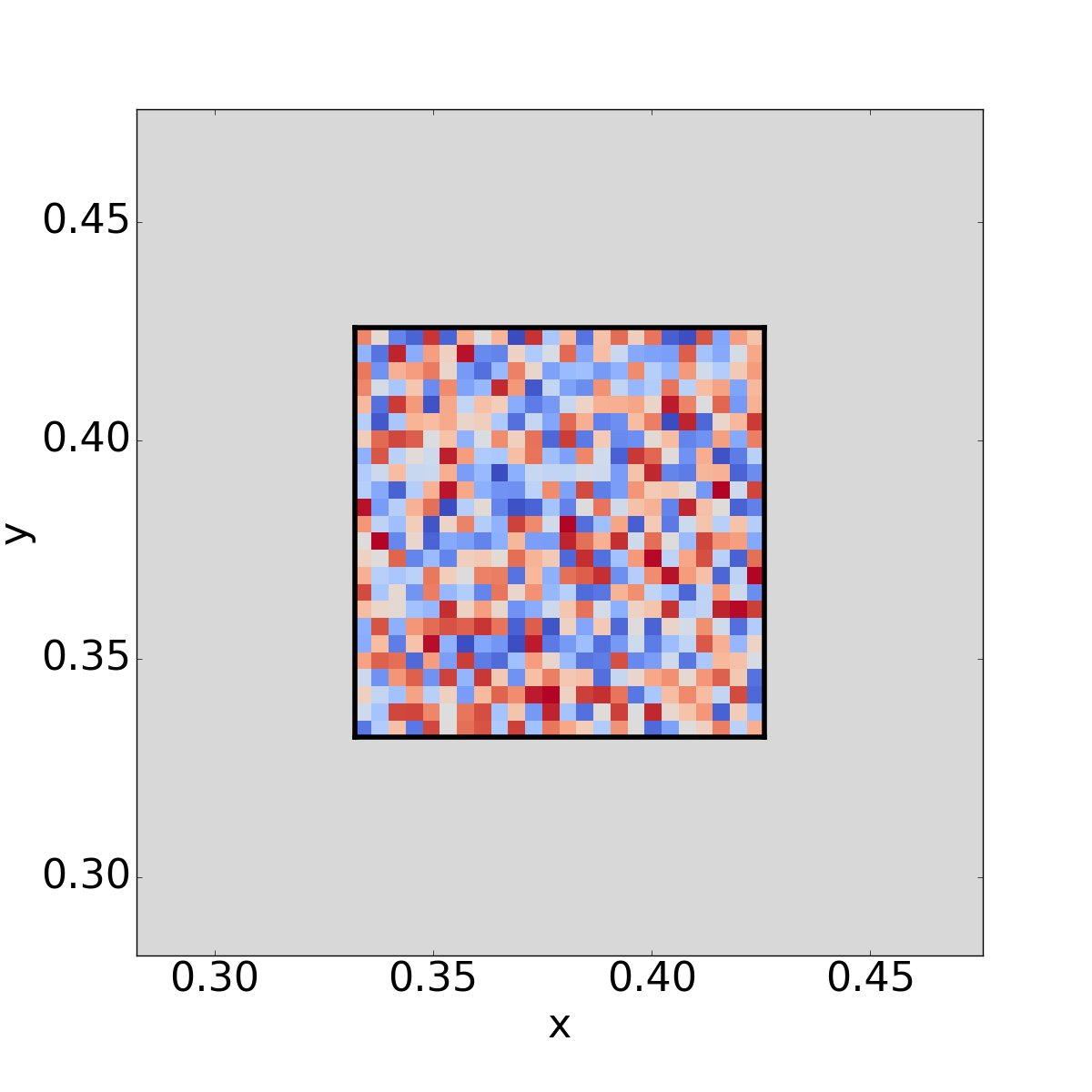}
	}
\subfigure[Sub-block 1 at $t_1$]{	
	\includegraphics[width = 2 in]{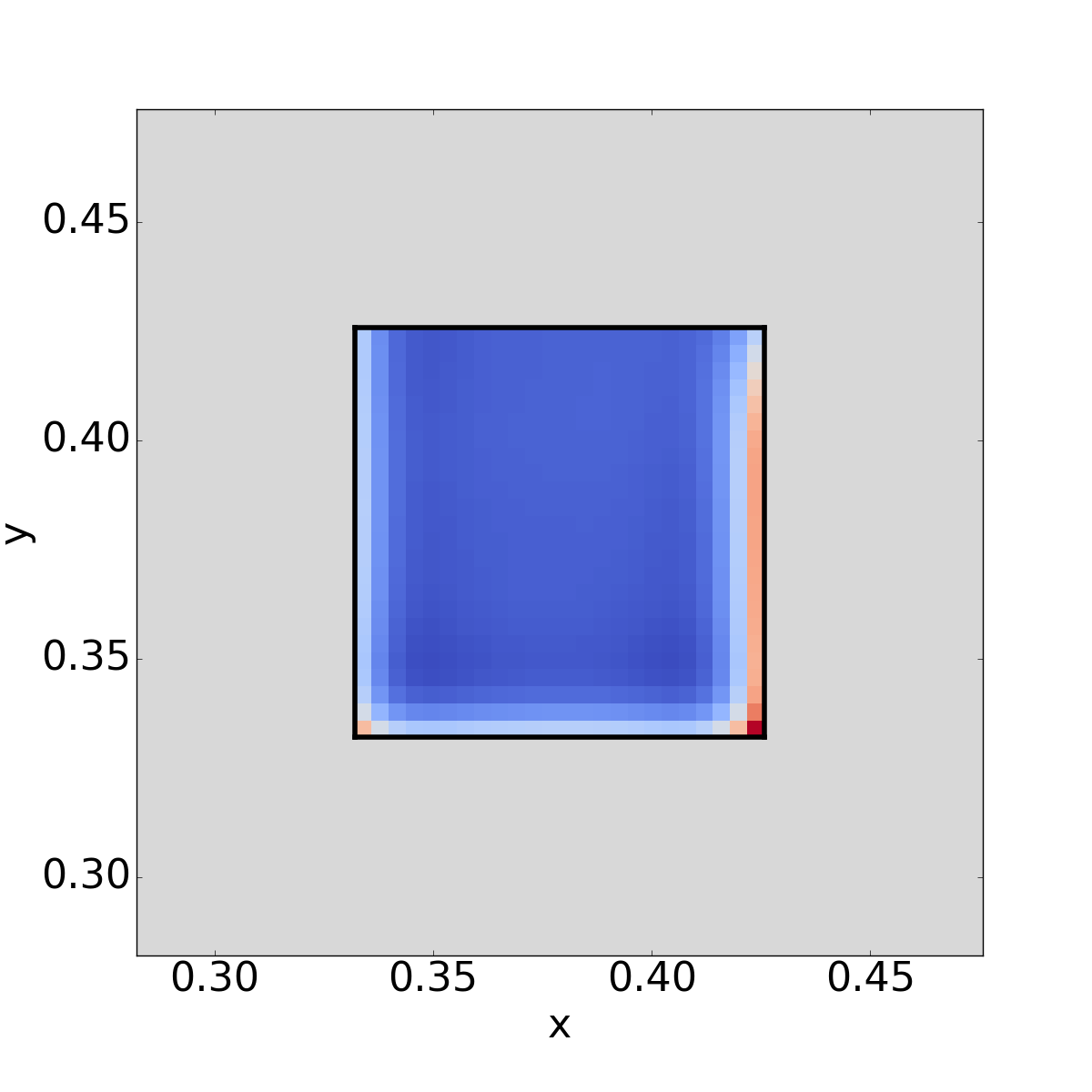}
	}\\
\subfigure[Sub-block 2 at $t_0$]{    
	\includegraphics[width = 2 in]{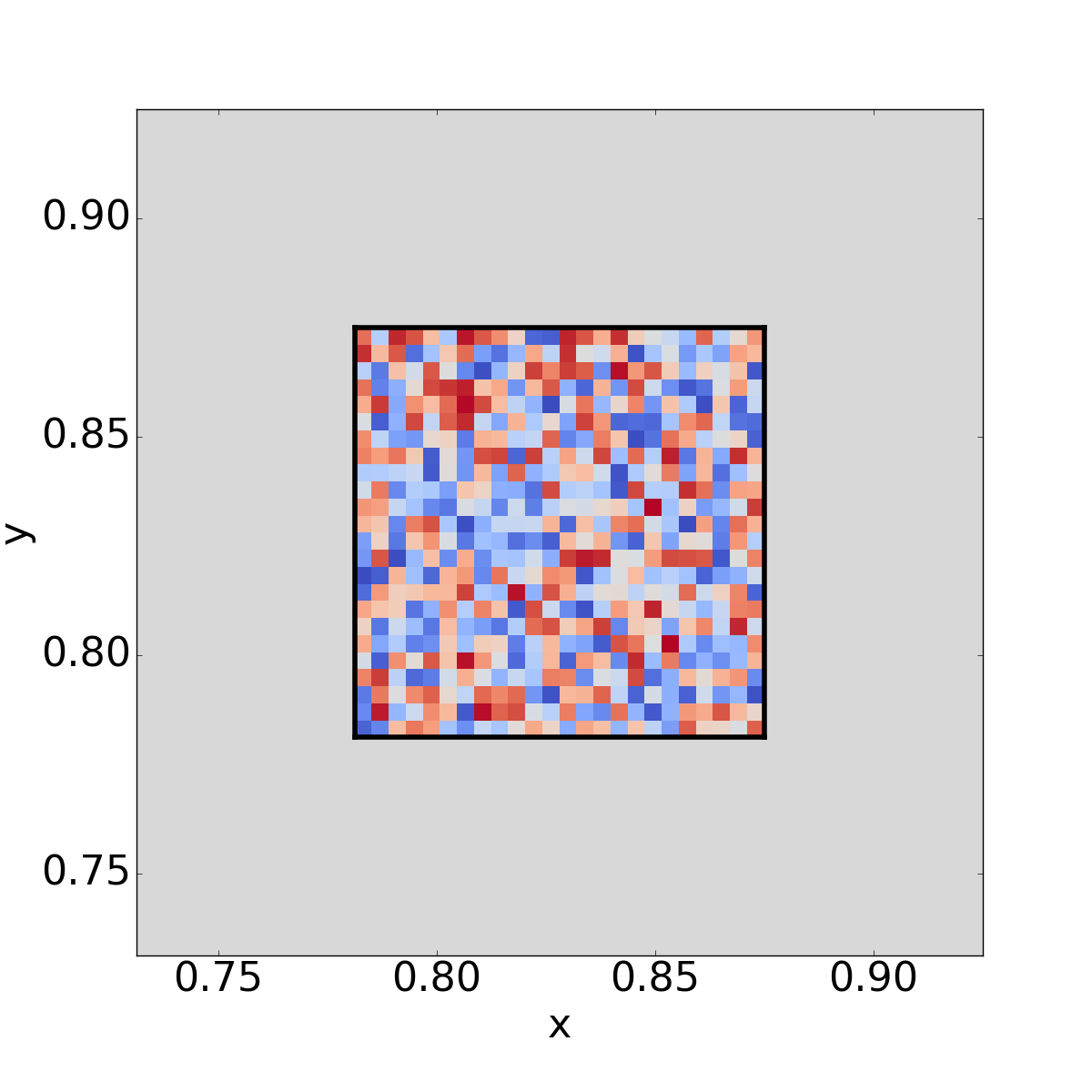}
	}
\subfigure[Sub-block 2 at $t_1$]{
	\includegraphics[width = 2 in]{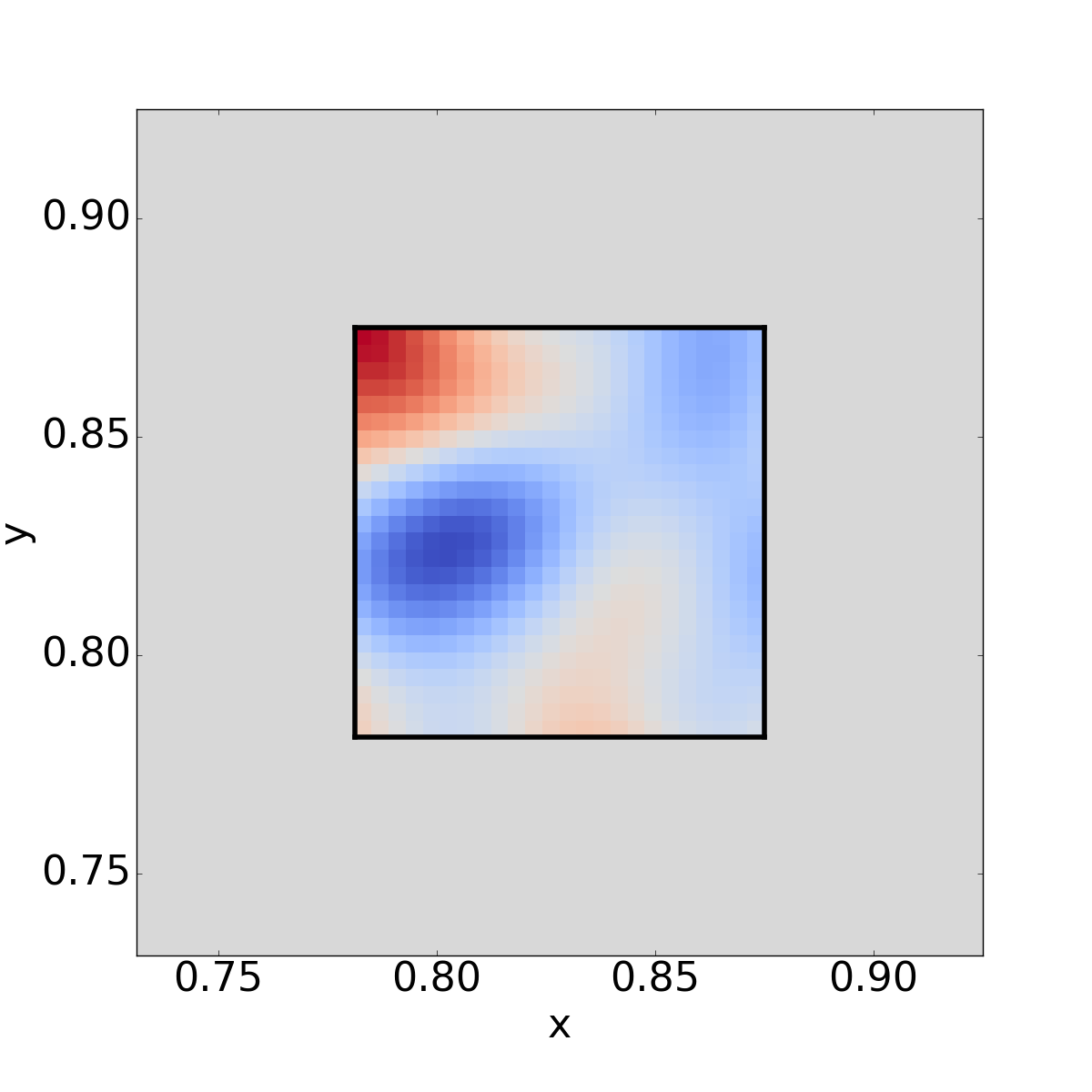}
	}
\caption{This figure includes a visual description of what the algorithm sees as its input. The first column corresponds to the system at $t_0$ and the second column corresponds to the system at $t_1$. The first row is the full state, which is not known to the user; the two highlighted blocks are what is actually given. In the second row, the first block and its evolution are shown and in the third row the second block and its evolution is shown. }
 \label{fig:num1}
 \end{figure}

An outline of the numerical method is provided Algorithm~\ref{algorithm}. The data, $u(t;k)\in \mathbb{R}^n$, is given at two consecutive time-steps $t=t_0$ and $t=t_1$, and each component of $u(t_0;k)$ is i.i.d. uniform. The number of samples must satisfy Equation~\eqref{eq:numburst}. First, the data must be arranged into the data matrix $U$ using the cyclic permutation construction, as detailed in Sections~\ref{section:oned}, \ref{section:higher}, and \ref{section:local}. Then, the data matrix is transformed so that each element is ranged in the interval $[-1,1]$. Using the transformed data matrix,  the Legendre dictionary matrix $A_L$ is computed using the basis described in Section~\ref{section:bod} and is normalized so that each column has unit $\ell^2$-norm. The coefficients with respect to the normalized Legendre dictionary is computed by solving Problem \eqref{eq:lbp} via the Douglas-Rachford method. The last step is to map the coefficients with respect to the normalized Legendre dictionary to the standard monomial basis. As an optional step, the problem $Ac=V$ with respect to the monomial dictionary can be re-solved by restricting it to the support set computed from the main algorithm. In particular, let $c$ be the output from Algorithm~\ref{algorithm} and $S=\supp(c)$, then the solution can be refined by solving the reduced system $A|_S \widetilde{c}=V$ (see also Proposition~\ref{prop2}).

\section{Computational Results}\label{sec:computing}

The method and algorithm are validated on a high-dimensional ODE as well as two finite dimensional evolution equations that arise as the discretization of nonlinear PDEs with synthetic data. In each case, the initial data is perturbed by a small amount of uniform noise. For the 2D examples, it is assumed that there exists a block of size $n\times n$ of the data which is nearly uniformly distributed in $[-1,1]^{n\times n}$ (possibly up to translation and rescaling). Similarly for the high-dimensional ODE case, one can restrict to a subset of the components.   Therefore, the input data to Problem \eqref{eq:lbp} is restricted to the block (see Figure \ref{fig:num1}; the restriction is described in Section~\ref{section:local}). It is important to note that the data restricted onto the blocks are not necessarily uniformly random; they may contain some slope. However, we assume that the dominate statistics are close to the uniform measure. In each of the examples, we apply the Douglas-Rachford algorithm described in Section~\ref{sec:numerics}, with the parameter $\sigma>0$ determined beforehand.

\subsection{The Lorenz 96 Equation} \label{sec: lorenz}

For the first example, we consider the Lorenz 96 equation:
\begin{equation*}
\dot{u}_j = -u_{j-2}\, u_{j-1} + u_{j-1} \, u_{j+1} - u_j + F, \quad j=1,2,\ldots,n,
\label{eqn:lorenz96}
\end{equation*}
for $j=1,\dots,n$ with periodic conditions $u_{-1}=u_{n-1}$, $u_0=u_n$, and $u_{n+1}=u_1$. We simulate the data using the forward Euler method with $n=128$ and $F=8$. The simulation is performed with a finer time-step $dt=5\times 10^{-5}$, but we only record the solution at the two time-stamps, the initial time $t_0 = 0$ and the final time $t_1=10^{-2}$. Let 
\begin{align*}
u(t) = \begin{pmatrix}u_1(t), u_2(t), \ldots, u_n(t) \end{pmatrix}^T\in\mathbb{R}^n,
\end{align*}
and set the initial data to be $u(0) = \nu$, where $\nu$ is sampled from the uniform distribution in $[-1,1]^n$. Assume that the input data is corrupted by additive Gaussian noise, and the resulting measurements are denoted by $\widetilde u$, \text{i.e.},
\begin{align*}
\widetilde u = u + \eta, \quad \eta \sim \mathcal{N}(0,\text{var}).
\end{align*}
To construct the velocity vector $V$, we use the following approximation of $\dot u$:
\begin{align*} 
\dot u_i(t_0) := 
\dfrac{\widetilde u_i(t_1)-\widetilde u_i(t_0)}{dt}, \quad i=1,2,\ldots,n.
\end{align*}
In this example, we vary the variance of the additive noise and the size of the dictionary, and compare the accuracy of the recovery under different noise levels and dictionary sizes. The results are provided in Section \ref{sec:resultsdiscuss}.

\subsection{Viscous Burgers' Equation} \label{sec: burgers}

Consider a 2D variant of the viscous Burgers' Equation:
\begin{align*}
u_t =\alpha\Delta u+ u\, u_x+u\, u_y,
\end{align*}
where $\Delta$ is the Laplacian operator and is defined by $\Delta u = u_{xx} + u_{yy}$, and $\alpha>0$ is the viscosity. The equation is spatially invariant and well-posed, and thus there exists a discretization that yields a finite dimensional system that satisfies the cyclic structural condition. In particular, we simulate the data using the finite dimensional semi-discrete system:
\begin{align*}
\dot{u}_{i,j} = \alpha\dfrac{u_{i+1,j}+u_{i-1,j}+u_{i,j+1}+u_{i,j-1}-4u_{i,j}}{h^2} + \dfrac{\left(u_{i+1,j}\right)^2-\left(u_{i-1,j}\right)^2}{4h} + \dfrac{\left(u_{i,j+1}\right)^2-\left(u_{i,j-1}\right)^2}{4h},
\end{align*}
for $i,j=1,2,\dots,n$, where $h$ is the grid spacing of the spacial domain, and $n=1/h$. For $\alpha$ large enough (relative to $h$), this semi-discrete system is convergent. Note that this nonlinear evolution equation is $9$-sparse with respect to the standard monomial basis in terms of $u_{i,j}$.  We simulate the data using the discrete system above with a $128\times 128$ grid, \textit{i.e.} $h=1/128$, and $\alpha=10^{-2}$. This choice of $\alpha$ allows for both nonlinear and diffusive phenomena over the time-scale that we are sampling. The simulation is performed with a finer time-step $dt=5\times 10^{-8}$, but the solution is only recorded at the two time-stamps, the initial time $t_0 = 0$ and the final time $t_1=10^{-5}$.

The results appear in Figure~\ref{fig:burgers}. The initial data is plotted in Figures~\ref{fig: burger_initial_withblock}-\ref{fig: burger_initial}, and is given by:
\begin{align*}
u_0(x,y) = 50\sin(8\pi (x-0.5)) \ \exp\left(-\frac{(x-0.5)^2+(y-0.5)^2}{0.05}\right) + \nu,
\end{align*}
where $\nu$ is sampled from the uniform distribution in $[-1,1]^{128\times128}$. To construct the velocity vector $V$, we use the following approximation of $\dot u$:
\begin{align} 
\dot u_{i,j}(t_0) := 
\dfrac{u_i(t_1)-u_i(t_0)}{dt}, \quad i,j=1,2,\ldots,n. \label{eq: approximation udot}
\end{align}
The input to the algorithm is a block of size $7\times 7$. For display purpose, we mark in Figure~\ref{fig: burger_initial_withblock} the location of the block which is used as the input. The learned equation is given by:
\begin{align}
\dot{u}_{i,j} &= - 655.9404u_{i,j} + 163.3892u_{i+1,j} + 163.5089u_{i-1,j} + 163.4859u_{i,j+1} +163.5551u_{i,j-1} \nonumber \\
&\quad + 31.9211\left(u_{i+1,j}\right)^2 - 31.7654\left(u_{i-1,j}\right)^2 + 31.7716\left(u_{i,j+1}\right)^2 - 31.8849\left(u_{i,j-1}\right)^2, \label{eq: burger learned}
\end{align}
compared to the exact equation:
\begin{align}
\dot{u}_{i,j} &= - 655.36u_{i,j} + 163.84u_{i+1,j} + 163.84u_{i-1,j} + 163.84u_{i,j+1} +163.84u_{i,j-1} \nonumber \\ 
&\quad + 32\left(u_{i+1,j}\right)^2 - 32\left(u_{i-1,j}\right)^2 + 32\left(u_{i,j+1}\right)^2 - 32\left(u_{i,j-1}\right)^2. \label{eq: burger exact}
\end{align}
The correct $9$-terms are selected from the $351$ possible candidate functions.  To compare between the learned and true evolutions, we simulate the two systems up to the time of the shock formation, which is well beyond the interval of learning. Note that the qualitative difference between the two shocks is small.

\begin{figure}[b!]
\centering
\subfigure[Initial data $u_0$]{\label{fig: burger_initial_withblock}
	\includegraphics[width = 2.3 in]{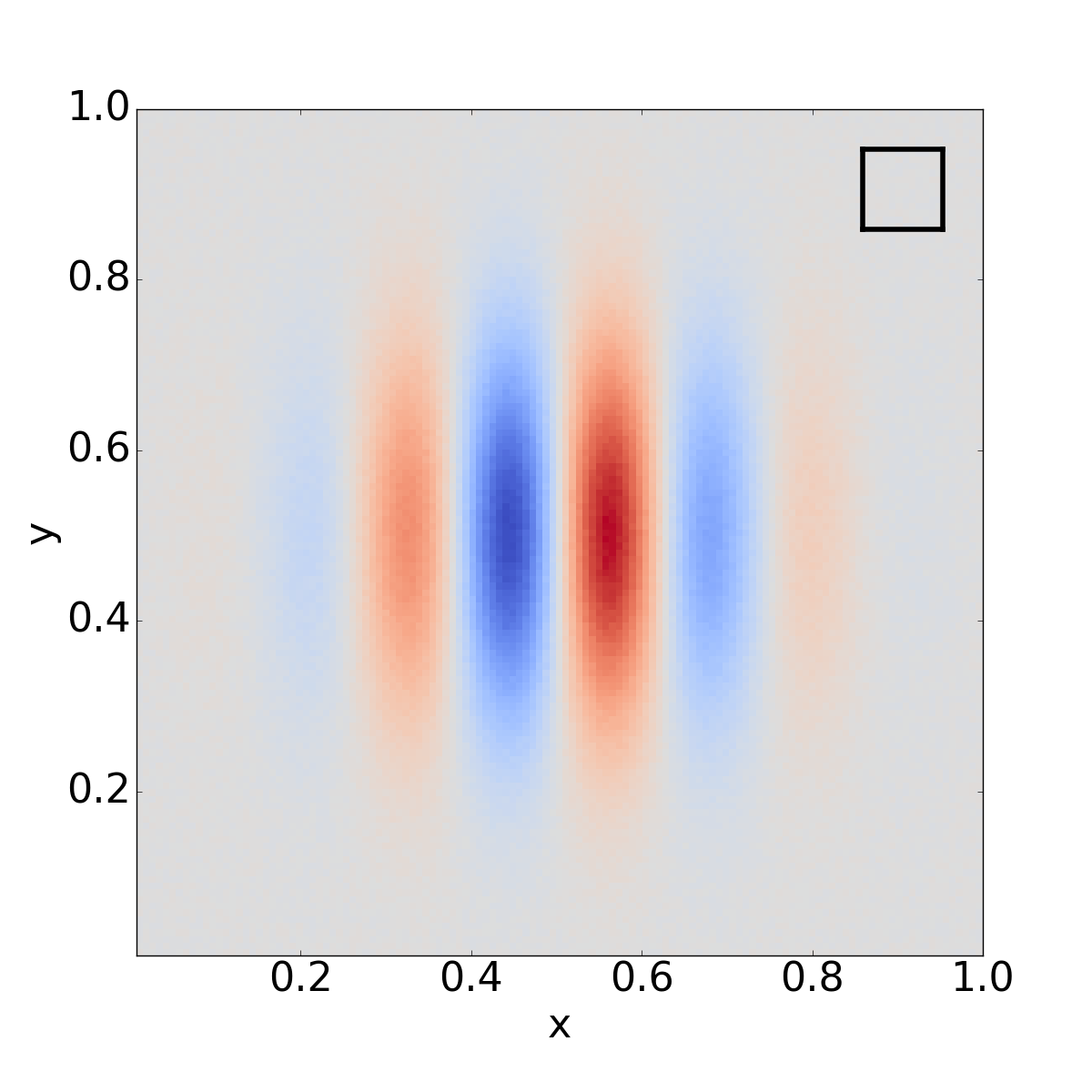}
	}
\subfigure[Initial data $u_0$]{\label{fig: burger_initial}	
	\includegraphics[width = 2.5 in]{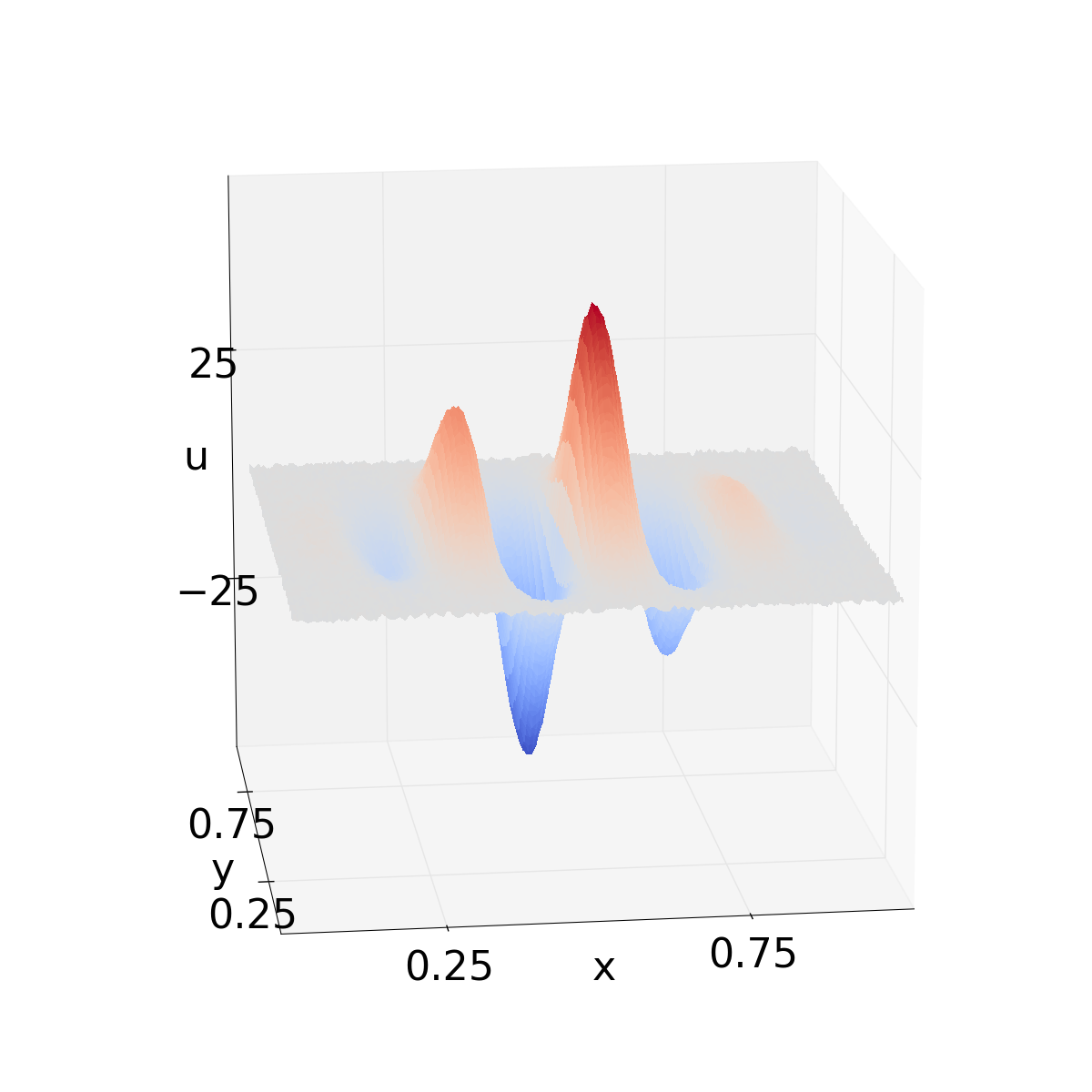}
	}
\subfigure[True evolution]{\label{fig: burger_realevolution}    
	\includegraphics[width = 2.5 in]{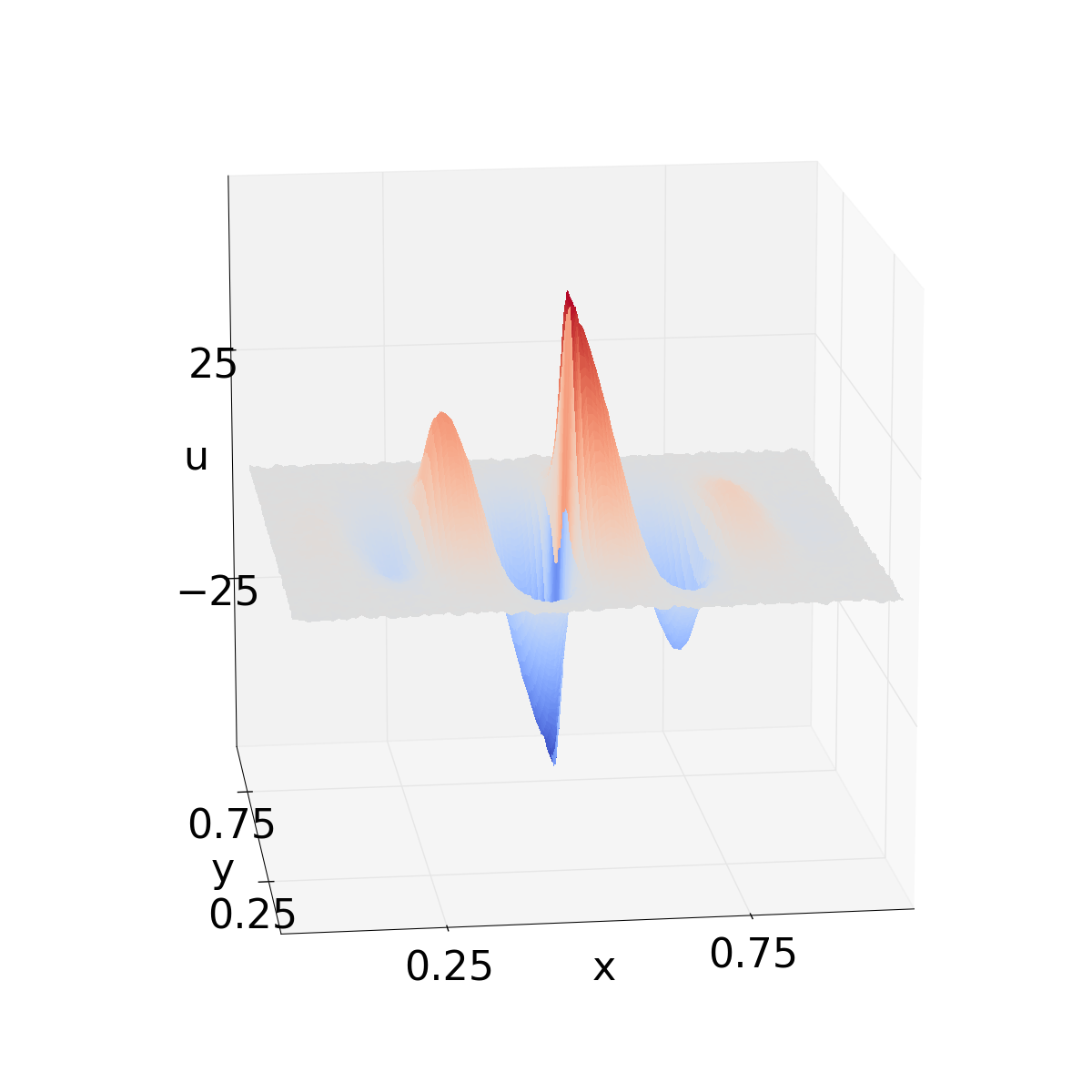}
	}
\subfigure[Learned evolution]{\label{fig: burger_learnedevolution}    
	\includegraphics[width = 2.5 in]{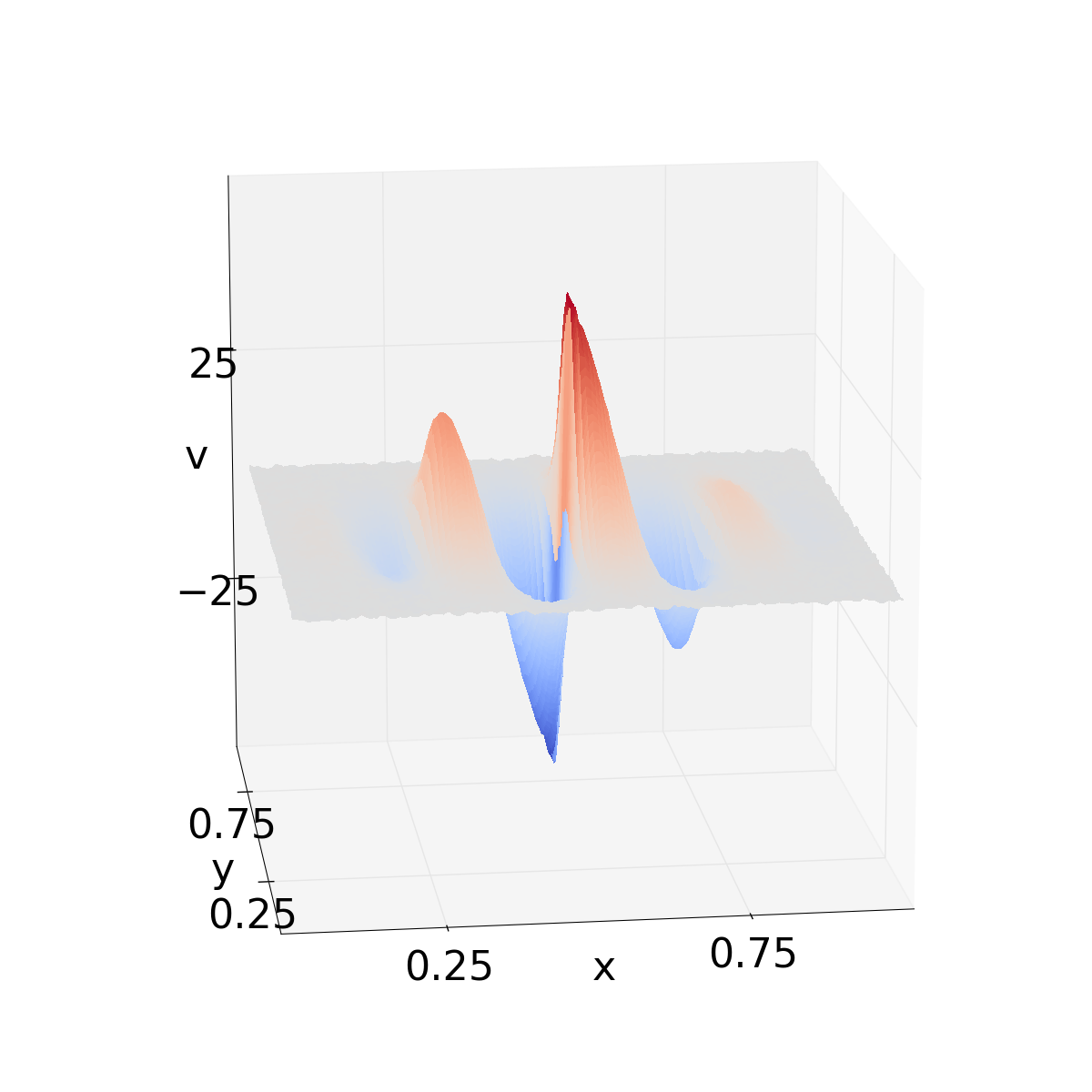}
	}
 \caption{\textbf{The Burgers's Equation}: (a) The initial data $u_0$ in a planar view; the sub-block in the boxed region is used as the input to the algorithm. (b) The initial data $u_0$ in a 3D view. (c) The true evolution at $T=10^{-3}$ using Equation \eqref{eq: burger exact}. (d) The learned evolution at $T=10^{-3}$ using Equation \eqref{eq: burger learned}. }
\label{fig:burgers}
\end{figure}

\subsection{Two Component Cubic Reaction-Diffusion Systems} \label{sec: grayscott}

Consider the 2D Gray-Scott Equation, which models a reaction-diffusion system:
\begin{align*}
u_t &= r_u\Delta u - uv^2 + f(1-u), \\
v_t &= r_v\Delta v + uv^2 - (f+k)v,
\end{align*}
where $r_u$ and $r_v$ are the diffusion rates of $u$ and $v$, respectively, $f$ is the processing rate of $u$, and $k$ represents the rate of conversion of $v$. We simulate the data using the finite dimensional semi-discrete system:
\begin{align*}
\dot{u}_{i,j} &= r_u\Delta_{h,9}u_{i,j} - u_{i,j}\left(v_{i,j}\right)^2 + f\left(1-u_{i,j}\right), \\
\dot{v}_{i,j} &= r_v\Delta_{h,9}v_{i,j} + u_{i,j}\left(v_{i,j}\right)^2 - (f+k)v_{i,j},
\end{align*}
for $i,j=1,2,\dots,n$, where $h$ is the grid spacing of the spacial domain, $n=1/h$, and $\Delta_{h,9}$ denotes the nine-point discrete Laplacian operator which is defined by:
\begin{align*}
\Delta_{h,9} u_{i,j}&= \dfrac{2}{3h^2}\left(u_{i+1,j}+u_{i-1,j}+u_{i,j+1}+u_{i,j-1}-5u_{i,j}\right) \\
&\quad + \dfrac{1}{6h^2}\left(u_{i+1,j+1}+u_{i-1,j+1}+u_{i-1,j+1}+u_{i-1,j-1}\right).
\end{align*}
Note that this nonlinear evolution equation is 12-sparse with respect to the standard monomial basis in terms of $u_{i,j}$ and is 11-sparse in terms of $v_{i,j}$.

We first present the implementation details for constructing Problem~\eqref{eq:lbp} in this setting (a system of PDEs). Given the initial data $u(t_0;k), v(t_0;k)\in\mathbb{R}^{n\times n}$, construct the data matrix $W(t_0;k)$ as follows:
\begin{align*}
W(t_0;k) = \begin{pmatrix}
 u_{1,1}(t_0;k) & u_{1,2}(t_0;k) & \cdots & u_{n,n}(t_0;k) & v_{1,1}(t_0;k) & v_{1,2}(t_0;k) & \cdots & v_{n,n}(t_0;k)\\
u_{1,2}(t_0;k) & u_{1,3}(t_0;k) & \cdots & u_{n,1}(t_0;k) & v_{1,2}(t_0;k) & v_{1,3}(t_0;k) & \cdots & v_{n,1}(t_0;k) \\
u_{1,3}(t_0;k) & u_{1,4}(t_0;k) & \cdots & u_{n,2}(t_0;k) & v_{1,3}(t_0;k) & v_{1,4}(t_0;k) & \cdots & v_{n,2}(t_0;k)\\
\vdots & \vdots &  \ddots & \vdots & \vdots & \vdots &  \ddots & \vdots \\
 u_{n,n}(t_0;k) & u_{n,1}(t_0;k) & \cdots & u_{n-1,n-1}(t_0;k) & v_{n,n}(t_0;k) & v_{n,1}(t_0;k) & \cdots & v_{n-1,n-1}(t_0;k)\\
\end{pmatrix}.
\end{align*}
Localization and restriction of $W(t_0;k)$ are performed with respect to both $u$ and $v$ independently. For example, with $n>7$, the restriction onto the indices $(i,j)\in\{3,4,5\}^2$ is given by:
\begin{align}
W(t_0;k)|_{9-pnts, restricted} = \begin{bmatrix} U(t_0;k)|_{9-pnts, restricted} & | & V(t_0;k)|_{9-pnts, restricted}\end{bmatrix}, \label{eqn: datamatrix2variables}
\end{align}
where $U(t_0;k)|_{9-pnts, restricted}$ is given by:
\begin{align*}
\begin{pmatrix}
u_{3,3}(t_0;k) & u_{3,4}(t_0;k) & u_{3,2}(t_0;k) & u_{4,3}(t_0;k) & u_{4,4}(t_0;k) & u_{4,2}(t_0;k) & u_{2,3}(t_0;k) & u_{2,4}(t_0;k) & u_{2,2}(t_0;k) \\
u_{3,4}(t_0;k) & u_{3,2}(t_0;k) & u_{3,3}(t_0;k) & u_{4,4}(t_0;k) & u_{4,2}(t_0;k) & u_{4,3}(t_0;k) & u_{2,4}(t_0;k) & u_{2,2}(t_0;k) & u_{2,3}(t_0;k) \\
& & & & \vdots & & & & \\
u_{5,5}(t_0;k) & u_{5,6}(t_0;k) & u_{5,4}(t_0;k) & u_{6,5}(t_0;k) & u_{6,6}(t_0;k) & u_{6,4}(t_0;k) & u_{4,5}(t_0;k) & u_{4,6}(t_0;k) & u_{4,4}(t_0;k)
\end{pmatrix},
\end{align*}
and $V(t_0;k)|_{9-pnts, restricted}$ is defined in the same way using the information of $v(t_0;k)$. Thus, we have reduced the size of the data matrix from $n\times (2n)$ to $9\times 18$. The localized and restricted dictionary matrix is then built by repeating the process in Equations~\eqref{eqn:quadmatrix}-\eqref{eqn:dictionaryburstmonomial}, but using the localized and restricted data matrix described above (see Equation~\eqref{eqn: datamatrix2variables}). The velocity vectors, $V_u$ for $\dot{u}_{i,j}$ and $V_v$ for $\dot{v}_{i,j}$, are constructed as in Equation~\eqref{eqn:dictionaryvelocity}, and $\dot{u}_{i,j}$ and $\dot{v}_{i,j}$ are approximated using Equation \eqref{eq: approximation udot}. Let $A_L$ be the (localized and restricted) dictionary in the Legendre basis. With the given system of PDEs, we then need to solve two basis pursuit problems:
\begin{align*}
\min_{c'_u } \ ||c'_u||_{1}  \ \ \  \text{subject to} \ \ \|A_Lc'_u-V_u\|_2 \leq \sigma,
\end{align*}
and,
\begin{align*}
\min_{c'_v } \ ||c'_v||_{1}  \ \ \  \text{subject to} \ \ \|A_Lc'_v-V_v\|_2 \leq \sigma,
\end{align*}
where $c'_u$ and $c'_v$ are the coefficients for the governing equations for $\dot{u}_{i,j}$ and $\dot{v}_{i,j}$, respectively, in the Legendre basis. Note that $A_L$ is the same between each of the basis pursuit problems above since each equation depends on both $u$ and $v$, but the outputs ($c'_u$ and $c'_v$) are cyclic independently. Its worth noting that this example extends beyond the theoretical results, since the entire governing equation is not cyclic, but it is cyclic in the components $(u,v)$.

We simulate the data using the discrete system above with a $128\times 128$ grid, \textit{i.e.} $h=1/128$, and parameters $r_u = 0.3$, $r_v = 0.15$. We consider three different parameter sets for the Gray-Scott model, by varying the values of $f$ and $k$. The simulation is performed with a finer time-step $dt=10^{-6}$, but the solution is only recorded at two time-stamps, the initial time $t_0 = 0$ and the final time $t_1=10^{-5}$.

The initial data is shown in Figure \ref{fig:grayscottsin}, and is given by:
\begin{align*}
u_0(x,y) &= 1 + 0.2\nu, \quad v_0(x,y) = \mathbb{I}_H(x,y) + 0.02\nu,
\end{align*}
where $\nu$ is sampled from the uniform distribution in $[-1,1]^{128\times128}$, and $H\subset[0,1]^2$ represented the H-shaped region in Figure~\ref{fig: gs_initial_v}. The input to the algorithm is a block of $u$ and the corresponding block of $v$, each of size $7\times7$. For display purposes, we mark the block's location in each of Figures~\ref{fig: gs_initial_u} and \ref{fig: gs_initial_v}.

\begin{figure}[b!]
\centering
\subfigure[Initial data $u_0$]{\label{fig: gs_initial_u}
	\includegraphics[width = 2 in]{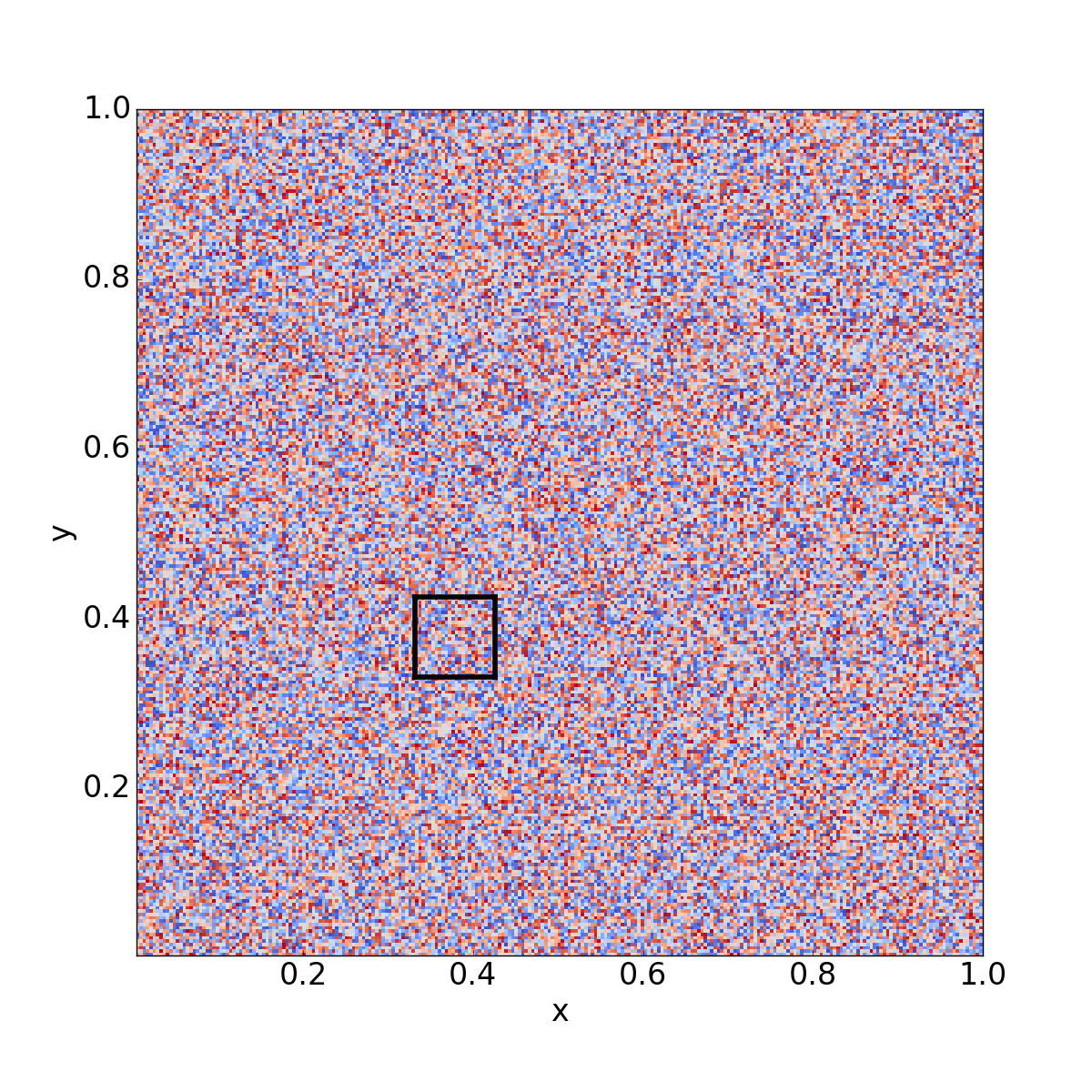}
	}
\subfigure[Sub-block of $u$ at $t_0$]{\label{fig: gs_u_block_t0}	
	\includegraphics[width = 2 in]{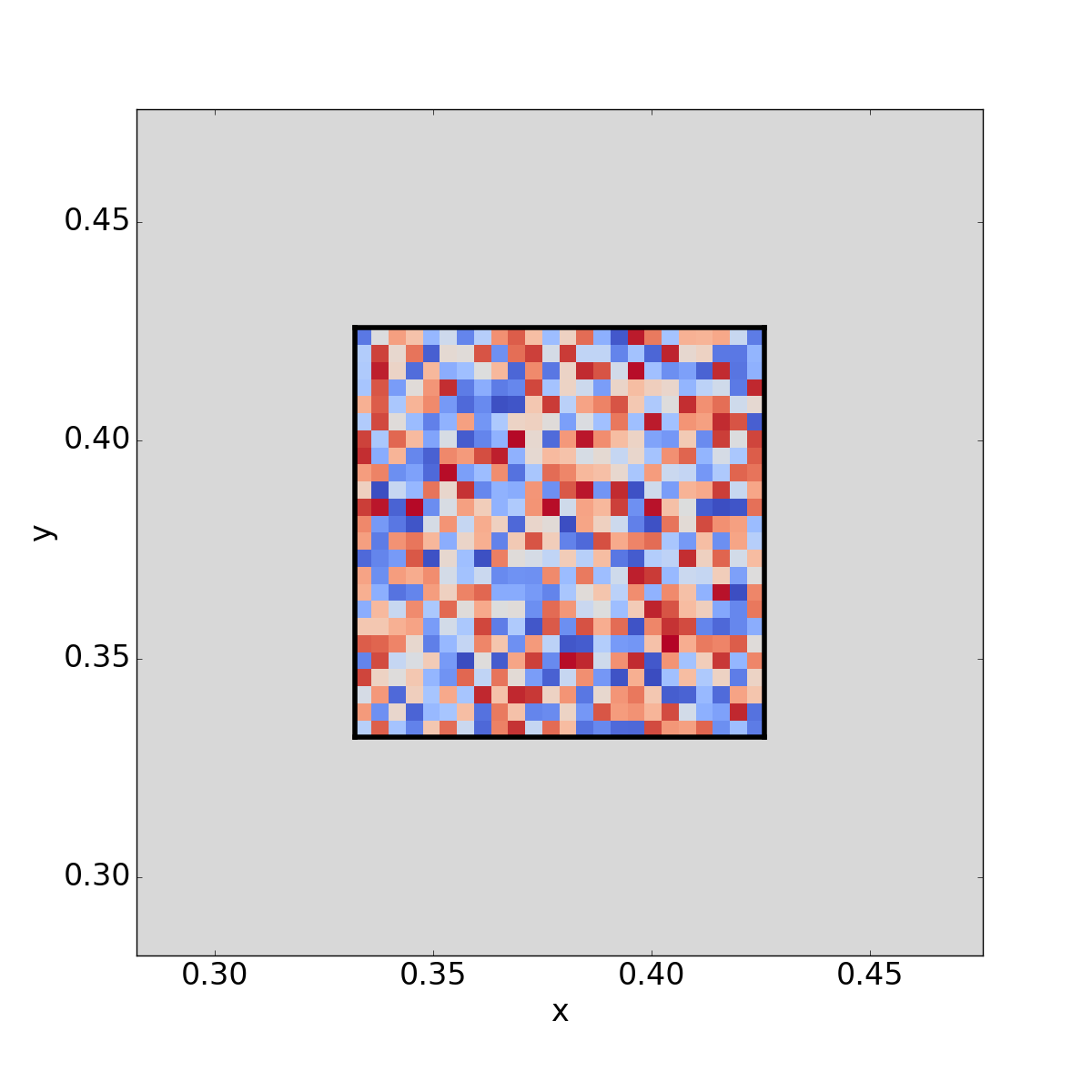}
	}
\subfigure[Sub-block of $u$ at $t_1$]{\label{fig: gs_u_block_t1}
	\includegraphics[width = 2 in]{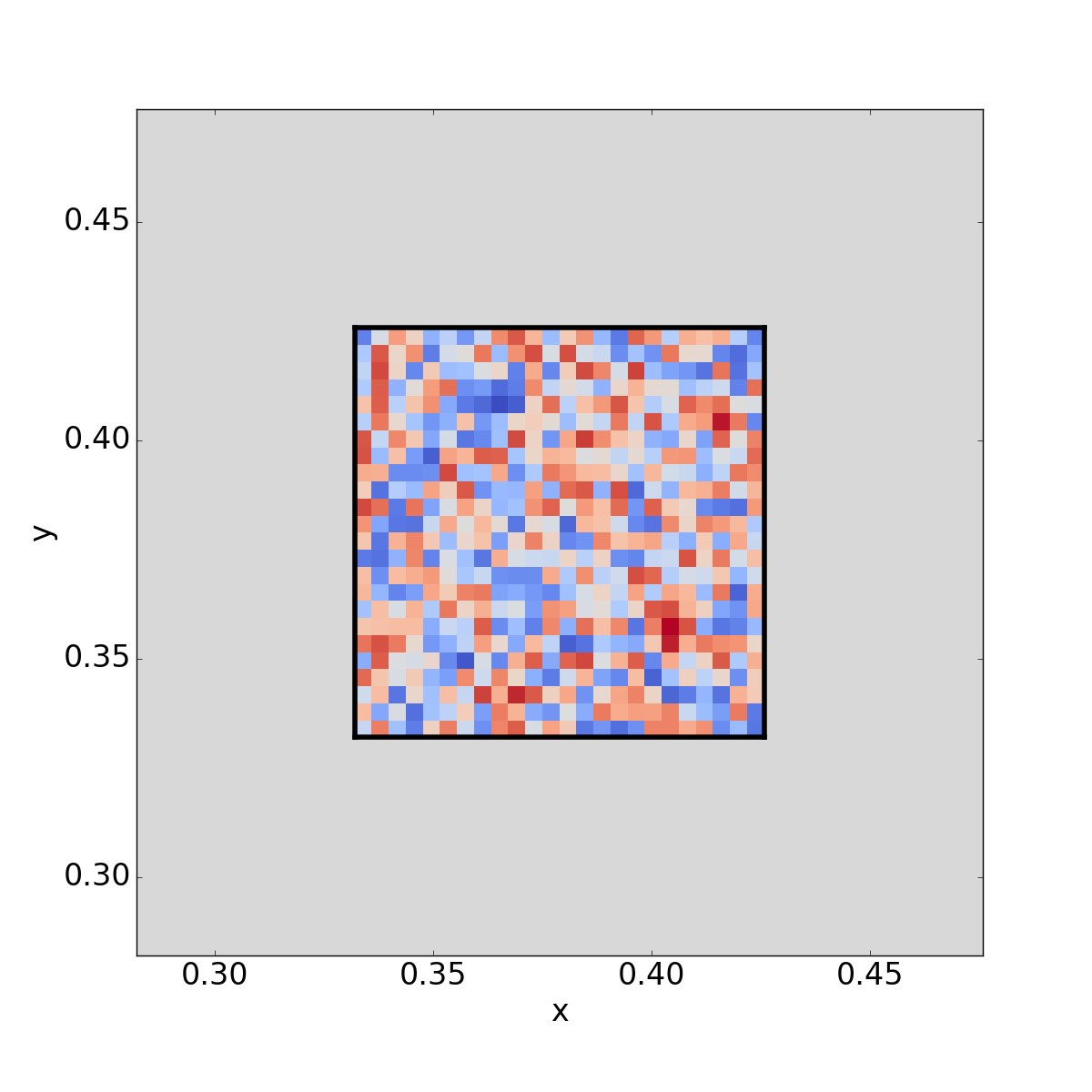}
	}
\subfigure[Initial data $v_0$]{\label{fig: gs_initial_v}	
	\includegraphics[width = 2 in]{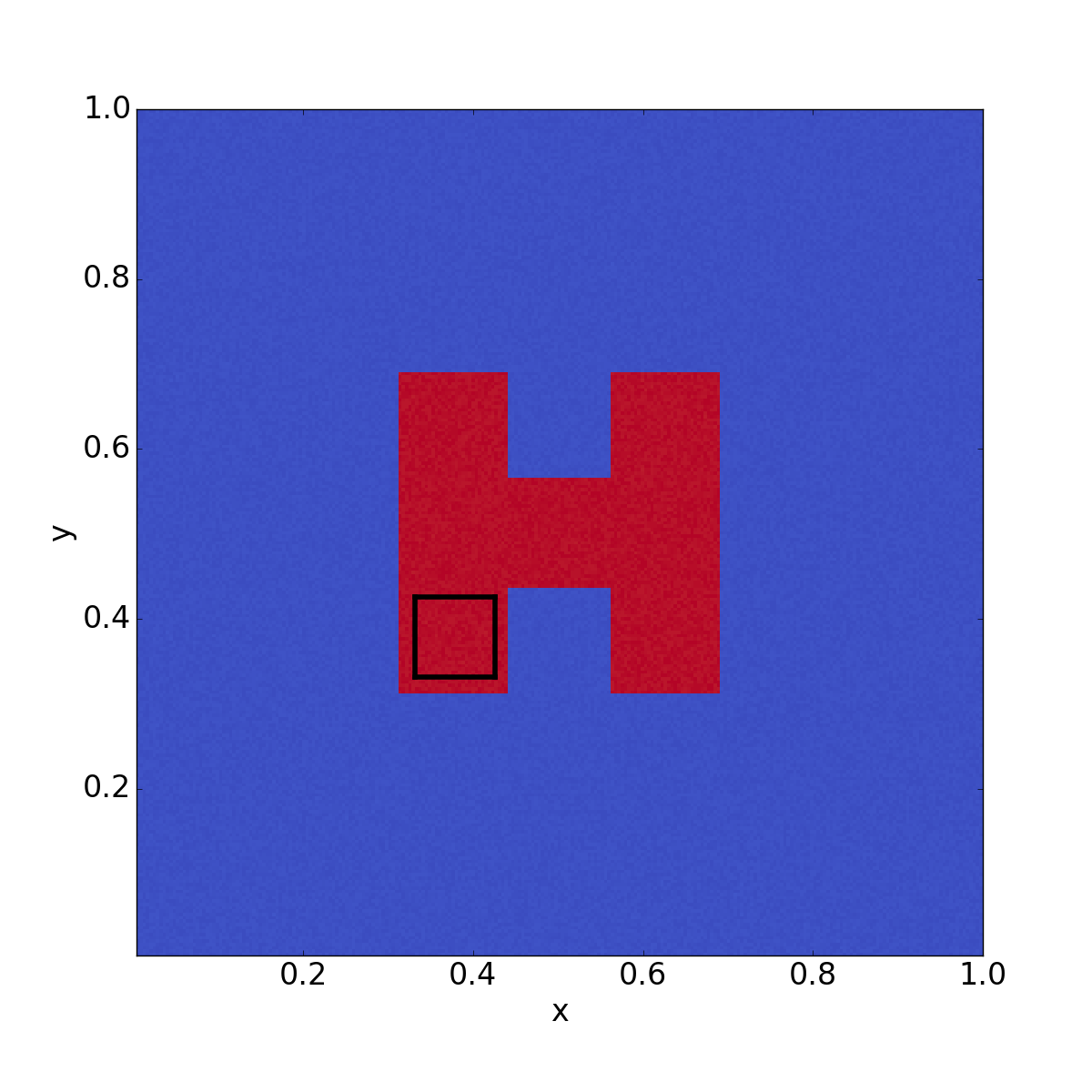}
	}
\subfigure[Sub-block of $v$ at $t_0$]{\label{fig: gs_v_block_t0}	
	\includegraphics[width = 2 in]{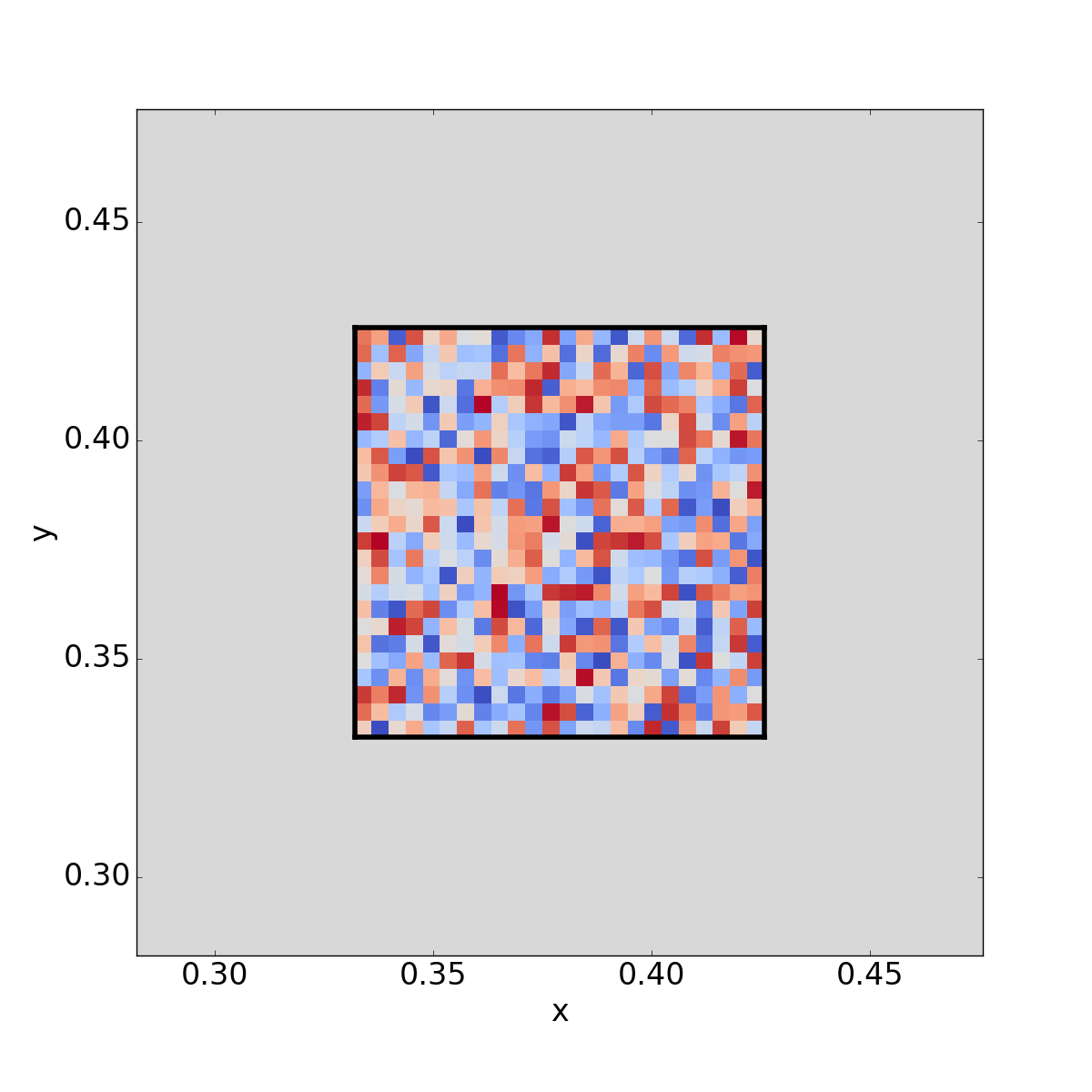}
	}
\subfigure[Sub-block of $v$ at $t_1$]{\label{fig: gs_v_block_t1}
	\includegraphics[width = 2 in]{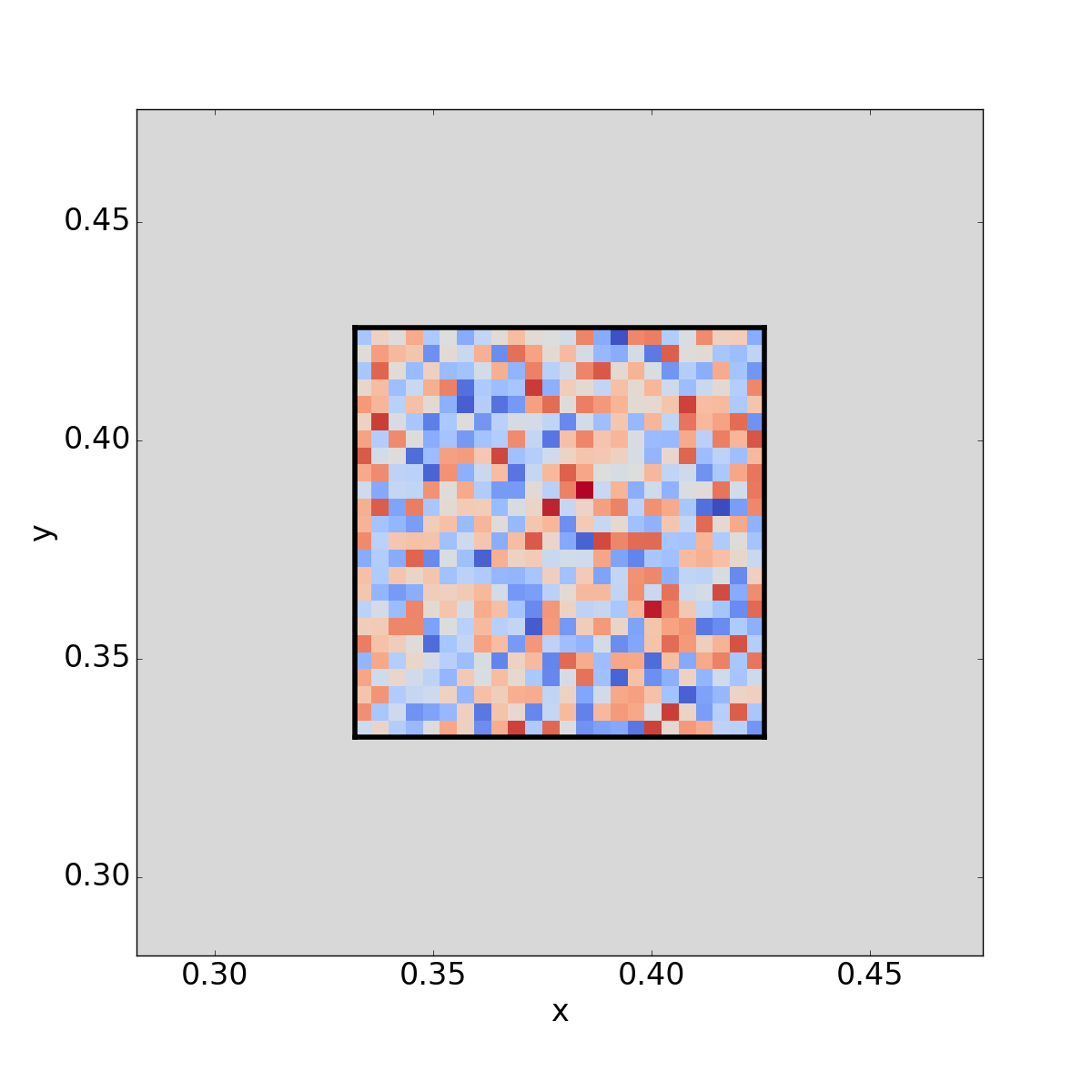}
	}
 \caption{\textbf{Initial Data for the Gray-Scott Equation}: (a)(d) The initial data $u_0$ and $v_0$; the sub-blocks in the boxed regions are used as the input to the algorithm. (b)(c) The sub-block of $u$ at time-stamps $t_0$ and $t_1$, whose measurements are used to compute $\dot{u}_{i,j}$. (e)(f) The sub-block of $v$ at time-stamps $t_0$ and $t_1$, whose measurements are used to compute $\dot{v}_{i,j}$.}
 \label{fig:grayscottsin}
 \end{figure}

For the first example, we use the parameters $f = 0.055$ and $k = 0.063$, which creates a ``coral'' pattern. The visual results are given in Figure~\ref{fig:grayscotts1}. The learned equations are:
\begin{equation}
\begin{split}
u_t &= 0.30000\Delta u - 1.00000uv^2 - 1.05500u + 0.05501, \\
v_t &= 0.15000\Delta v + 1.00000uv^2 - 0.61801 v - 0.00001,
\end{split} \label{eq: gs1 learned}
\end{equation}
compared to the exact equations:
\begin{equation}
\begin{split}
u_t &= 0.3\Delta u - uv^2 - 1.055u + 0.055, \\
v_t &= 0.15\Delta v + uv^2 - 0.618v.
\end{split} \label{eq: gs1 exact}
\end{equation}
To compare between the learned and true evolutions, we simulate the two systems up to time-stamp $T=5000$, well past the interval of learning. It is worth nothing that two evolutions are close (see Section~\ref{sec:resultsdiscuss} for errors).

\begin{figure}[b!]
\centering
\subfigure[True evolution of $u$]{\label{fig: gs1_u_realevolution}
	\includegraphics[width = 2 in]{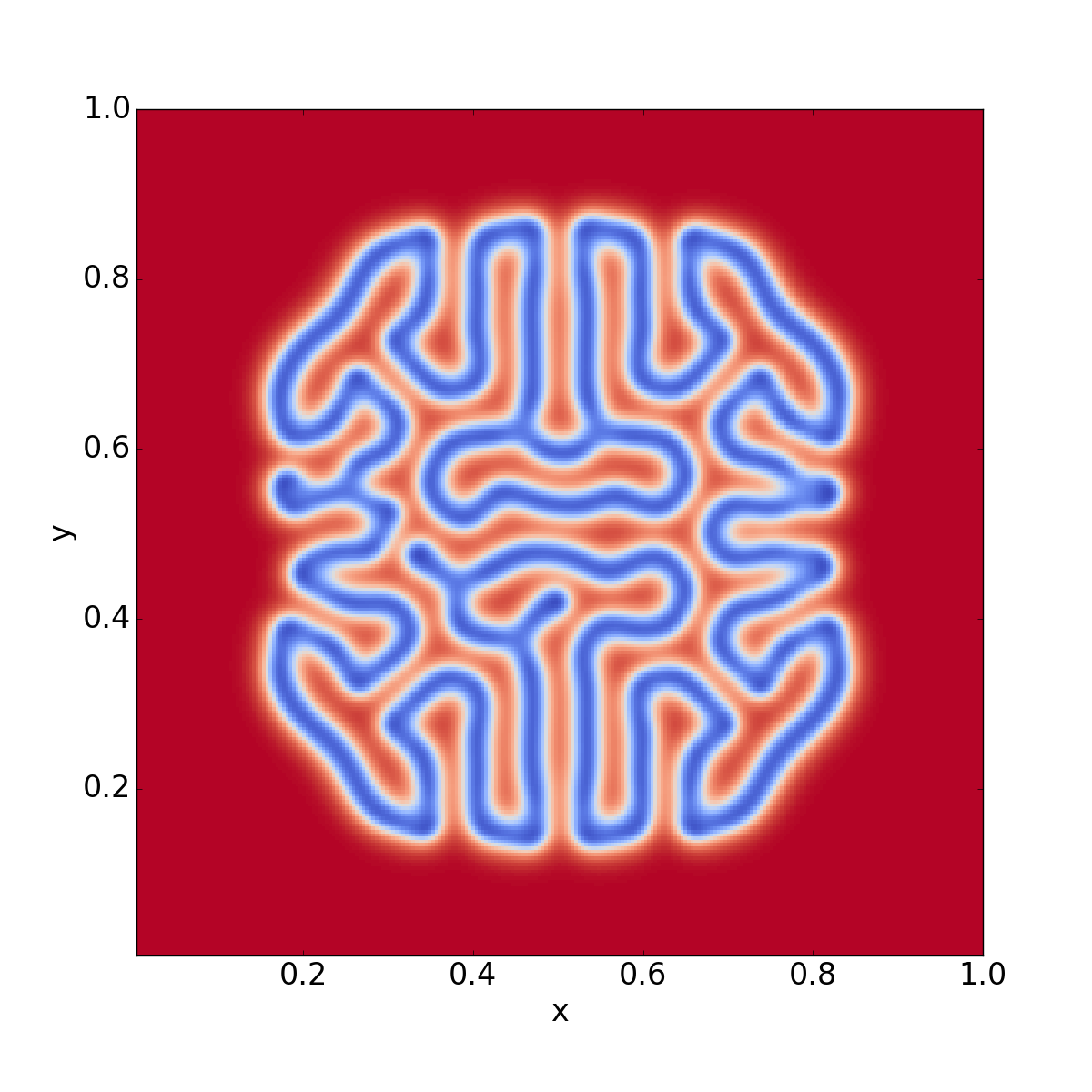}
	}
\subfigure[Learned evolution of $u$]{\label{fig: gs1_u_learnedevolution}	
	\includegraphics[width = 2 in]{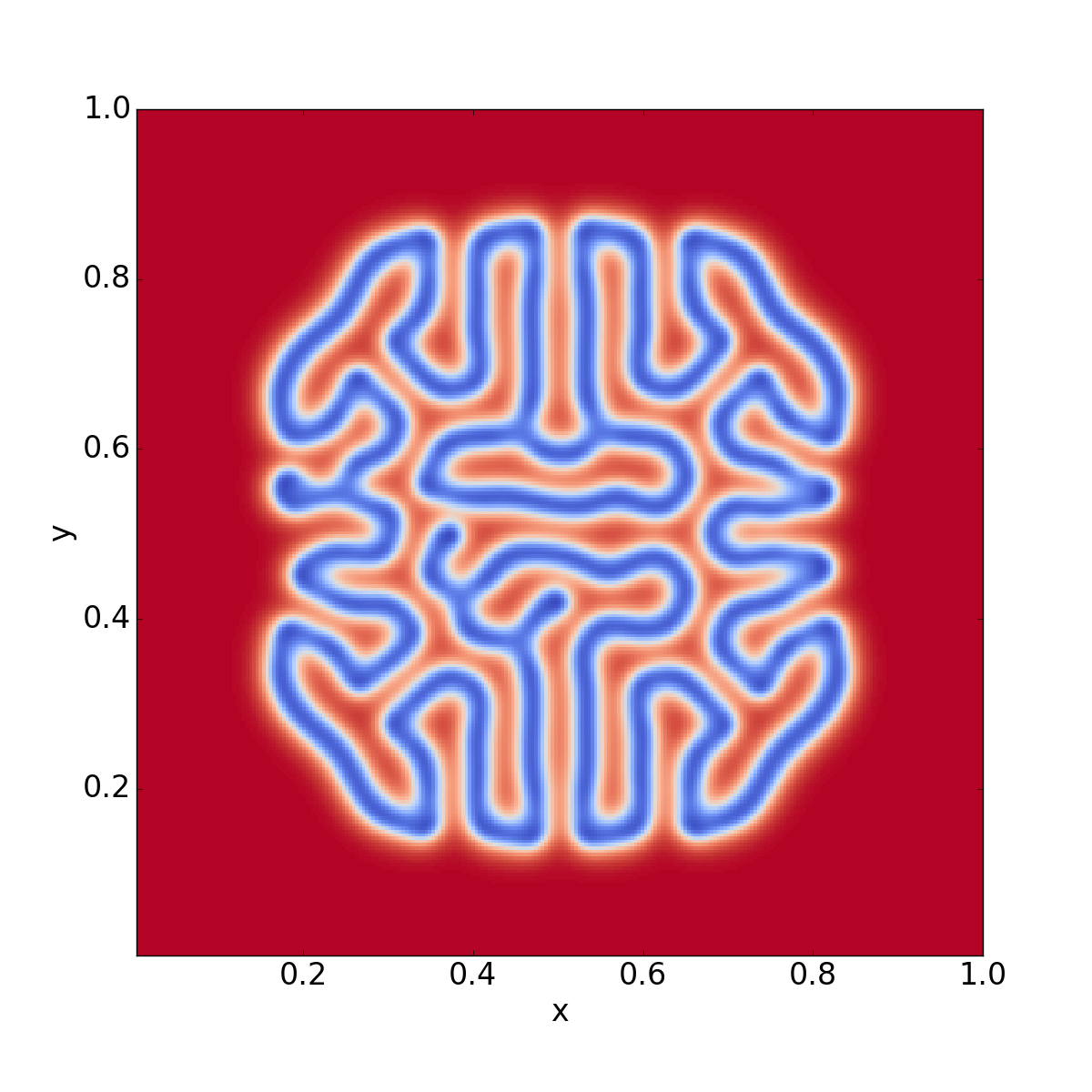}
	}
\subfigure[Difference in $u$]{\label{fig: gs1_u_difference}	
	\includegraphics[width = 2 in]{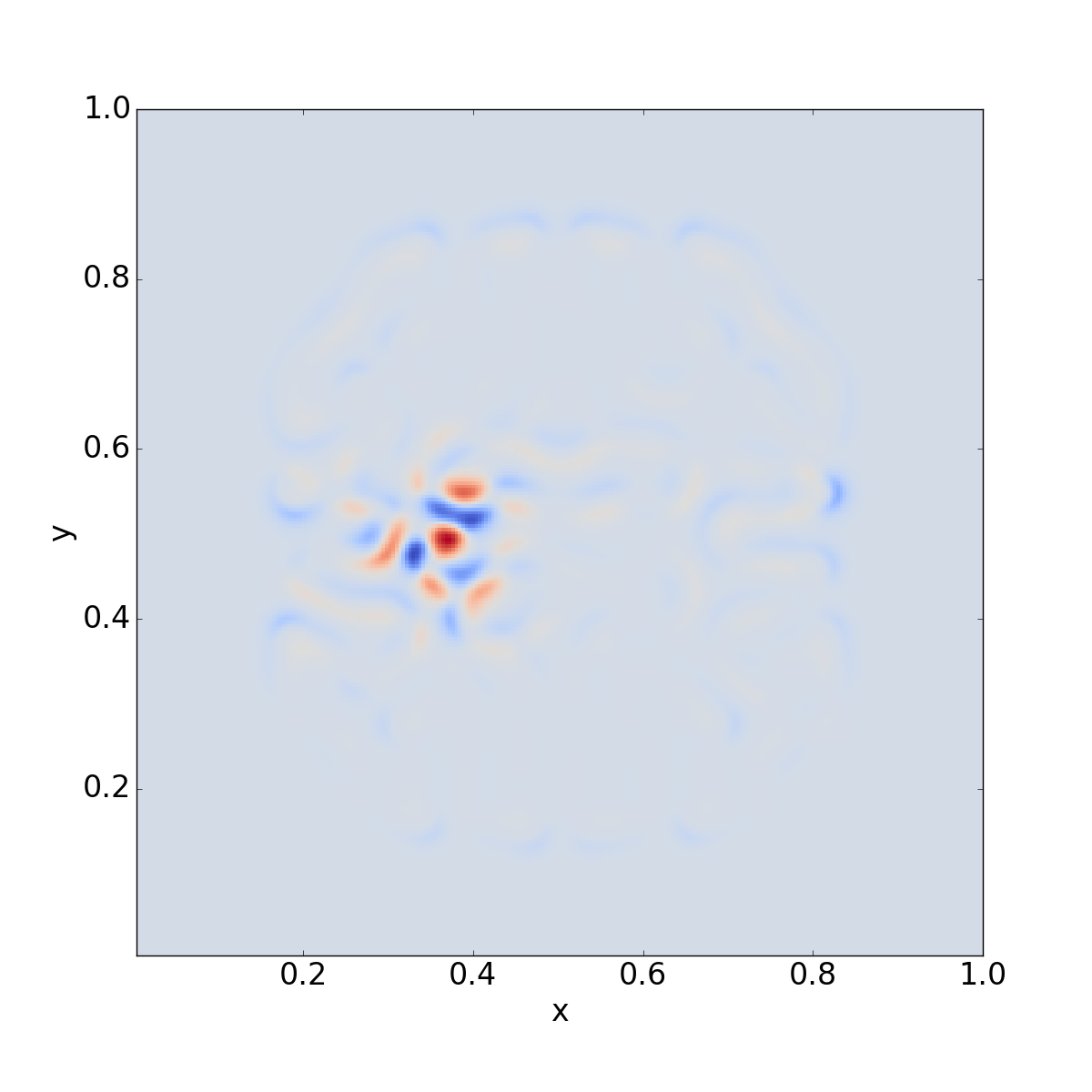}
	}
\subfigure[True evolution of $v$]{\label{fig: gs1_v_realevolution}	
	\includegraphics[width = 2 in]{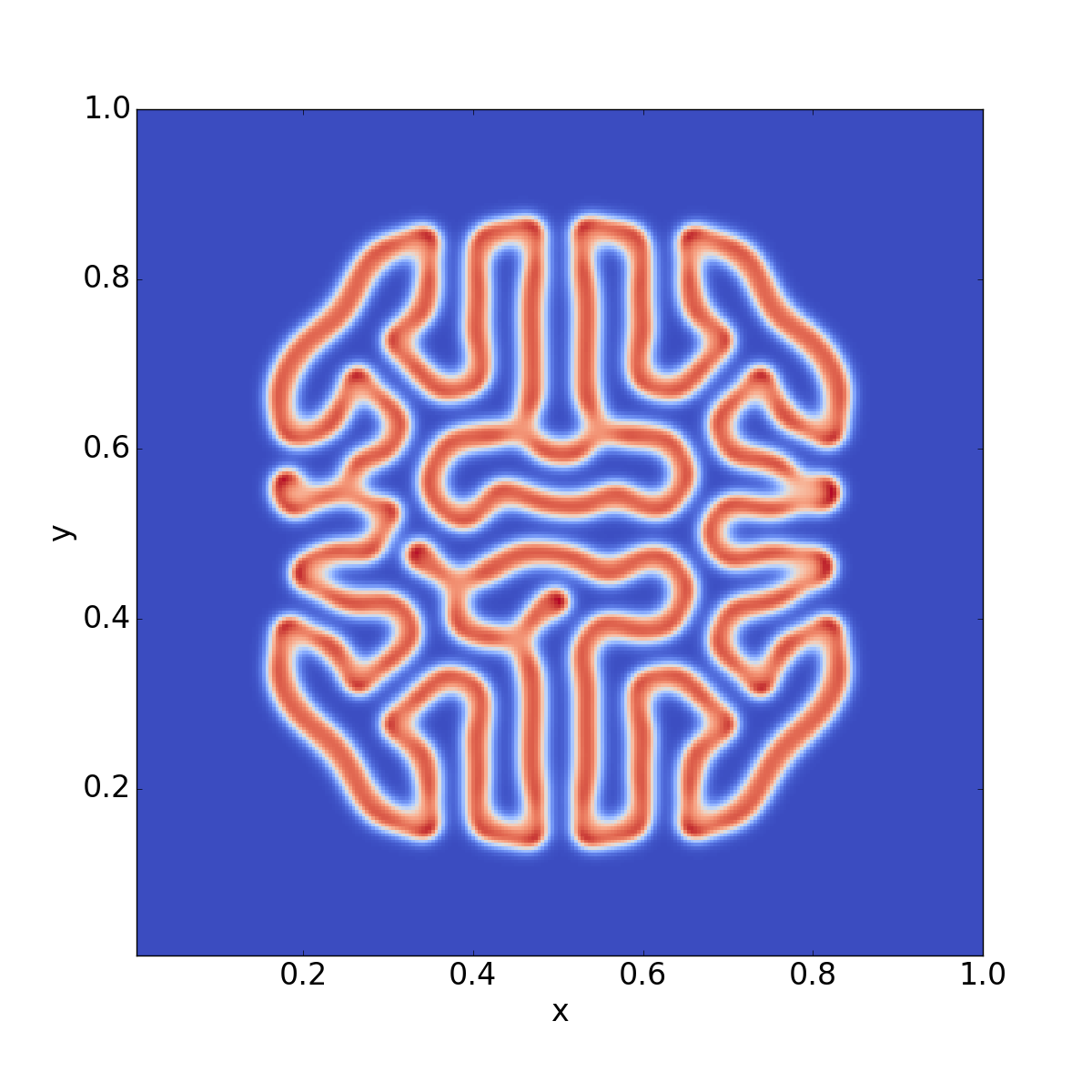}
	}
\subfigure[Learned evolution of $v$]{\label{fig: gs1_v_learnedevolution}	
	\includegraphics[width = 2 in]{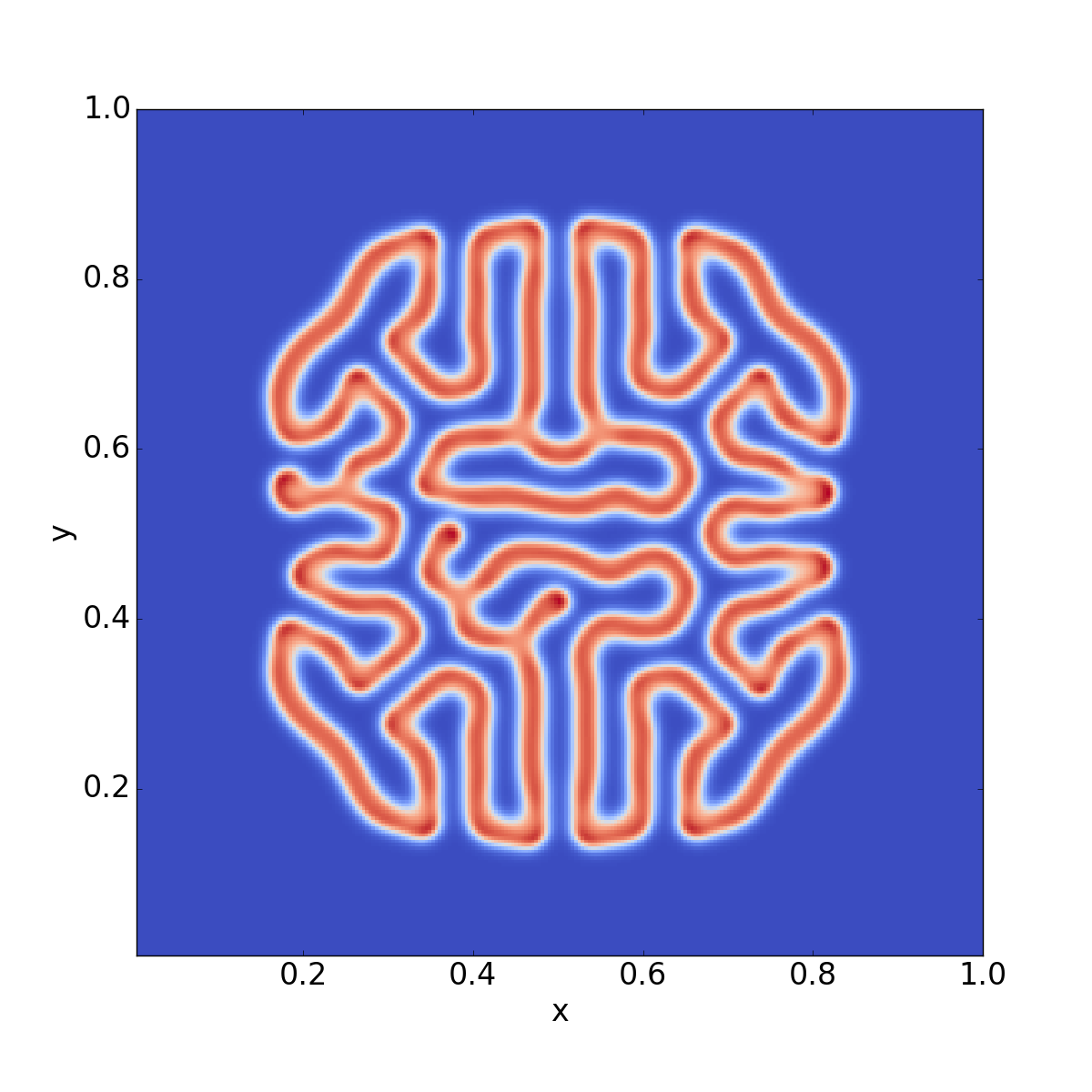}
	}
\subfigure[Difference in $v$]{\label{fig: gs1_v_difference}		
	\includegraphics[width = 2 in]{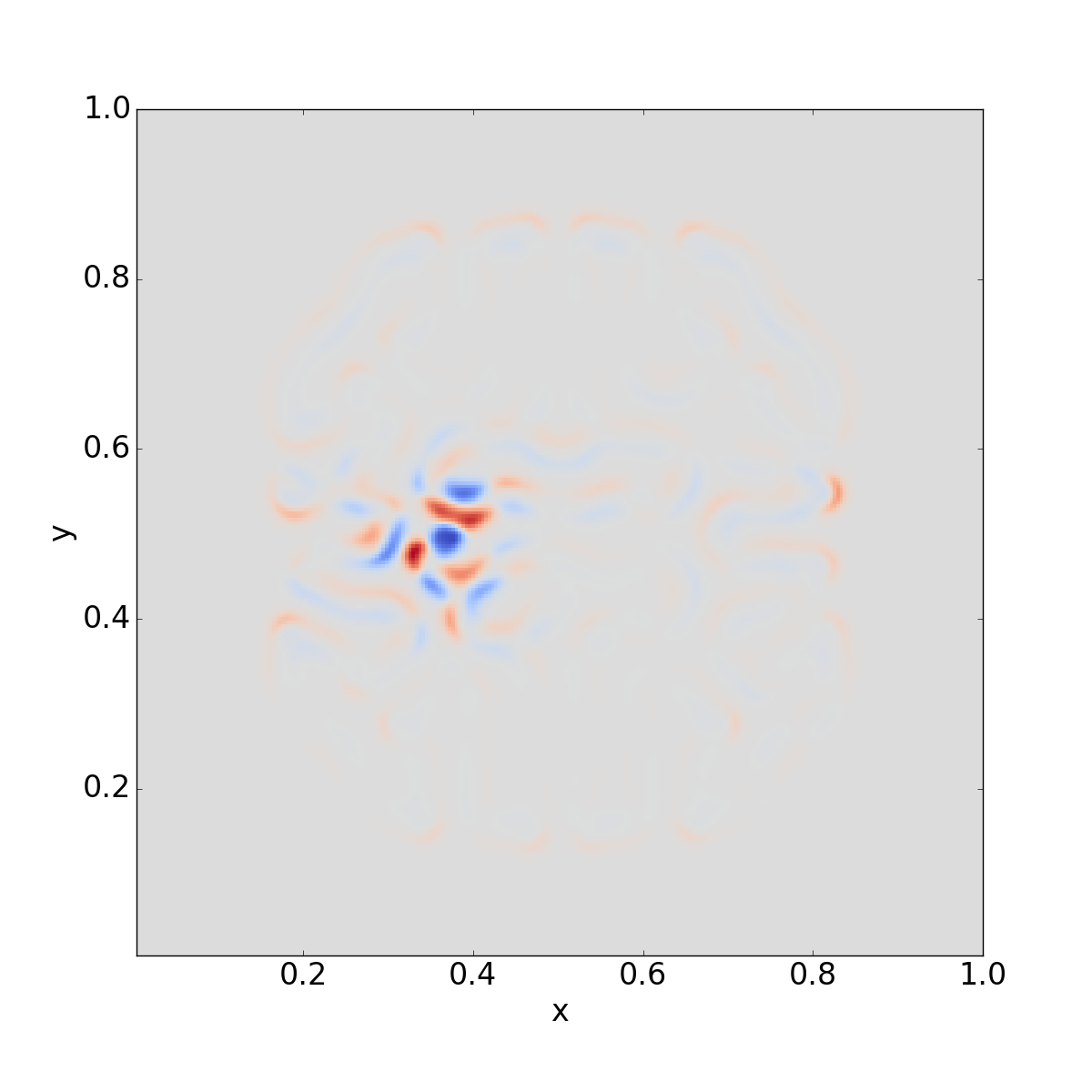}
	}
 \caption{\textbf{The Gray-Scott Equation, Example 1}: (a)(d) The true evolution at $T=5000$ using Equation \eqref{eq: gs1 exact}. (b)(e) The learned evolution at $T=5000$ using Equation \eqref{eq: gs1 learned}. (c)(f) The difference between the true evolution and the learned evolution. Note that the patterns are very similar except at a small region near the center.}
\label{fig:grayscotts1}
\end{figure}

In the second example, we use the parameters $f = 0.026$ and $k = 0.053$, which yields a hexagonal pattern. The visual results are given in Figure~\ref{fig:grayscotts2}. The learned equations is:
\begin{equation}
\begin{split}
u_t &= 0.30000\Delta u - 1.00000uv^2 - 1.02600u + 0.02601, \\
v_t &= 0.15000\Delta v + 1.00001uv^2 - 0.57901 v - 0.00001,
\end{split} \label{eq: gs2 learned}
\end{equation}
compared to the exact equations:
\begin{equation}
\begin{split}
u_t &= 0.3\Delta u - uv^2 - 1.026u + 0.026, \\
v_t &= 0.15\Delta v + uv^2 - 0.579v.
\end{split} \label{eq: gs2 exact}
\end{equation}
As before, to compare between the learned and true evolutions, we simulate the two systems up to time-stamp $T=2500$, beyond the learning interval. Small errors in the coefficient lead to some error in the pattern formulation; however, visually the simulations are similar.

\begin{figure}[b!]
\centering
\subfigure[True evolution of $u$]{\label{fig: gs2_u_realevolution}
	\includegraphics[width = 2 in]{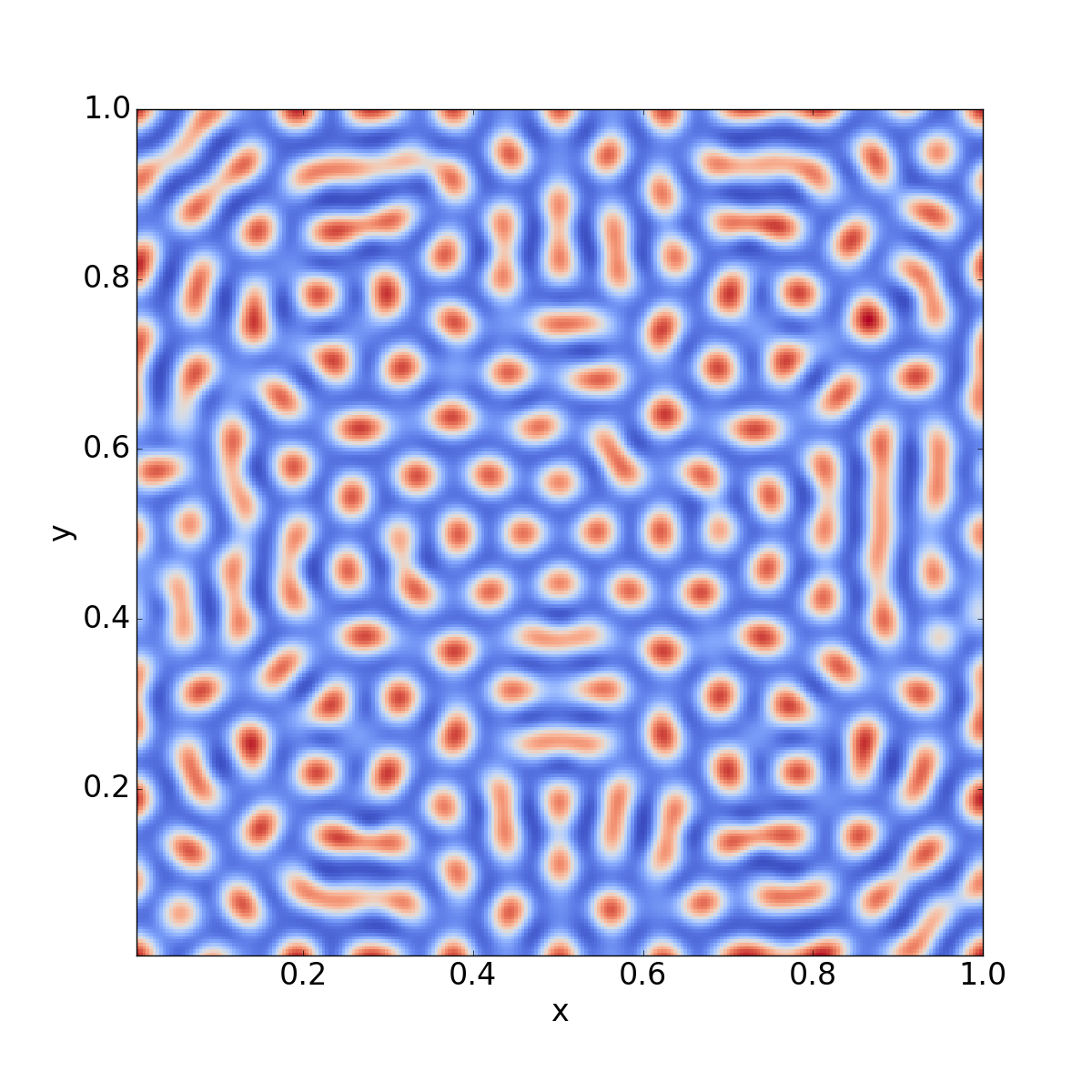}
	}
\subfigure[Learned evolution of $u$]{\label{fig: gs2_u_learnedevolution}	
	\includegraphics[width = 2 in]{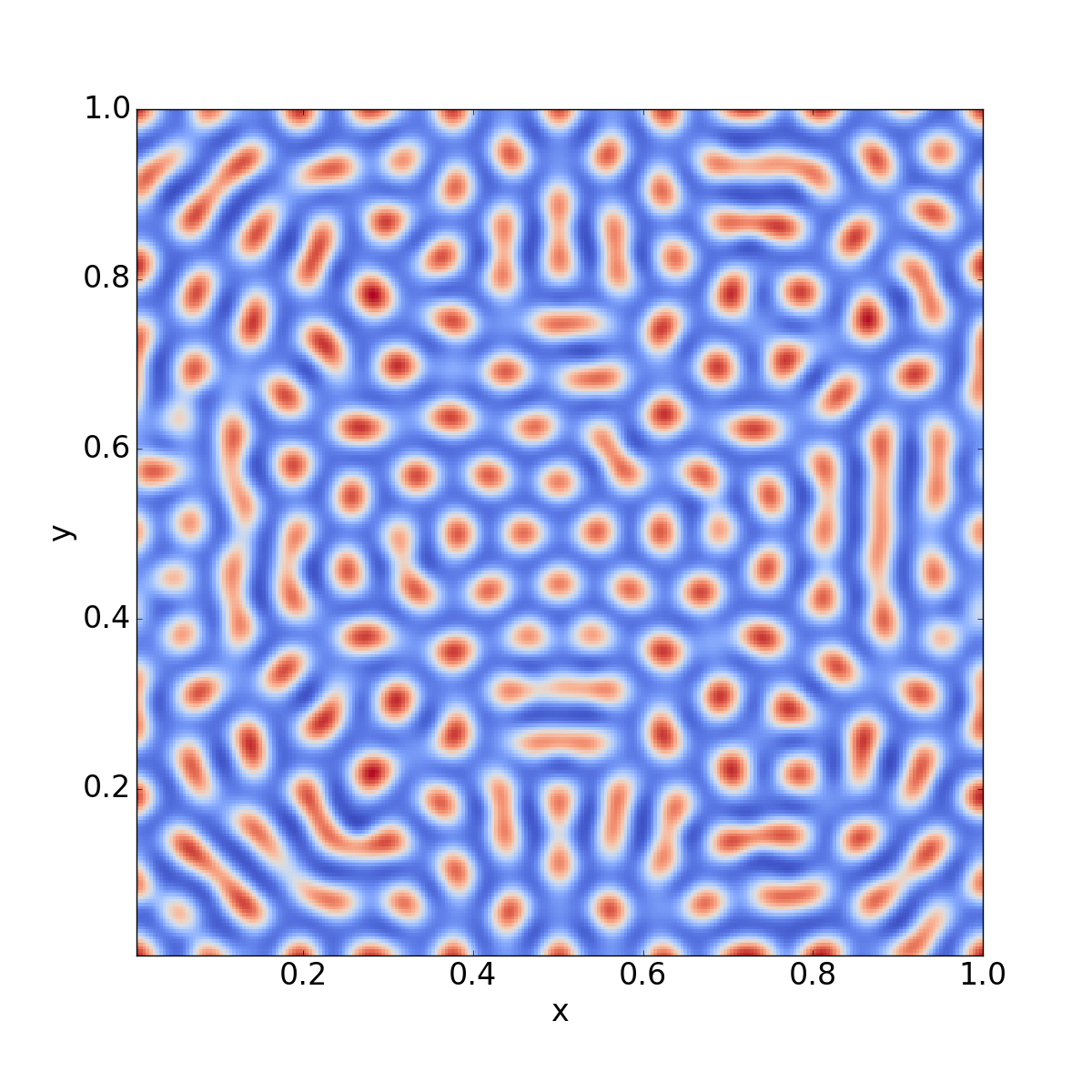}
	}
\subfigure[Difference in $u$]{\label{fig: gs2_u_difference}	
	\includegraphics[width = 2 in]{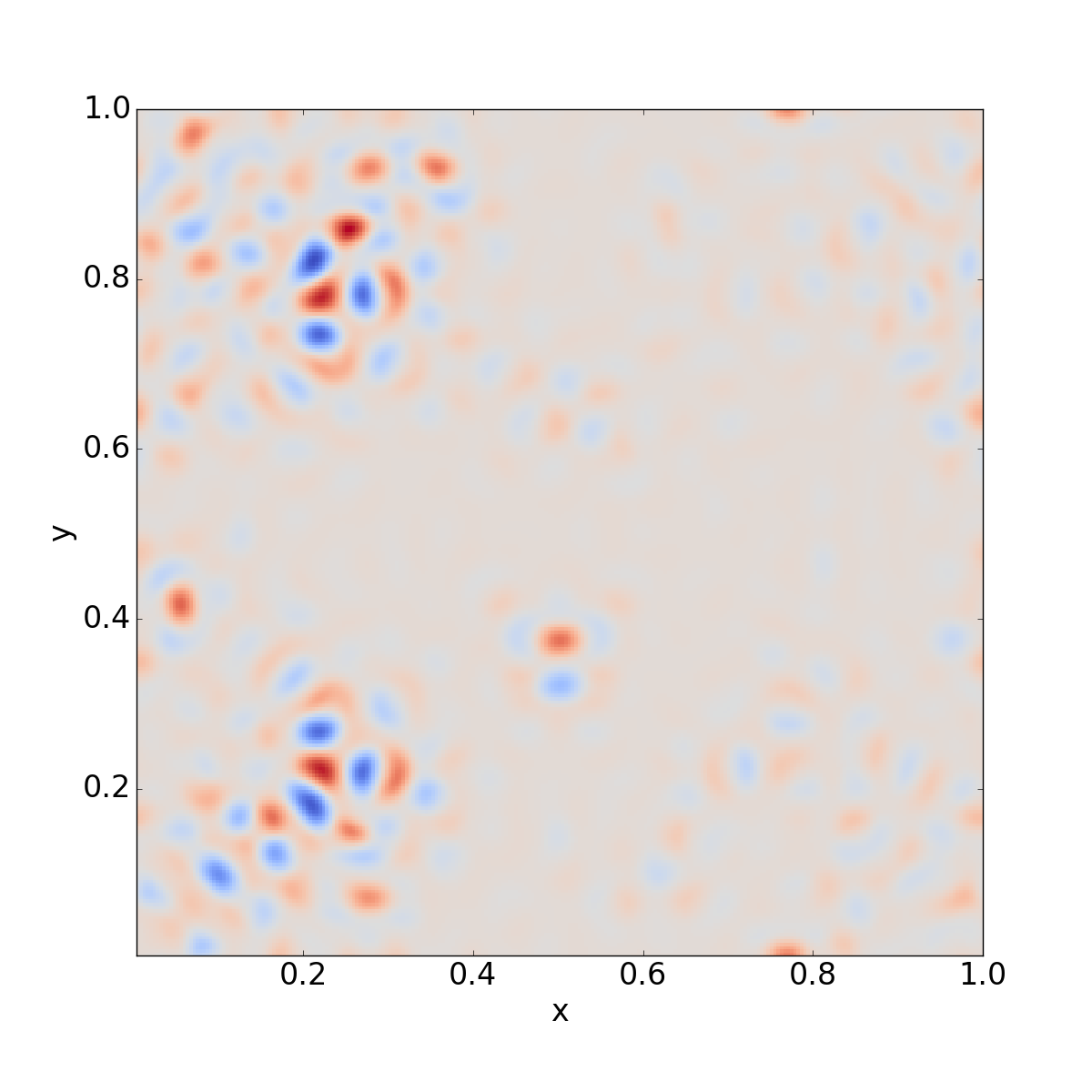}
	}
\subfigure[True evolution of $v$]{\label{fig: gs2_v_realevolution}	
	\includegraphics[width = 2 in]{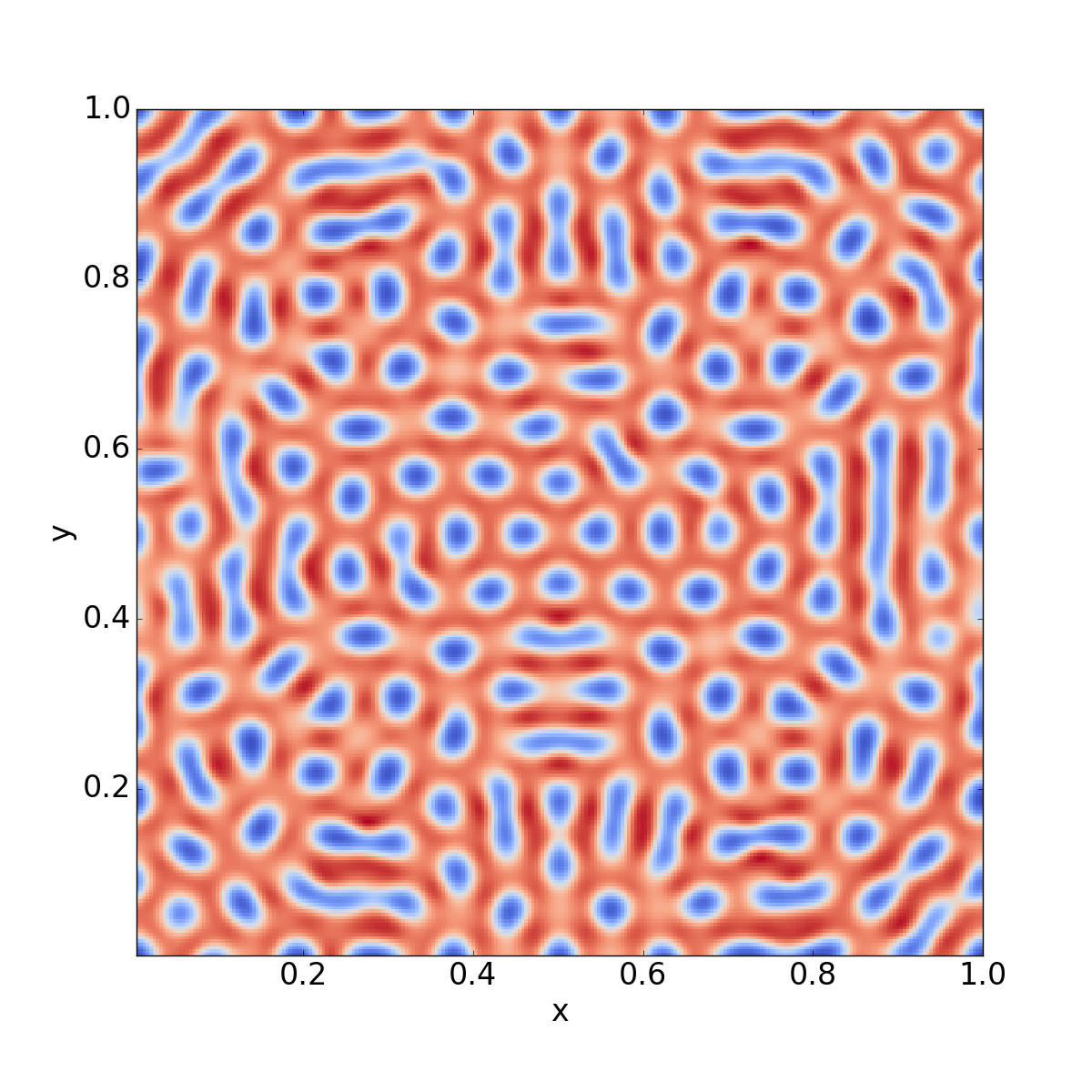}
	}
\subfigure[Learned evolution of $v$]{\label{fig: gs2_v_learnedevolution}	
	\includegraphics[width = 2 in]{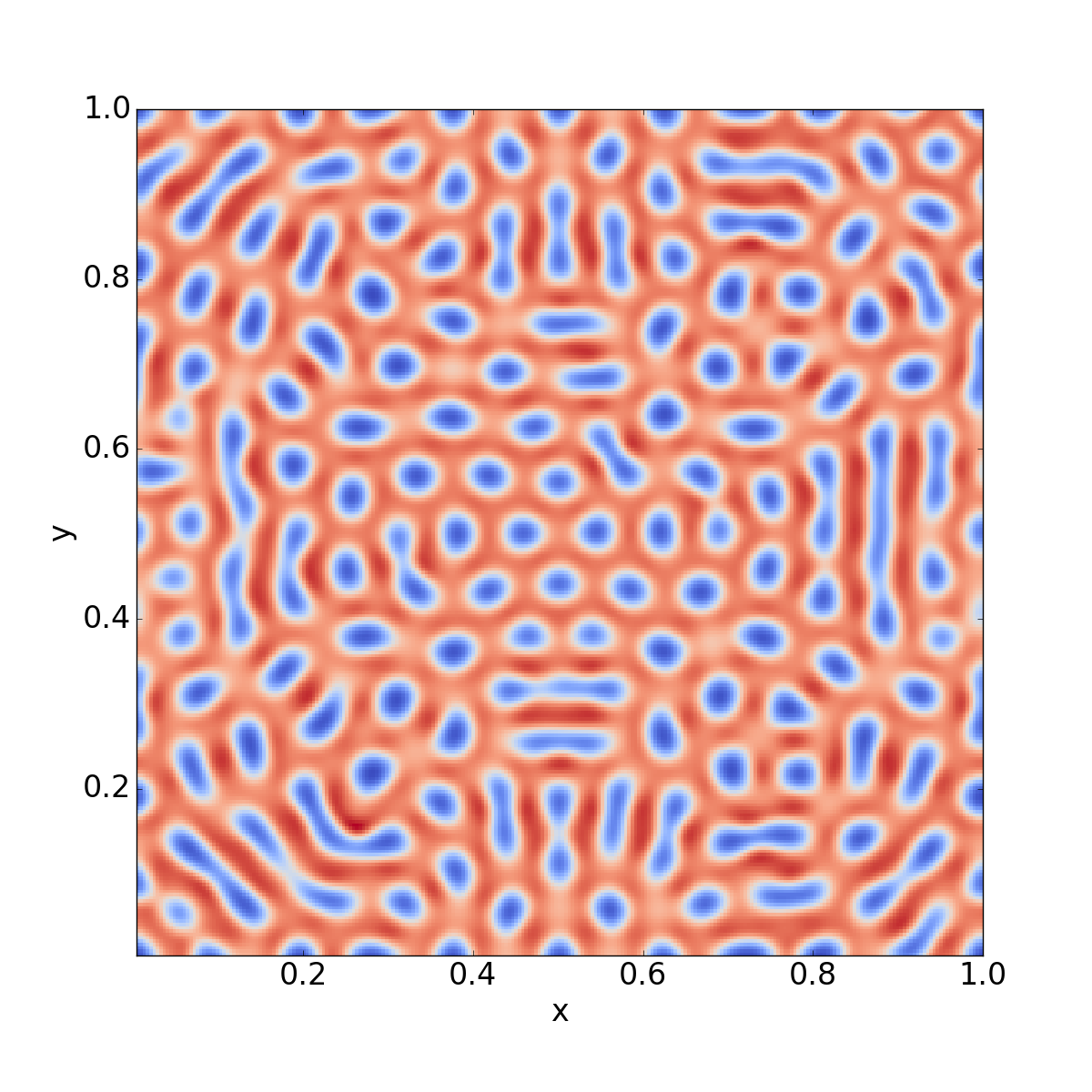}
	}
\subfigure[Difference in $v$]{\label{fig: gs2_v_difference}		
	\includegraphics[width = 2 in]{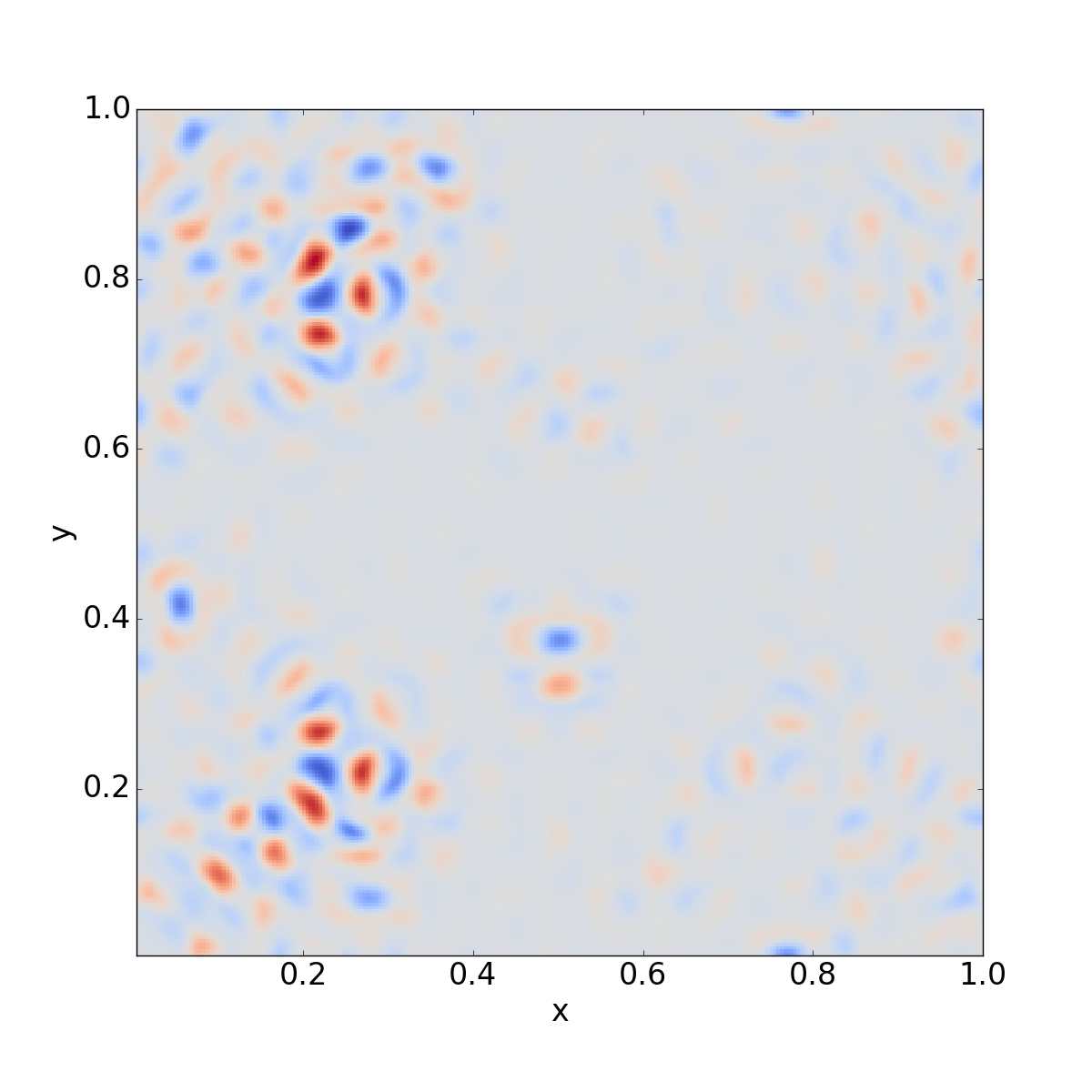}
	}
 \caption{\textbf{The Gray-Scott Equation, Example 2} 
 (a)(d) The true evolution at $T=2500$ using Equation \eqref{eq: gs2 exact}. (b)(e) The learned evolution at $T=2500$ using Equation \eqref{eq: gs2 learned}. (c)(f) The difference between the true evolution and the learned evolution. The overall qualitative structures are similar.}
\label{fig:grayscotts2}
\end{figure}

The last example uses the parameters $f = 0.018$ and $k = 0.051$, which leads to ``U'' shaped patterns. The visual results are given in Figure~\ref{fig:grayscotts3}. The learned equations are:
\begin{equation}
\begin{split}
u_t &= 0.30000\Delta u - 1.00000uv^2 - 1.01800u + 0.01801, \\
v_t &= 0.15000\Delta v + 1.00001uv^2 - 0.56902 v,
\end{split} \label{eq: gs3 learned}
\end{equation}
compared to the exact equations:
\begin{equation}
\begin{split}
u_t &= 0.3\Delta u - uv^2 - 1.018u + 0.018, \\
v_t &= 0.15\Delta v + uv^2 - 0.569v.
\end{split} \label{eq: gs3 exact}
\end{equation}
As before, we compare the learned and true evolutions, by simulating the two systems up to time-stamp $T=1000$, well beyond the learning interval. The location of the ``U'' shaped regions are correct; however, there is some error in their magnitude.

\begin{figure}[b!]
\centering
\subfigure[True evolution of $u$]{\label{fig: gs3_u_realevolution}
	\includegraphics[width = 2 in]{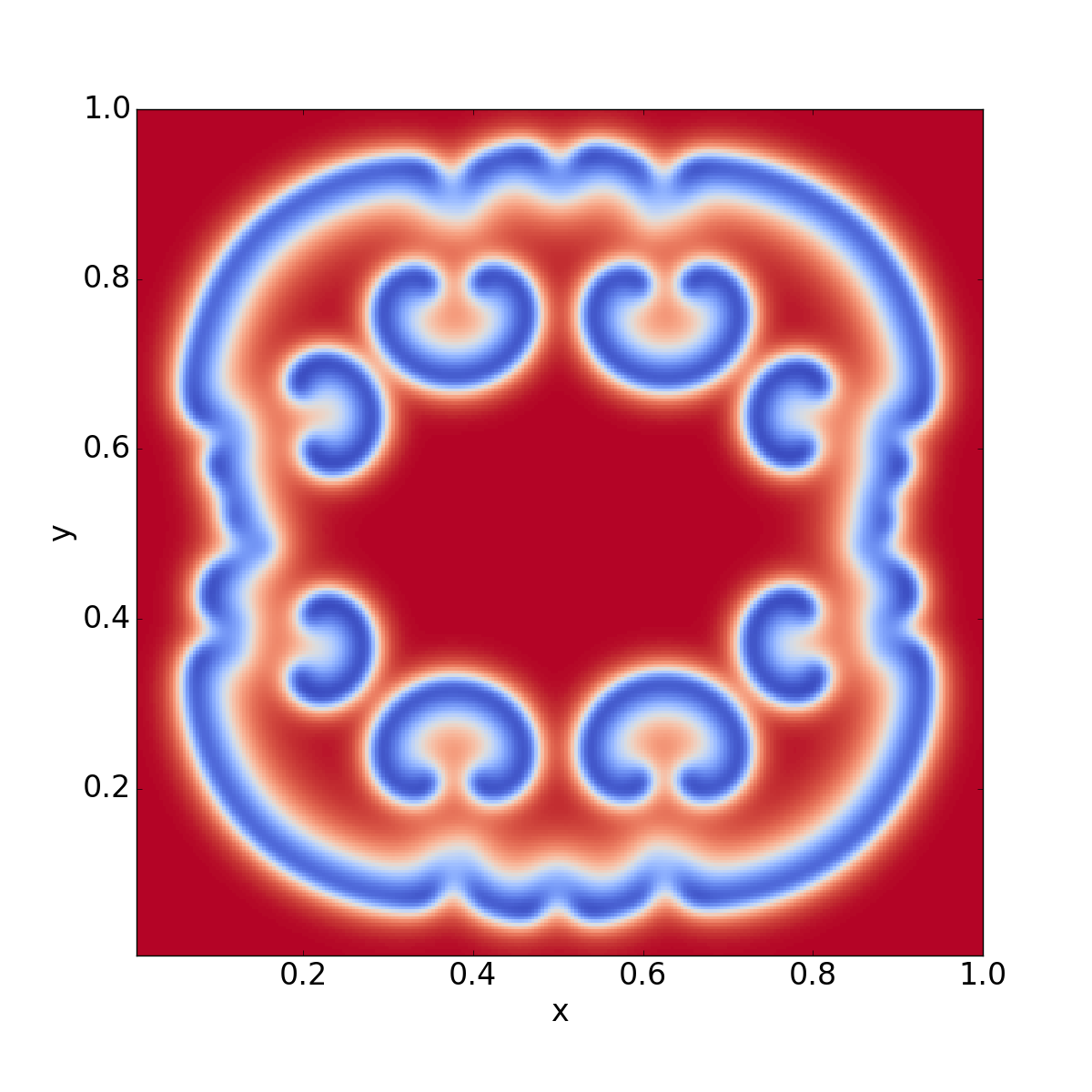}
	}
\subfigure[Learned evolution of $u$]{\label{fig: gs3_u_learnedevolution}	
	\includegraphics[width = 2 in]{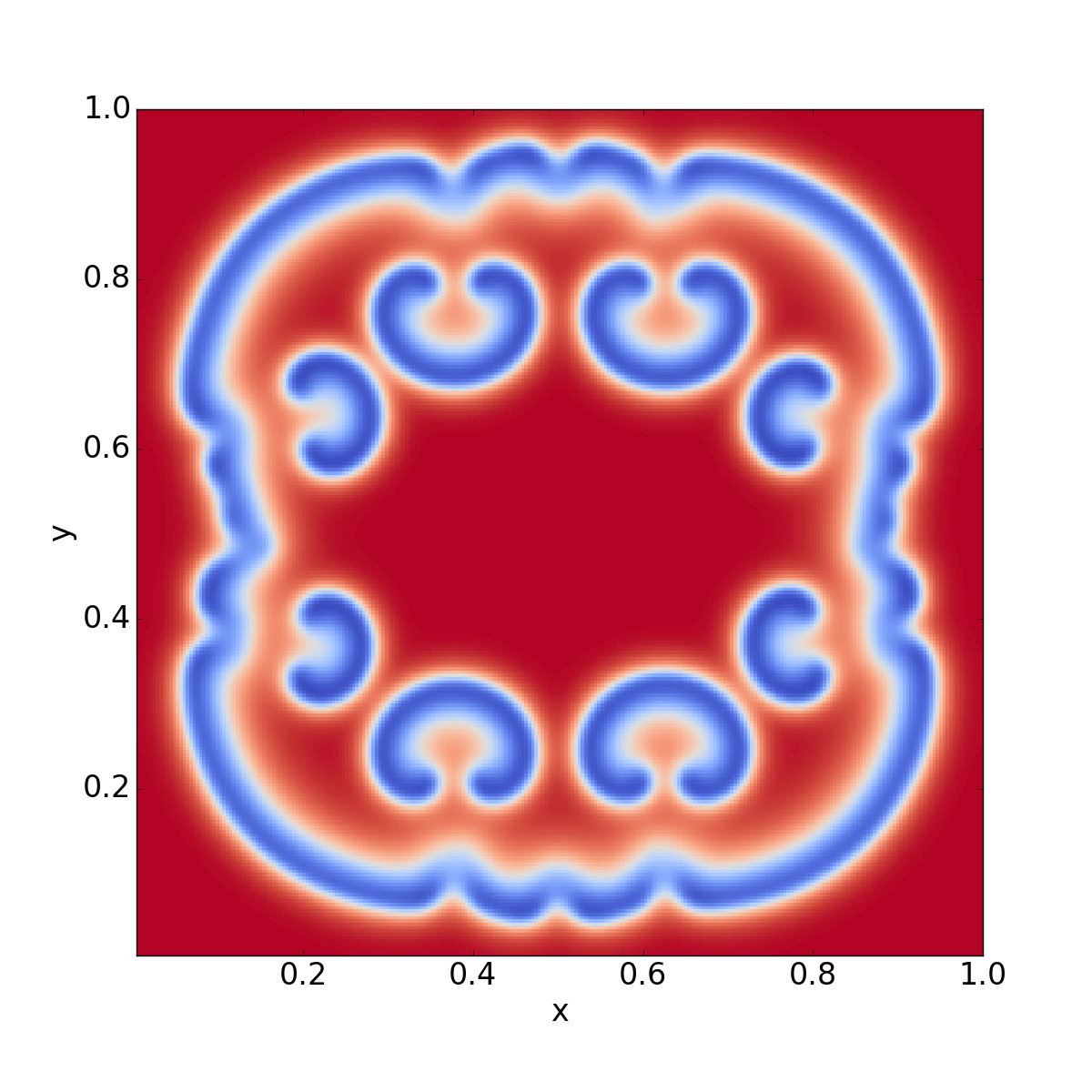}
	}
\subfigure[Difference in $u$]{\label{fig: gs3_u_difference}	
	\includegraphics[width = 2 in]{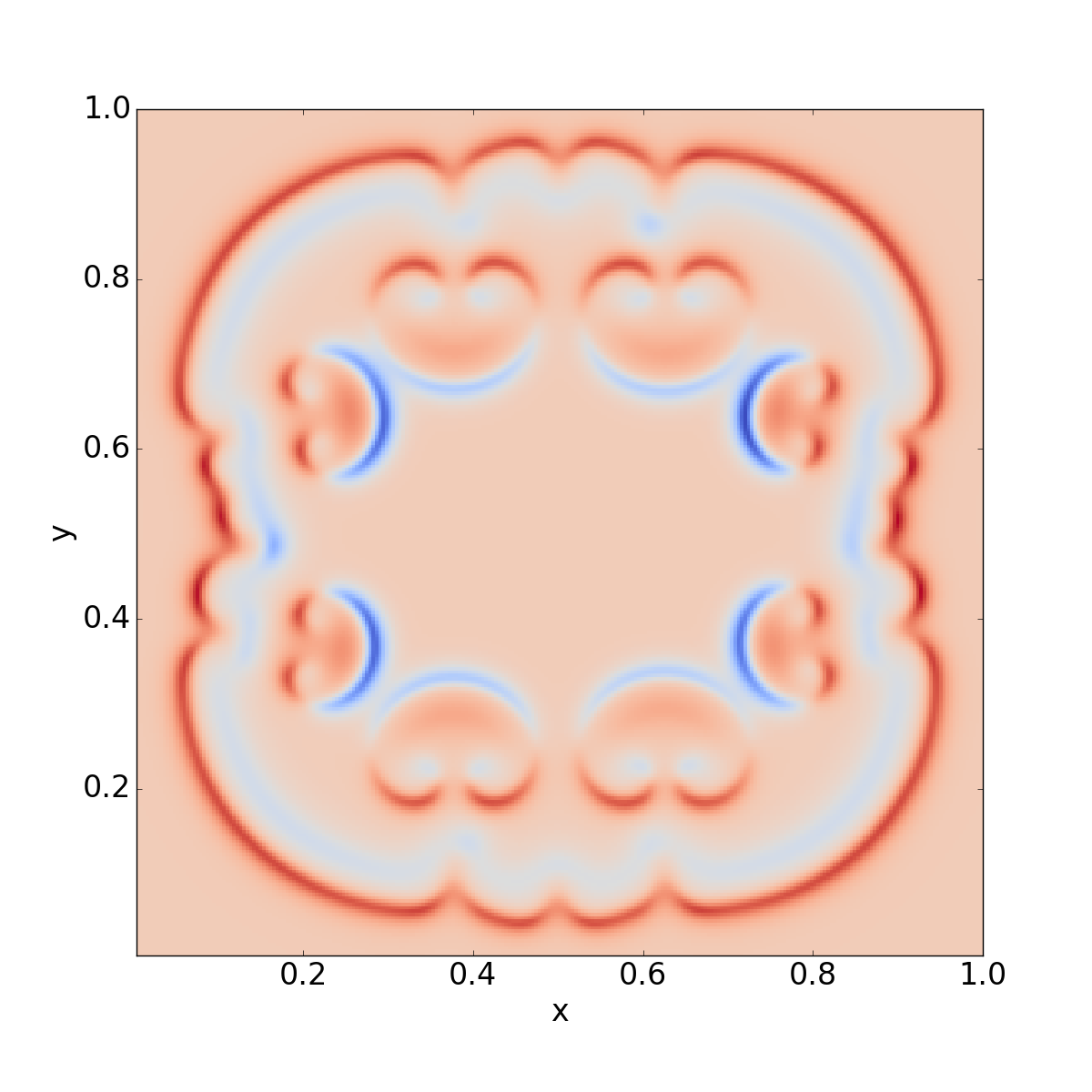}
	}
\subfigure[True evolution of $v$]{\label{fig: gs3_v_realevolution}	
	\includegraphics[width = 2 in]{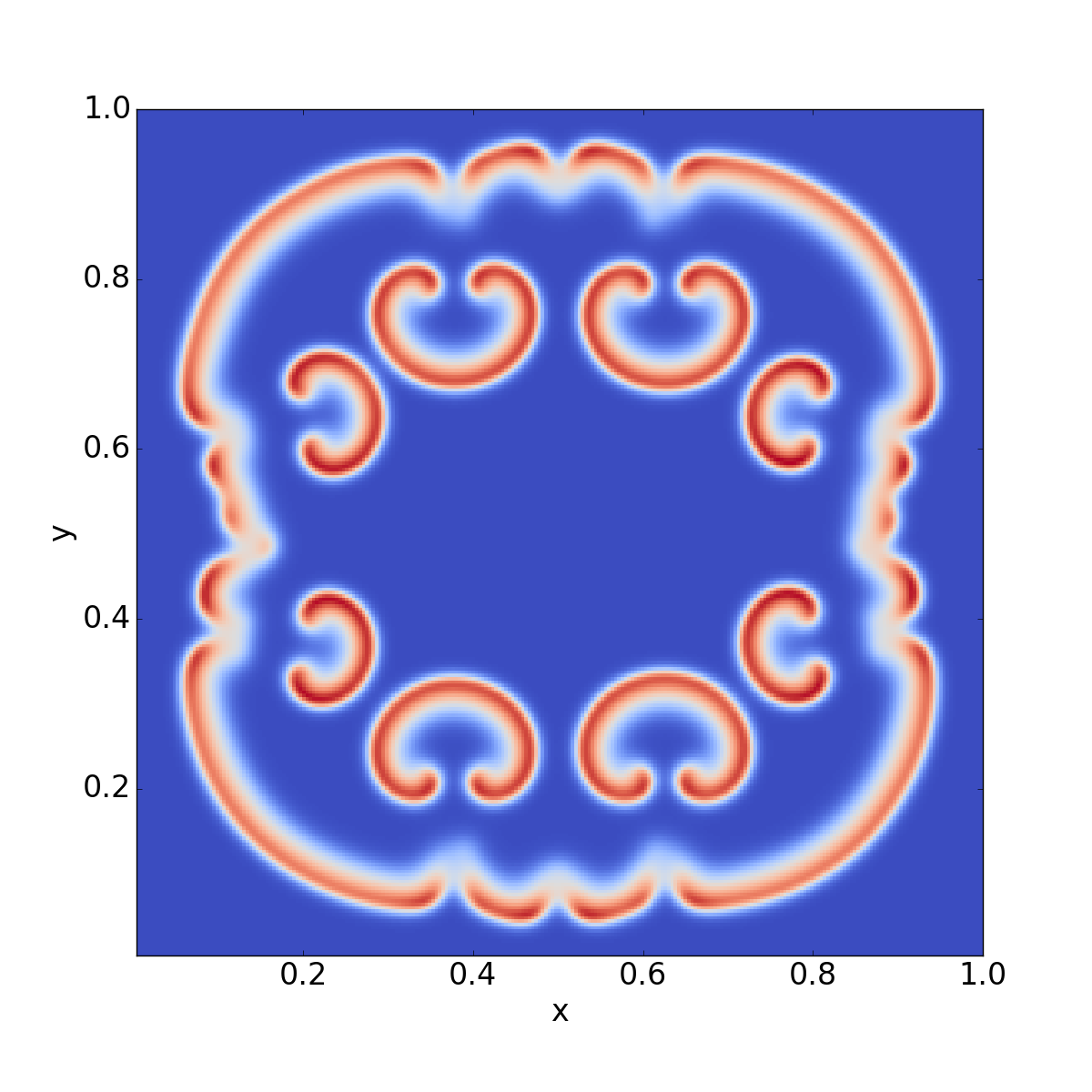}
	}
\subfigure[Learned evolution of $v$]{\label{fig: gs3_v_learnedevolution}	
	\includegraphics[width = 2 in]{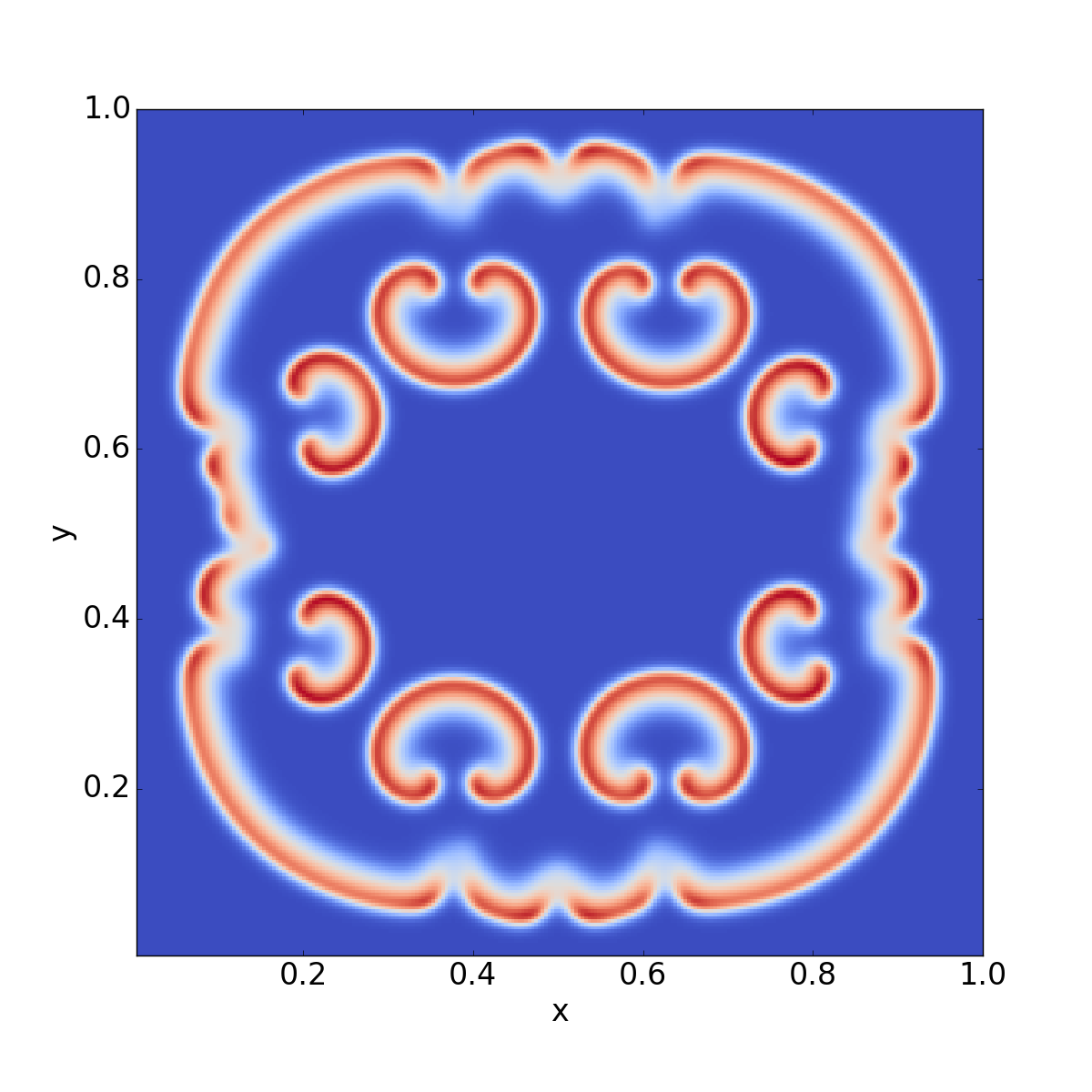}
	}
\subfigure[Difference in $v$]{\label{fig: gs3_v_difference}		
	\includegraphics[width = 2 in]{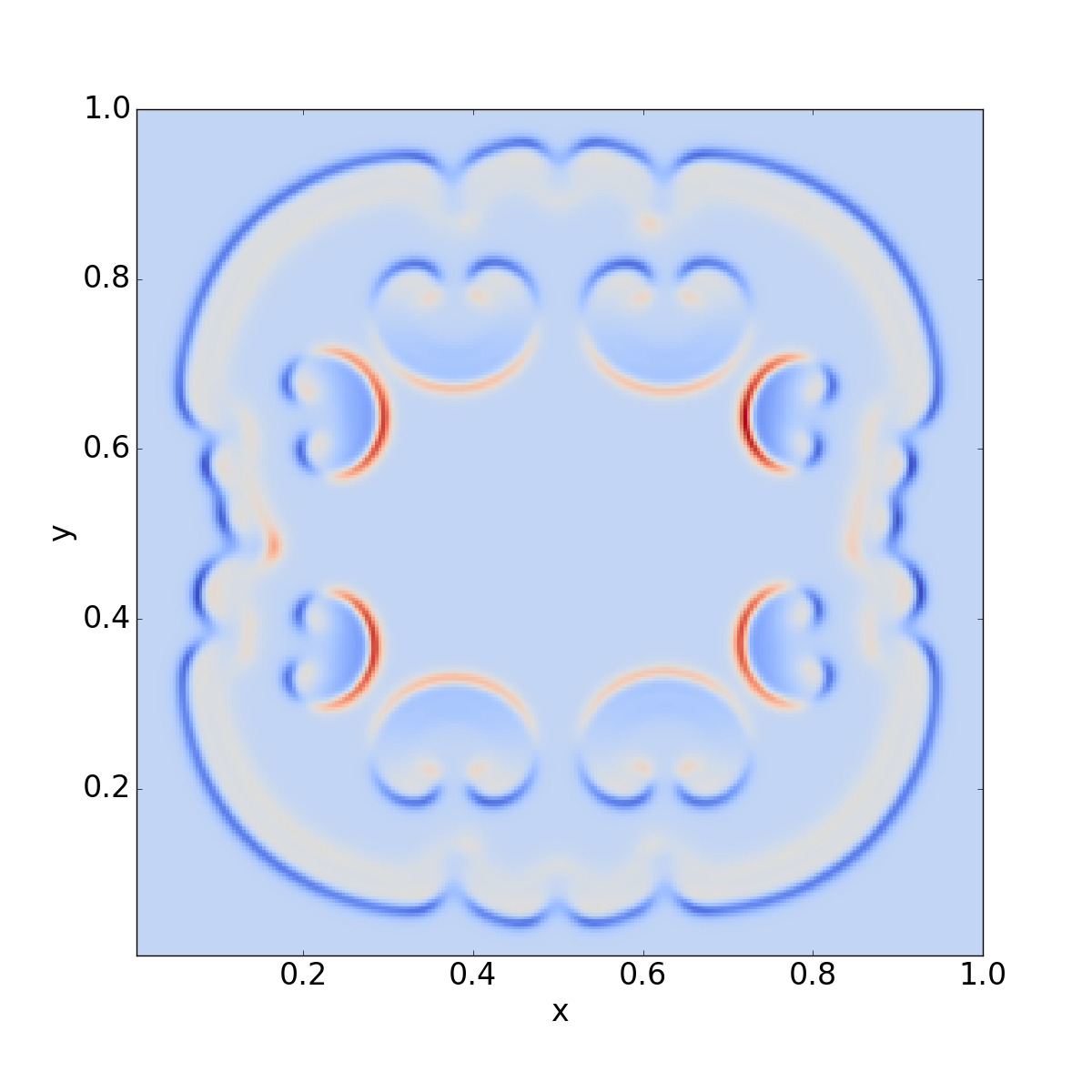}
	}
 \caption{\textbf{The Gray-Scott Equation, Example 3}: (a)(d) The true evolution at $T=1000$ using Equation \eqref{eq: gs3 exact}. (b)(e) The learned evolution at $T=1000$ using Equation \eqref{eq: gs3 learned}. (c)(f) The difference between the true evolution and the learned evolution. The location of the regions are nearly identical. The errors are due to a difference in magnitude.}
\label{fig:grayscotts3}
\end{figure}

\subsection{Discussion} \label{sec:resultsdiscuss}

In all of the examples found in Sections~\ref{sec: lorenz}-\ref{sec: grayscott}, the linear systems are under-determined.  Nevertheless, the model selected and parameters learned via Problem \eqref{eq:lbp} yield relatively accurate results.  The parameters used in the computational experiments in Sections~\ref{sec: lorenz}-\ref{sec: grayscott} are summarized in Tables~\ref{tab: parameter1} and \ref{tab: parameter2}, and the corresponding errors are displayed in Tables \ref{tab: error1} and \ref{tab: error2}.

In Tables \ref{tab: error1} and \ref{tab: error in equation}, we measure the relative error in the learned model by comparing the coefficients:
\begin{align*}
E_{c,\ {\rm LBP}} = \dfrac{\|c_{\rm exact} - c'\|_{\ell^2}}{\|c_{\rm exact}\|_{\ell^2}},
\end{align*}
where $c_{\rm exact}$ is the exact coefficient vector corresponding to the underlying system and $c'$ is the solution of Problem \eqref{eq:lbp}. The relative errors in the coefficients are within the theoretical bounds.  Thus from limited measurements, we are able to extract the governing equations with high-accuracy.

In Table~\ref{tab: error in evolution}, we display the relative error between the learned solution and the exact solution:
\begin{align*}
E_u = \dfrac{||u_{\rm exact}(T) - u(T)||_{\ell^2}}{||u_{\rm exact}(T)||_{\ell^2}},
\end{align*}
where $u_{\rm exact}(T)$ is the true evolution at the final time $T$, and $u(T)$ is the evolution at the final time $T$ with governing equation determined by $c$. The final time $T$ is outside of the interval used to learn the coefficients. In both Burgers' and Gray-Scott's equation, the relative error is within expectation. Note that small errors in the coefficients accumulate rapidly in these evolution equations, since the locations of sharp transitions in the solution will are sensitive to the coefficients.

\begin{table}[b!]
\caption{Parameters used in the computational qualitative experiments in Section~\ref{sec: lorenz}.}
\label{tab: parameter1}
\centering
\subtable[Parameters in matrix constructions.]{
\begin{tabular}{|c|c|c|c|} \hline
\multirow{2}{*}{} & \multicolumn{3}{c|}{The Lorenz 96 Equation} \\ \cline{2-4}
& Example 1 & Example 2 & Example 3 \\ \hline
Block size & $25$ & $25$ & $45$ \\ \hline
Number of bursts & 2 & 4 & 4 \\ \hline
Localization of the dictionary & $10$ & $10$ & 10 \\ \hline
Basis & 3rd order Legendre & 3rd order Legendre & 3rd order Legendre \\ \hline 
Size of the dictionary $n\times N$ & $50\times2040$ & $100\times2024$ & $204\times2024$ \\ \hline 
\end{tabular}} \\
\subtable[Parameters in Problem \eqref{eq:lbp}.]{
\begin{tabular}{|c|c|c|c|c|} \hline
\multicolumn{2}{|c|}{\multirow{2}{*}{}} & \multicolumn{3}{c|}{The Lorenz 96 Equation} \\ \cline{3-5}
\multicolumn{2}{|c|}{} & Example 1 & Example 2 & Example 3 \\ \hline
\multirow{3}{*}{$\sigma$} & var=0.2\% & 0.3515 & 0.53575 & 0.4075 \\ \cline{2-5}
& var=0.1\% & 0.3607 & 0.5074 & 0.7143 \\ \cline{2-5}
& var=0.05\% & 0.3380 & 0.5002 & 0.6888 \\ \hline
\end{tabular}}
\end{table}

\begin{table}[b!]
\caption{Parameters used in the computational qualitative experiments in Sections~\ref{sec: burgers}-\ref{sec: grayscott}.}
\label{tab: parameter2}
\centering
\subtable[Parameters in matrix constructions.]{\label{tab: parameters matrix}
\begin{tabular}{|c|c|c|} \hline
 & The Burgers' Equation & The Gray-Scott Equation \\ \hline
Block size & $7\times7$ & $7\times7$ \\ \hline
Number of bursts & 4 & 3 \\ \hline
Localization of the dictionary & $5\times5$ & $3\times3$ \\ \hline
Basis & 2nd order Legendre & 3rd order Legendre \\ \hline 
Size of the dictionary $n\times N$ & $196\times351$ & $147\times1330$ \\ \hline 
\end{tabular}} \\
\subtable[Parameters in Problem \eqref{eq:lbp}. For the Gray-Scott Examples, the top value is the $u$-component and the bottom value is the $v$-component.]{\label{tab: parameters LBP}
\begin{tabular}{|c|c|c|c|c|} \hline
& \multirow{2}{*}{The Burgers' Equation} & \multicolumn{3}{c|}{The Gray-Scott Equation} \\ \cline{3-5}
& & Example 1 & Example 2 & Example 3 \\ \hline
\multirow{2}{*}{$\sigma$} & \multirow{2}{*}{26.3609} & $5.5707\times 10^{-5}$& $6.3055\times 10^{-5}$ & $6.3321\times 10^{-5}$  \\
& & $5.2655\times 10^{-5}$ & $6.0194\times 10^{-5}$  & $6.0579\times 10^{-5}$ \\ \hline
\end{tabular}}
\end{table}

\begin{table}[b!]
\caption{Errors associated with the computational experiments in Section \ref{sec: lorenz}.}
\label{tab: error1}
\centering
\begin{tabular}{|c|c|c|c|c|} \hline
\multicolumn{2}{|c|}{\multirow{2}{*}{}} & \multicolumn{3}{c|}{The Lorenz 96 Equation} \\ \cline{3-5}
\multicolumn{2}{|c|}{} & Example 1 & Example 2 & Example 3 \\ \hline
\multicolumn{2}{|c|}{Size of the dictionary $n\times N$} & $50\times2040$ & $100\times2024$ & $204\times2024$ \\ \hline 
\multicolumn{2}{|c|}{Sampling rate $n/N\times100\%$} & 2.47\% & 4.94\% & 10.08\% \\ \hline
\multirow{3}{*}{$E_{c,\ {\rm LBP}}$} & var=0.2\% & 0.0276 & 0.0201 & 0.0185 \\ \cline{2-5}
& var=0.1\% & 0.0265 & 0.0197 & 0.0175 \\ \cline{2-5}
& var=0.05\% & 0.0254 & 0.0170 & 0.0165 \\ \hline
\end{tabular}
\end{table}

\begin{table}[b!]
\caption{Errors associated with the computational experiments in Sections \ref{sec: burgers}-\ref{sec: grayscott}. For the Gray-Scott examples, the top value is the $u$-component and the bottom value is the $v$-component.}
\label{tab: error2}
\centering
\subtable[Relative error, $E_{c,\ {\rm LBP}}$]{\label{tab: error in equation}
\begin{tabular}{|c|c|c|c|} \hline
\multirow{2}{*}{ Burgers' Equation} & \multicolumn{3}{c|}{ Gray-Scott Equation} \\ \cline{2-4}
& Example 1 & Example 2 & Example 3 \\ \hline
\multirow{2}{*}{0.0102} & $1.3775\times 10^{-5}$& $1.6550\times 10^{-5}$  & $1.6382\times 10^{-5}$  \\
& $1.6284\times 10^{-5}$  & $1.5525\times 10^{-5}$  & $1.5362\times 10^{-5}$  \\ \hline
\end{tabular}} \\
\subtable[Relative error, $E_u$]{\label{tab: error in evolution}
\begin{tabular}{|c|c|c|c|} \hline
\multirow{2}{*}{ Burgers' Equation} & \multicolumn{3}{c|}{ Gray-Scott Equation} \\ \cline{2-4}
& Example 1 & Example 2 & Example 3 \\ \hline
\multirow{2}{*}{0.0038} & $0.0353$ & $0.0856$& $0.0132$ \\
& $0.1416$  & $0.1221$  & $0.0453$  \\ \hline
\end{tabular}}
\end{table}

In Table \ref{tab: error1}, we display the relative errors $E_{c,\ {\rm LBP}}$ for the Lorenz 96 Equation with different noise levels and dictionary sizes. In all the examples, a large noise level and a small sampling rate lead to a higher relative error. As the sampling rate increases and the noise level decreases as well as the relative error. However, it is worth noting that the support sets are correctly identified.

Based on Theorem~\ref{thm:main}, for sufficient large block-sizes, it is possible to learn the correct coefficients with only one burst and one time-step. In Table \ref{tab: one burst error}, we display the relative errors $E_{c,\ {\rm LBP}}$ for the Burgers' Equation and the Gray-Scott Equation using one burst and varying the block-sizes. The block-sizes are chosen so that the linear systems remain under-determined (see second columns in Table \ref{tab: one burst error}). In both examples, starting with a small block size leads to a high relative error, but as the block-size increases the relative error decreases.

\begin{table}[b!]
\caption{Relative error $E_{c,\ {\rm LBP}}$ with one burst and varying block-sizes. }
\label{tab: one burst error}
\centering
\subtable[Burgers' Equation]{
\begin{tabular}{|c|c|c|} \hline
Block size & Size of the dictionary & $E_{c,\ {\rm LBP}}$ \\ \hline
$7\times7$ & $49\times351$ & 0.3285 \\ \hline
$9\times9$ & $81\times351$ & 0.0212 \\ \hline
$11\times11$ & $121\times351$ & 0.0187 \\ \hline
$15\times15$ & $225\times351$ & 0.0100 \\ \hline
\end{tabular}} \\
\subtable[The Gray-Scott Equation.]{
\begin{tabular}{|c|c|c|c|} \hline
\multirow{2}{*}{Block size} & \multirow{2}{*}{Size of the dictionary} & \multicolumn{2}{c|}{$E_{c,\ {\rm LBP}}$} \\ \cline{3-4}
& & $u$-component & $v$-component \\ \hline
$15\times15$ & $225\times1330$ & $7.1278\times10^{-1}$ & $2.6170\times10^{-1}$ \\ \hline
$21\times21$ & $441\times1330$ & $2.0613\times10^{-4}$ & $2.4336\times10^{-4}$ \\ \hline
$27\times27$ & $729\times1330$ & $1.4427\times10^{-5}$ & $2.4541\times10^{-5}$ \\ \hline
\end{tabular}}
\end{table}

For comparison, we calculate the least-square solution:
$$c_{\rm ls}=\argmin{c} \, ||Ac-V||_2,$$
where $A$ is the dictionary in the monomial basis and $V$ is the velocity matrix. The relative error associated with the least-squares solution is denoted by $E_{c,\, {\rm LS}}$. In Table~\ref{tab: least squares error}, we display $E_{c,\ {\rm LS}}$ for the Burgers' Equation and the Gray-Scott Equation corresponding to the same examples found in Table~\ref{tab: one burst error}. The least-squares solution produces large errors since the resulting coefficient vector is dense (overfitted), leading to meaningless results.

\begin{table}[h!]
\caption{Relative error $E_{c,\ {\rm LS}}$ with one burst and varying block sizes.  }
\label{tab: least squares error}
\centering
\subtable[Burgers' Equation]{
\begin{tabular}{|c|c|c|} \hline
Block size & Size of the dictionary & $E_{c,\ {\rm LS}}$ \\ \hline
$7\times7$ & $49\times351$ & 1.3633 \\ \hline
$9\times9$ & $81\times351$ & 1.7804 \\ \hline
$11\times11$ & $121\times351$ & 2.8562 \\ \hline
\end{tabular}} \\
\subtable[The Gray-Scott Equation.]{
\begin{tabular}{|c|c|c|c|} \hline
\multirow{2}{*}{Block size} & \multirow{2}{*}{Size of the dictionary} & \multicolumn{2}{c|}{$E_{c,\ {\rm LS}}$} \\ \cline{3-4}
& & $u$-component & $v$-component \\ \hline
$15\times15$ & $225\times1330$ & $5.8483$ & $7.7621$ \\ \hline
$21\times21$ & $441\times1330$ & $6.7622$ & $7.5338$ \\ \hline
$27\times27$ & $729\times1330$ & $5.9542$ & $2.9482$ \\ \hline
\end{tabular}}
\end{table}

\section{Conclusion and Future Directions}\label{sec:conclusion}

In this work, we presented an approach for extracting the governing equation from under-sampled measurements when the system has structured dynamics. We showed that permuting i.i.d randomly sampled bursts and restructuring the associated dictionary yields an i.i.d. random sampling of a bounded orthogonal system, thus using a Bernstein-like inequality with a coherence condition, we show that the recovery is exact and stable. In addition, when the noise is sufficiently low, then the support of the coefficients can be recovered exactly, \textit{i.e.} the terms in the governing equation can be exactly identified. The computational examples also highlight ways to extend the learning approach to larger systems and multi-component systems (where the cyclic structural condition must be carefully applied).

The structural assumption is valid for many dynamic processes, for example, when the data comes from a spatially-invariant dynamic system. In the algorithm and results, we made the assumption that one can sample a sub-block of the data to reasonable accuracy, in order to calculate derivatives accurately and so that the dictionary remains a bounded orthogonal system with respect to the given sampling measure. This is a weak assumption since the size of the sub-block is small. Thus, given noisy data, the sub-block could be the result of a pre-processing routine that has de-noised and down-sampled the data (with the removal of outliers). The details on the pre-processing requirements is left for future investigations. 

In future work, we would like to relax the requirements from cyclic structures to more general structures, while maintaining the very low-sampling requirement. 
An open question in the burst framework is how to quantify the trade-off between the burst size and the number of trajectories. This was not discussed in the numerical results but is of intrinsic importance to learning governing equations. In particular, if one has freedom to sample the initial condition but is limited by the number of samples along each trajectory or if one has freedom to sample along trajectories but is limited by the number of trajectories then it would be helpful to have theoretical estimates on the recovery rates. Lastly, the parameter $\sigma>0$ used in the constraint must be estimated from the data. It may be possible to learn $\sigma$ for a given dataset.

\section*{Acknowledgments}
H.S. and L.Z. acknowledge the support of AFOSR, FA9550-17-1-0125 and the support of NSF CAREER grant $\#1752116$. G.T. acknowledges the support of NSERC grant $\#06135$. R.W. acknowledges the support of NSF CAREER grant $\#1255631$.

\section*{Appendix}

Since the data is not independent, we cannot directly apply random matrix theory utilizing the standard Bernstein inequality. There are large deviations theorems for sums of ``locally dependent" centered random variables.  We recall and rephrase Theorem 2.5 from \cite{janson2004large} for our purposes.

\begin{theorem}[\textbf{Theorem 2.5 from }\cite{janson2004large}]\label{thrm:dependent}
Suppose that $Y = \sum\limits_{i=1}^n Y_{i}$ and that all $Y_{i}$ have the same distribution with $Y_{i} - \mathbb{E} Y_{i}  \leq M$ for some $M > 0$.  Suppose further that  $S = n \, {Var}(Y_{i})$, and $\triangle =  \triangle_{0}+ 1 $, where $\triangle_{0}$ is the maximal degree of the dependency graph $\Gamma$ for $\{Y_i\}$. Then
\begin{align}
P( Y - \mathbb{E}Y \geq \tau) \leq \exp\left( - \frac{\tau^2 ( 1- \triangle/(4 n))}{2 \triangle (S + M \tau / 3) } \right).
\end{align}
\end{theorem}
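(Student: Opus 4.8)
The plan is to prove the tail bound by a Chernoff argument in which the dependency-graph structure is used to reduce the exponential moment of the dependent sum to a product of exponential moments of \emph{independent} sums. First I would center the summands, setting $Z_i := Y_i - \mathbb{E}Y_i$, so that $\mathbb{E}Z_i = 0$, $Z_i \le M$, and $\Var(Z_i) = S/n$; the target becomes controlling $P\left(\sum_{i=1}^n Z_i \ge \tau\right) \le e^{-t\tau}\,\mathbb{E}\!\left[\exp\!\left(t\sum_i Z_i\right)\right]$ for a well-chosen $t>0$. Note that only the one-sided hypothesis $Z_i \le M$ is available, consistent with the fact that the theorem bounds only the upper tail.

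The key step exploits the dependency graph $\Gamma$. Since its maximal degree is $\triangle_0 = \triangle - 1$, a greedy coloring partitions $[n]$ into $\triangle$ independent sets $V_1,\dots,V_\triangle$, and within each $V_c$ the variables $\{Z_i\}_{i\in V_c}$ are mutually independent. Applying the generalized H\"older inequality with $\triangle$ factors, each paired with the conjugate exponent $\triangle$, factors the joint exponential moment,
\begin{align*}
\mathbb{E}\!\left[\exp\!\left(t\sum_{i} Z_i\right)\right] \le \prod_{c=1}^{\triangle}\left(\mathbb{E}\!\left[\exp\!\left(t\triangle \sum_{i\in V_c} Z_i\right)\right]\right)^{1/\triangle},
\end{align*}
which reduces the problem to independent sums inside each color class. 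For each class I would invoke the one-sided Bennett bound: since $\phi(x):=(e^x-1-x)/x^2$ is increasing and $Z_i \le M$, one has $\mathbb{E}[e^{\lambda Z_i}] \le \exp\!\big(\Var(Z_i)\,(e^{\lambda M}-1-\lambda M)/M^2\big)$, and then the elementary inequality $e^{x}-1-x \le (x^2/2)/(1-x/3)$ on $[0,3)$ converts this to Bernstein form.

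Combining these with $\lambda = t\triangle$, taking the product over $i\in V_c$, then the geometric mean over the $\triangle$ classes, and using $\sum_c |V_c| = n$ together with $\Var(Z_i)=S/n$, the class sizes cancel to leave $\log \mathbb{E}[e^{t\sum_i Z_i}] \le \tfrac{\triangle S t^2/2}{1 - \triangle M t/3}$ for $0 < t < 3/(\triangle M)$. The stated tail then follows from the standard Bernstein optimization in $t$ (the choice $t = \tau/(\triangle S + \triangle M \tau/3)$), which yields the denominator $2\triangle(S + M\tau/3)$. The main obstacle is quantitative rather than structural: the crude coloring-plus-H\"older step is slightly lossy, and recovering Janson's exact refinement factor $1 - \triangle/(4n)$ in the numerator requires his sharper argument, effectively an averaged (fractional) cover of $\Gamma$ in place of a single proper coloring, which is precisely where the small $\triangle/(4n)$ correction enters and where one must track constants with care. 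Since the statement is a rephrasing of Theorem~2.5 of \cite{janson2004large}, I would ultimately cite that result for the sharp constants after verifying that the present hypotheses ($Y_i$ identically distributed, $Y_i - \mathbb{E}Y_i \le M$, $S = n\,\Var(Y_i)$, and $\triangle = \triangle_0 + 1$) are exactly its specialization.
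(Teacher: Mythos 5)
Your proposal cannot be matched against an internal argument, because the paper does not prove this statement at all: Theorem~\ref{thrm:dependent} is imported (in rephrased form) from \cite{janson2004large}, and the appendix merely recalls it for use in the coherence estimate of Theorem~\ref{thrm:31}. What you wrote is therefore a self-contained alternative to the citation, and it is essentially sound: a greedy coloring splits the index set into at most $\triangle = \triangle_0 + 1$ classes which, under the standard (strong) definition of a dependency graph --- non-adjacency between disjoint vertex sets implies mutual independence of the corresponding families, which is the definition in force here, where edges join variables sharing an index --- consist of mutually independent variables; generalized H\"older with equal exponents factors the moment generating function; and the one-sided Bennett--Bernstein bound applies within each class since only $Z_i \le M$ and $\Var(Z_i) = S/n$ are needed. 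Carrying your computation to the end gives
\begin{align*}
P\left( Y - \mathbb{E}Y \geq \tau\right) \leq \exp\left( - \frac{\tau^2}{2 \triangle \left(S + M \tau / 3\right) } \right),
\end{align*}
and here your final hedge has the logic backwards: the factor $1 - \triangle/(4n)$ in the stated theorem sits in the numerator of a \emph{negative} exponent, so it weakens the bound rather than sharpens it. Since $1 - \triangle/(4n) \le 1$, your clean coloring bound implies the stated inequality a fortiori, no fractional-cover refinement is needed, and the concluding fallback to citing Janson for ``sharp constants'' is superfluous (his correction term reflects a different bookkeeping, not a strengthening your argument misses). As for what each route buys: the paper's citation is the shortest path and inherits Janson's exact formulation; your argument is elementary and self-contained (Chernoff, greedy coloring, H\"older, Bennett), proves a slightly stronger bound, and makes visible exactly where the dependency structure enters --- at the cost of having to verify the independence-within-color-classes step, which does hold for the dependency graph used in this paper.
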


Note that when $\triangle = 1$, this reduces essentially to the standard Bernstein inequality for i.i.d. random variables. For our problem, the dependency graph $\Gamma$ has an edge between two vertices $Y_i$ and $Y_j$ if they have an index in common.

\bibliographystyle{plain}
\bibliography{STWZ_references2}

\end{document}